\def\namedlabel#1#2{
   \begingroup #2%
   \def\@currentlabel{#2}%
   \phantomsection\label{#1}\endgroup
}
\author{Linan Chen\and Florence Clerc\and Prakash Panangaden}
\title{Bisimulation for Feller-Dynkin Processes}
\newcommand{\singlespacing}{\let\CS=\@currsize\renewcommand{\baselinestretch}{1}\small\CS}
\newcommand{\doublespacing}{\let\CS=\@currsize\renewcommand{\baselinestretch}{1.75}\small\CS}
\newcommand{\normalspacing}{\let\CS=\@currsize\renewcommand{\baselinestretch}{\BLS}\small\CS}
\theoremstyle{definition}
\newtheorem{thm}{Theorem}[section]
\newtheorem{cor}[thm]{Corollary}
\newtheorem{prop}[thm]{Proposition}
\newtheorem{remark}[thm]{Remark}
\newtheorem{theorem}[thm]{Theorem}
\newtheorem{lemma}[thm]{Lemma}
\newtheorem{proposition}[thm]{Proposition}
\newtheorem{defin}[thm]{Definition}
\newtheorem{definition}[thm]{Definition}
 \def\pushright#1{{
    \parfillskip=0pt            
    \widowpenalty=10000         
    \displaywidowpenalty=10000  
    \finalhyphendemerits=0      
    \leavevmode                 
    \unskip                     
    \nobreak                    
    \hfil                       
    \penalty50                  
    \hskip.2em                  
    \null                       
    \hfill                      
    {#1}                        
    \par}}                      
 \def\qed{\pushright{\rule{2mm}{3mm}}\penalty-700 \smallskip}
\renewenvironment{proof}{\begin{trivlist} \item[{\bf ~Proof}.]}%
 {\qed\end{trivlist}}
\newdimen\w@dth
\def\setw@dth#1#2{\setbox\z@\hbox{\scriptsize $#1$}\w@dth=\wd\z@
\setbox\@ne\hbox{\scriptsize $#2$}\ifnum\w@dth<\wd\@ne \w@dth=\wd\@ne \fi
\advance\w@dth by 1.2em}
\def\t@^#1_#2{\allowbreak\def\n@one{#1}\def\n@two{#2}\mathrel
{\setw@dth{#1}{#2}
\mathop{\hbox to \w@dth{\rightarrowfill}}\limits
\ifx\n@one\empty\else ^{\box\z@}\fi
\ifx\n@two\empty\else _{\box\@ne}\fi}}
\def\t@@^#1{\@ifnextchar_ {\t@^{#1}}{\t@^{#1}_{}}}
\def\t@left^#1_#2{\def\n@one{#1}\def\n@two{#2}\mathrel{\setw@dth{#1}{#2}
\mathop{\hbox to \w@dth{\leftarrowfill}}\limits
\ifx\n@one\empty\else ^{\box\z@}\fi
\ifx\n@two\empty\else _{\box\@ne}\fi}}
\def\t@@left^#1{\@ifnextchar_ {\t@left^{#1}}{\t@left^{#1}_{}}}
\def\two@^#1_#2{\def\n@one{#1}\def\n@two{#2}\mathrel{\setw@dth{#1}{#2}
\mathop{\vcenter{\hbox to \w@dth{\rightarrowfill}\kern-1.7ex
                 \hbox to \w@dth{\rightarrowfill}}%
       }\limits
\ifx\n@one\empty\else ^{\box\z@}\fi
\ifx\n@two\empty\else _{\box\@ne}\fi}}
\def\tw@@^#1{\@ifnextchar_ {\two@^{#1}}{\two@^{#1}_{}}}
\def\tofr@^#1_#2{\def\n@one{#1}\def\n@two{#2}\mathrel{\setw@dth{#1}{#2}
\mathop{\vcenter{\hbox to \w@dth{\rightarrowfill}\kern-1.7ex
                 \hbox to \w@dth{\leftarrowfill}}%
       }\limits
\ifx\n@one\empty\else ^{\box\z@}\fi
\ifx\n@two\empty\else _{\box\@ne}\fi}}
\def\t@fr@^#1{\@ifnextchar_ {\tofr@^{#1}}{\tofr@^{#1}_{}}}
\newdimen\W@dth
\def\setW@dth#1#2{\setbox\z@\hbox{$#1$}\W@dth=\wd\z@
\setbox\@ne\hbox{$#2$}\ifnum\W@dth<\wd\@ne \W@dth=\wd\@ne \fi
\advance\W@dth by 1.2em}
\def\T@^#1_#2{\allowbreak\def\N@one{#1}\def\N@two{#2}\mathrel
{\setW@dth{#1}{#2}
\mathop{\hbox to \W@dth{\rightarrowfill}}\limits
\ifx\N@one\empty\else ^{\box\z@}\fi
\ifx\N@two\empty\else _{\box\@ne}\fi}}
\def\T@@^#1{\@ifnextchar_ {\T@^{#1}}{\T@^{#1}_{}}}
\def\T@left^#1_#2{\def\N@one{#1}\def\N@two{#2}\mathrel{\setW@dth{#1}{#2}
\mathop{\hbox to \W@dth{\leftarrowfill}}\limits
\ifx\N@one\empty\else ^{\box\z@}\fi
\ifx\N@two\empty\else _{\box\@ne}\fi}}
\def\T@@left^#1{\@ifnextchar_ {\T@left^{#1}}{\T@left^{#1}_{}}}
\def\Tofr@^#1_#2{\def\N@one{#1}\def\N@two{#2}\mathrel{\setW@dth{#1}{#2}
\mathop{\vcenter{\hbox to \W@dth{\rightarrowfill}\kern-1.7ex
                 \hbox to \W@dth{\leftarrowfill}}%
       }\limits
\ifx\N@one\empty\else ^{\box\z@}\fi
\ifx\N@two\empty\else _{\box\@ne}\fi}}
\def\T@fr@^#1{\@ifnextchar_ {\Tofr@^{#1}}{\Tofr@^{#1}_{}}}
\def\Two@^#1_#2{\def\N@one{#1}\def\N@two{#2}\mathrel{\setW@dth{#1}{#2}
\mathop{\vcenter{\hbox to \W@dth{\rightarrowfill}\kern-1.7ex
                 \hbox to \W@dth{\rightarrowfill}}%
       }\limits
\ifx\N@one\empty\else ^{\box\z@}\fi
\ifx\N@two\empty\else _{\box\@ne}\fi}}
\def\Tw@@^#1{\@ifnextchar_ {\Two@^{#1}}{\Two@^{#1}_{}}}
\def\to{\@ifnextchar^ {\t@@}{\t@@^{}}}
\def\from{\@ifnextchar^ {\t@@left}{\t@@left^{}}}
\def\two{\@ifnextchar^ {\tw@@}{\tw@@^{}}}
\def\tofro{\@ifnextchar^ {\t@fr@}{\t@fr@^{}}}
\def\To{\@ifnextchar^ {\T@@}{\T@@^{}}}
\def\From{\@ifnextchar^ {\T@@left}{\T@@left^{}}}
\def\Two{\@ifnextchar^ {\Tw@@}{\Tw@@^{}}}
\def\Tofro{\@ifnextchar^ {\T@fr@}{\T@fr@^{}}}
\newcommand{\cF}{\mathcal{F}}
\newcommand{\cG}{\mathcal{G}}
\newcommand{\cS}{\mathcal{S}}
\begin{document}
\maketitle

\begin{abstract}
  Bisimulation is a concept that captures behavioural equivalence.  It has
  been studied extensively on nonprobabilistic systems and on discrete-time
  Markov processes and on so-called continuous-time Markov chains.  In the
  latter time is continuous but the evolution still proceeds in jumps.  We
  propose two definitions of bisimulation on continuous-time stochastic
  processes where the evolution is a \emph{flow} through time.  We show that they
  are equivalent and we show that when restricted to discrete-time, our
  concept of bisimulation encompasses the standard discrete-time concept.
  The concept we introduce is not a straightforward generalization of
  discrete-time concepts.
\end{abstract}

\section{Introduction} 
Bisimulation~\cite{Milner80,Park81,Sangiorgi09} is a fundamental concept in
the theory of transition systems capturing a strong notion of behavioural
equivalence.  In particular, it is a notion stronger than that of trace
equivalence.  Bisimulation has been widely studied for discrete time
systems where transitions happen as steps, both on discrete~\cite{Larsen91}
and continuous state spaces~\cite{Blute97,Desharnais02,Panangaden09}.  In
all these types of systems a crucial ingredient of the definition of
bisimulation is the ability to talk about \emph{the next step}.  Thus, the
general format of the definition of bisimulation is that one has some
property that must hold ``now'' (in the states being compared) and then one
says that the relation is preserved in the next step.

Some attempts have been made to talk about
continuous-time~\cite{Desharnais03b}, but even in what are called
continuous-time Markov chains there is a discrete notion of time
\emph{step}; it is only that there is a real-valued duration associated
with each state that makes such systems continuous time.  They are often
called ``jump processes'' in the mathematical literature, see, for example,
\cite{Rogers00a,Whitt02}, a phrase that better captures the true nature of
such processes.


Outside of computer science, there is a vast range of systems that involve
true continuous-time evolution: deterministic systems governed by
differential equations and stochastic systems governed by ``noisy''
differential equations called stochastic differential equations.  These
have been extensively studied for over a century since the pioneering work
of Einstein~\cite{Einstein1905} on Brownian motion.  In the computer
science literature there have been studies of very special systems that
feature continuous time: timed automata~\cite{Alur94} and hybrid
systems~\cite{Alur95}.  In these systems the time evolution is assumed to
be piecewise constant (timed automata) or piecewise smooth (hybrid
automata) and bisimulation is defined without recourse to talking about the
next step.  However, a general formalism that covers processes like
diffusion is not available as far as we are aware.

In this work we aim at a general theory of bisimulation for stochastic
systems with true continuous-time evolution.  We focus on a class of
systems called Feller-Dynkin processes for which a good mathematical theory
exists.  These systems are the most general version of Markov processes
defined on continuous state spaces and with continuous time evolution.
Such systems encompass Brownian motion and its many variants.

The obvious extension of previous definitions of bisimulation on discrete
Markov processes or on jump processes fail to provide a meaningful notion
of behavioural equivalence as we will illustrate later on.  It is a mistake
to think that one can get a reasonably good understanding of such systems
by considering suitable ``limits'' of discrete-time systems.  Intuitively,
the notion of bisimulation is sensitive to small changes that are not
captured when taking the limit.  It is true that, for example, Brownian
motion can be seen as arising as a limit, in the sense of convergence in
distribution, of a discrete random walk as both the discrete time unit and
the step size go to zero.  However, entirely new phenomena occur with the
trajectories of the Brownian motion which are not understandable through
the limiting process at least not in any naive sense: the probability of
being at any single state $x$ at a given time $t$ is zero, but the
probability of hitting $x$ before a given time $s$ is strictly positive.



To avoid those issues, we work with the set of trajectories of the system.
A number of possible ways had to be explored and in the end the particular
version we present here turned out to have the desired properties: (a)
corresponds to our intuition in a number of examples and (b) correctly
specializes to the discrete-time case.

Section \ref{sec:FDprocesses} explains the mathematical background on
Feller-Dynkin processes and Brownian motion. In section
\ref{sec:bisimulation}, we show why a naive extension of previous
definition of bisimulation does not work and we propose a new definition of
bisimulation as an equivalence relation that we illustrate on a number of
examples. In section \ref{sec:FDcospan}, we give an equivalent definition
of bisimulation as a cospan of morphisms extending the previous notion of
span of ``zig-zag'' morphisms in the discrete-time case. In section
\ref{sec:DT}, we show that our definition of bisimulation is coherent with
the previous definition of bisimulation in discrete time. Much remains to
be done, of course, as we describe in the concluding section.

\section{Background on Feller-Dynkin processes}
\label{sec:FDprocesses}
We assume that basic concepts like topology, measure theory and basic
concepts of probability on continuous spaces are well known; see, for
example~\cite{Billingsley08,Dudley89,Panangaden09}.  

The basic arena for the action is a probability space.
\begin{definition}
A \emph{probability space} is a triple $(S,\cF,P)$ where $S$ is a space
(usually some kind of topological space), $\cF$ is a $\sigma$-algebra
(usually its Borel algebra) and $P$ is a probability measure on $\cF$.
\end{definition}

Given a measurable space $(X,\Sigma)$ a Markov kernel is a map
$\tau:X\times\Sigma\to [0,1]$ which is measurable in its first argument,
i.e. $\tau(\cdot,A\in\Sigma):X\to \mathbb{R}$ is measurable for any fixed
$A$ in $\Sigma$ and for any fixed $x\in X$, $\tau(x,\cdot)$ is a
(sub)probability measure.  These kernels describe transition probability
functions. 

A crucial concept is that of a \emph{filtration}.  They will play a central
role in the description of a process.
\begin{defin}
  A \emph{filtration} on a measurable space $(\Omega,\cF)$ is a
  nondecreasing family $(\cF_t)_{t \geq 0}$ of sub-$\sigma$-algebras of $\cF$,
  \emph{i.e.}\ $\cF_s \subseteq \cF_t \subseteq \cF$
  for $0\leq s < t < \infty$.
\end{defin}
This concept is used to capture the idea that at time $t$ what is ``known'' or
``observed'' about the process is encoded in the sub-$\sigma$-algebra
$\cF_t$.

\begin{defin}
A \emph{stochastic process} is a collection of random variables $(X_t)_{0
  \leq t < \infty}$ on a measurable space $(\Omega, \cF)$ that take values
in a second measurable space $(S, \cS)$ called the \emph{state space}.  We
say that a stochastic process is \emph{adapted} to a filtration $\cF_t$ if
for each $t\geq 0$ we have $X_t$ is $\cF_t$-measurable.
\end{defin}
Note that a stochastic process is always adapted to the filtration $\cG_t$,
where $\cG_t$ is defined as the $\sigma$-algebra generated by all the
random variables $\{X_s| s\leq t\}$. The filtration $(\cG_t)_{t\geq 0}$ is also referred to as the natural filtration associated to $(X_t)_{t \geq 0}$.

\smallskip
Before stating the definition of the continuous-time processes we will be interested in, let us first start by recalling the definition of their discrete-time counterparts.
\begin{defin}
A labelled Markov process (LMP) is a triple $(X,\Sigma,\tau)$ where
$(X,\Sigma)$ is a measurable space and $\tau$ is a Markov kernel.  
\end{defin}

We will quickly review the theory of continuous-time processes on
continuous state space; much of this material is adapted from ``Diffusions,
Markov Processes and Martingales, Volume I'' by Rogers and
Williams~\cite{Rogers00a} and we use their notations.  Another useful source
is ``Functional analysis for probability and stochastic processes'' by
A. Bobrowski~\cite{Bobrowski05}.  Let $E$ be a locally compact Hausdorff
space with countable base and let it be equipped with the Borel
$\sigma$-algebra $\mathcal{E} = \mathcal{B}(E)$.  $E_\partial$ is the
one-point compactification of $E$: $E_\partial = E \uplus \{ \partial \}$.
The physical picture is that the added state, $\partial$, represents a
point at infinity; we will view it as an absorbing state.

We say that a continuous real-valued function $f$ on $E$ ``vanishes at
infinity'' if for every $\varepsilon > 0$ there is a compact subset
$K \subset E$ such that $\forall x\in E\setminus K$ we have
$|f(x)| \leq \varepsilon$.  This space is a Banach space with the $\sup$
norm.  

\begin{definition}
A \emph{semigroup} of operators on any Banach space is a family
of linear continuous (bounded) operators $T_t$ indexed by
$t\in\mathbf{R}^{\geq 0}$ such that
\[ \forall s,t \geq 0, T_s \circ T_t = T_{s+t}\]
and
\[ T_0 = I. \]
\end{definition}
The first equation above is called the semigroup property.  The operators
in a semigroup are continuous however there is a useful continuity property
of the semigroup as a whole.
\begin{definition}
For $X$ a Banach space, we say that a semigroup $T_t:X\to X$ is \emph{strongly continuous} if 
\[ \forall x\in X, \lim_{t\downarrow 0} T_t x = x\]
which is equivalent to saying
\[ \forall x\in X, \lim_{t\downarrow 0}\| T_t x - x \| \to 0. \]
\end{definition}

\begin{definition}
A \emph{Feller-Dynkin semigroup (FDS)} is a strongly continuous
semigroup $(\hat{P}_t)_{t \geq 0}$ of linear operators on $C_0(E)$ (the
space of continuous functions on $E$ which vanish at infinity) satisfying the additional condition: 
\[\forall t \geq 0 ~~~ \forall f \in C_0(E) \text{, if }~~ 0 \leq f \leq 1 \text{, then }~~ 0 \leq \hat{P}_t f \leq 1\]
\end{definition}

The following important proposition relates these FDS with Markov
processes which allows one to see the connection with more familiar
probabilistic transition systems.
\begin{prop}
Given such an FDS, it is possible to define a unique family of sub-Markov
kernels $(P_t)_{t \geq 0} : E \times \mathcal{E} \to [0,1]$ such that for
all $t \geq 0$ and $f \in C_0(E)$, 
\[ \hat{P}_t f(x) = \int f(y) P_t(x, dy). \]
\end{prop}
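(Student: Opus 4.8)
The plan is to build each kernel $P_t(x,\cdot)$ one point at a time by representing the FDS as an integration operator, via the Riesz--Markov--Kakutani representation theorem. Fix $t \geq 0$ and $x \in E$ and define the functional $L_{t,x} : C_0(E) \to \mathbb{R}$ by $L_{t,x}(f) = \hat{P}_t f(x)$. This is linear, and it is bounded since $\hat{P}_t$ is a bounded operator and evaluation at $x$ has norm at most one. The first point to check is \emph{positivity}: if $f \geq 0$ is not identically zero then $g = f/\|f\|_\infty$ satisfies $0 \leq g \leq 1$, so the FDS condition gives $0 \leq \hat{P}_t g \leq 1$ and hence $\hat{P}_t f = \|f\|_\infty\, \hat{P}_t g \geq 0$; thus $L_{t,x}(f) \geq 0$.

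Since $E$ is locally compact Hausdorff with a countable base, the Riesz--Markov--Kakutani theorem applies to the positive bounded functional $L_{t,x}$ and produces a unique finite regular Borel measure $\mu_{t,x}$ on $(E,\mathcal{E})$ with $\hat{P}_t f(x) = \int f \, d\mu_{t,x}$ for every $f \in C_0(E)$. I would then define $P_t(x, A) = \mu_{t,x}(A)$, so that the integral identity in the statement holds by construction. To see that $P_t(x,\cdot)$ is a \emph{sub}-probability measure, I would use regularity to write $\mu_{t,x}(E) = \sup\{\int f\, d\mu_{t,x} : f \in C_0(E),\ 0 \leq f \leq 1\}$ and note that each such integral equals $\hat{P}_t f(x)$, which is at most one by the FDS condition; hence the total mass is at most one.

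The remaining and most delicate requirement is \emph{measurability in the first argument}: for each fixed $A \in \mathcal{E}$ the map $x \mapsto P_t(x,A)$ must be $\mathcal{E}$-measurable. For $f \in C_0(E)$ this is immediate, since $x \mapsto \int f\, d\mu_{t,x} = \hat{P}_t f(x)$ is itself an element of $C_0(E)$ and in particular continuous. I expect the main obstacle to be upgrading this from test functions to arbitrary Borel sets, and the plan is a functional monotone-class argument. The collection of bounded Borel functions $g$ for which $x \mapsto \int g\, d\mu_{t,x}$ is measurable is a vector space, contains $C_0(E)$, and is closed under bounded monotone limits; since $C_0(E)$ is closed under products and generates $\mathcal{E}$, the functional monotone-class theorem forces this collection to contain every bounded Borel function, and in particular each indicator $\mathbf{1}_A$. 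This yields measurability of $x \mapsto P_t(x,A)$. Uniqueness is then inherited directly from the Riesz theorem: two kernels satisfying the integral identity induce, for each $x$, measures agreeing on all of $C_0(E)$, which is measure-determining on $E$, so they coincide.
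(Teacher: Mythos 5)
Your proof is essentially correct, and it follows the standard route: the paper itself gives no proof of this proposition, treating it as background imported from Rogers and Williams, where precisely this Riesz--Markov--Kakutani argument is used. Your positivity check, the construction of $\mu_{t,x}$, the sub-probability bound via inner regularity and Urysohn functions, and the uniqueness argument are all sound (on a locally compact Hausdorff space with countable base every finite Borel measure is automatically Radon, so agreement of integrals against $C_0(E)$ does force the two kernels to coincide).

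One point needs tightening in the measurability step. The functional monotone-class theorem you invoke requires the vector space $\mathcal{H} = \{ g \text{ bounded Borel} : x \mapsto \int g\, d\mu_{t,x} \text{ is measurable}\}$ to contain the constant function $1$, and $C_0(E)$ does not contain the constants unless $E$ is compact, so this does not follow merely from $\mathcal{H} \supseteq C_0(E)$; without it the theorem can fail. The repair is immediate from the standing hypotheses: $E$ is locally compact Hausdorff with countable base, hence $\sigma$-compact, so there exist $f_n \in C_c(E) \subseteq C_0(E)$ with $f_n \uparrow 1$ pointwise; since each $x \mapsto \int f_n\, d\mu_{t,x} = \hat{P}_t f_n(x)$ is continuous and $\int 1\, d\mu_{t,x} = \lim_n \hat{P}_t f_n(x)$ by monotone convergence, the constant $1$ lies in $\mathcal{H}$ by your closure under bounded monotone limits. (The same observation shows $x \mapsto \mu_{t,x}(E)$ is measurable, which your argument also implicitly needs.) With that sentence inserted, the monotone-class argument goes through and the proof is complete.
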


A very important ingredient in the theory is the space of trajectories of a
FD processes (FD semigroup) as a probability space.  This space does not
appear explicitly in the study of labelled Markov processes but one does
see it in the study of continuous-time Markov chains and jump processes.
\begin{definition}
We define a \emph{trajectory} $\omega$ on $E_{\partial}$ to be a
\emph{cadlag}\footnote{By \emph{cadlag} we mean right-continuous
  with left limits.} function from $[0,\infty)\to E_{\partial}$ such that
if either $\omega(t-) = \partial$ or $\omega(t)=\partial$ then $\forall u
\geq t, \omega(u) = \partial$.  We can extend $\omega$ to a map from
$[0,\infty]$ to $E_{\partial}$ by setting $\omega(\infty)=\partial$.
\end{definition}

It is possible to associate to such an FDS a \emph{canonical FD process}.
Let $\Omega$ be the set of trajectories $\omega : [0, \infty) \to
E_\partial$. 
\begin{definition}
The
\emph{canonical FD process} associated to the FDS $(\hat{P}_t)$ is
\[(\Omega, \mathcal{G}, (\cG)_{t \geq 0}, (X_t)_{0 \leq t \leq \infty}, (\mathbb{P}^x)_{x \in E_\partial})\]
where
\begin{itemize}
\item $X_t(\omega) = \omega (t)$
\item $\mathcal{G} = \sigma (X_s ~|~ 0 \leq s < \infty)$, $\cG_t = \sigma (X_s ~|~ 0 \leq s \leq t)$
\item given any probability measure $\mu$ on $E_\partial$, by the
  Kolmogorov extension theorem, there exists a
  unique probability measure $\mathbb{P}^\mu$ on $(\Omega, \mathcal{G})$
  such that for all
  $n \in \mathbb{N}, 0 \leq t_1 \leq t_2 \leq ... \leq t_n$ and
  $x_0, x_1, ..., x_n$ in $E_\partial$,
\[ \mathbb{P}^\mu (X_0 \in dx_0, X_{t_1} \in dx_1, ..., X_{t_n} \in dx_n) =
  \mu (dx_0) P_{t_1}^{+\partial}(x_0, dx_1)...P_{t_n -
    t_{n-1}}^{+\partial}(x_{n-1}, dx_n) \]
  where $P_t^{+\partial}$ is the Markov kernel extending the Markov kernel $P_t$ to $E_\partial$ by $P_t^{+\partial} (x, \{ \partial \}) = 1 - P_t (x, E)$ and $P_t^{+ \partial} (\partial, \{ \partial \}) = 1$.
We set $\mathbb{P}^x = \mathbb{P}^{\delta_x}$.
\end{itemize}
\end{definition}
This is the version of the system that will be most useful for us. In order
to bring it more in line with the kind of transition systems that have
hitherto been studied in the computer science literature we introduce a
finite set of atomic propositions $AP$ and such a FD process is equipped
with a function $obs : E \to 2^{AP}$.  This function is extended to a
function $obs : E_\partial \to 2^{AP} \uplus \{ \partial \}$ by setting
$obs (\partial) = \partial$.

Instead of following the dynamics of the
system step by step as one does in a discrete system we have to study the
behaviour of sets of trajectories.  The crucial ingredient is the
distribution $\mathbb{P}^x$ which gives a measure on the space of
trajectories for a system started at the point $x$.

\subsection{Brownian motion as a FD process}

Brownian motion is a stochastic process describing the irregular motion of
a particle being buffeted by invisible molecules.  Now its range of
applicability extends far beyond its initial application~\cite{Karatzas12}.
The following definition is from~\cite{Karatzas12}.
\begin{defin}
A standard one-dimensional Brownian motion is a Markov process adapted to
the filtration $\cF_t$,
\[ B = (W_t,\cF_t), 0 \leq t < \infty  \]
defined on a probability space $(\Omega,\cF,P)$ with the properties
\begin{enumerate}
\item $W_0 = 0$ almost surely,
\item for $0\leq s < t$, $W_t - W_s$ is independent of $\cF_s$ and is
  normally distributed with mean $0$ and variance $t-s$.
\end{enumerate}
\end{defin}
In this very special process, one can start at any place, there is an
overall translation symmetry which makes calculations more tractable.
In order to do any calculations we use following fundamental formula:  If
the process is at $x$ at time $0$ then at time $t$ the probability that it
is in the (measurable) set $D$ is given by
\[ P_t(x,D) = \int_{y\in D} \frac{1}{\sqrt{2\pi t}}
  \exp\left(-\frac{(x-y)^2}{2t}\right)\mathrm{d}y. \]

\section{Bisimulation}
\label{sec:bisimulation}
The concept of bisimulation is fundamental and its history is well
documented~\cite{Sangiorgi09}.  We recall the definition of bisimulation on
continuous state spaces with discrete time
steps~\cite{Desharnais02,Panangaden09}, we call it a \emph{DT-bisimulation}
to emphasize that it pertains to discrete-time systems.  We consider LMPs
equipped with a family of atomic propositions $AP$ where $A\in AP$ is
interpreted on a specific LMP as a 
subset of the state space represented by its characteristic function $\chi_A$.
\begin{definition}
Given an LMP $(X, \Sigma, \tau, (\chi_P)_{P \in AP})$, a
\emph{DT-bisimulation} $R$ is an equivalence relation on $X$ such that if
$x R y$, then 
\begin{itemize}
\item for all $A \in AP$, $\chi_A(x) = \chi_A(y)$
\item for all $R$-closed sets $B \in \Sigma$, $\tau(x, B) = \tau(y, B)$.
\end{itemize}
\end{definition}

\subsection{Naive approach}

The key idea of bisimulation is that ``what can be observed now is the same''
and bisimulation is preserved by the evolution.  In order to capture this we
need two conditions: the first captures what is immediately observable and the
second captures the idea that the evolution preserves bisimulation.

Let us consider the naive extension of bisimulation in discrete time: let us consider an equivalence relation $R$ on the state space $E$ such that whenever $x~R~y$ ($x,y \in E$):
\begin{description}
\item[\namedlabel{init1}{(initiation 1)}] $obs (x) = obs(y)$, and
\item[\namedlabel{ind1}{(induction 1)}] for all $R$-closed sets $C$ in $\mathcal{E}$, for all time $t$, $P{t}(x,C) = P_{t}(y,C)$
\end{description}

Let us illustrate on an example why this definition is not enough.

We consider the case of Brownian motion on the reals where there is a single atomic proposition marking 0:
$obs (0) = 1$ and $obs(x) = 0$ for $x \neq 0$. Intuitively, we would like that two states $x$ and $y$ are bisimilar if and only if $|x| = |y|$ as the only symmetry that this system has is point reflection with respect to 0.

However, the two conditions \ref{init1} and \ref{ind1} are not strong enough to enforce that this equivalence relation is the greatest bisimulation.

Let us define the equivalence
\[ R = (\mathbb{R}^* \times \mathbb{R}^*) \cup \{ (0,0)\} \text{ where } \mathbb{R}^* = \mathbb{R}\setminus \{0\}. \]
This equivalence satisfies both conditions \ref{ind1} and \ref{init1}. The last one follows directly from the definitions of $R$ and $obs$.

For the induction condition, the only $R$-closed sets are $\emptyset, \{ 0\}, \mathbb{R}^*$ and $\mathbb{R}$, and for any state $z \neq 0$ and time $t \geq 0$, $P_t(z, \emptyset) = P_t (z, \{ 0\}) = 0$ and $P_t (z, \mathbb{R}^*) = P_t(z, \mathbb{R}) = 1$.

\subsection{Definition}
As we have just shown, unlike in
the discrete-time case we cannot just say that the ``next step'' preserves the
relation.  Therefore we have to talk about the
trajectories; but then we need to choose the right condition on sets of
trajectories. 
\begin{definition}
An equivalence relation $R$ on the state space $E$ is a \emph{bisimulation} if
whenever $x R y$, the following conditions are satisfied: 
\begin{description}
\item[\ref{init1}] $obs (x) = obs(y)$, and
\item[\namedlabel{ind2}{(induction 2)}]  for all $R$-closed sets $B$ in $\mathcal{G}$, $\mathbb{P}^x(B) =
  \mathbb{P}^y (B)$ where by $R$-closed, we mean that for all $\omega \in B$
  if a trajectory $\omega'$ is such that for all time $t \geq 0$, $\omega (t) R \omega' (t)
  $, then $\omega' \in B$. 
\end{description}
\end{definition}

Clearly equality is trivially a bisimulation. 
And by definition of $\mathbb{P}^x$, condition \ref{ind2} implies \ref{ind1}. 

We have chosen to give names to the conditions. The reason for choosing those names will become clear in section \ref{subsubsec:fork}.

\begin{remark}
Usually, for discrete time, instead of a single kernel $\tau$, a labelled Markov process is a family of Markov kernels indexed by a family of \emph{actions}. These actions correspond to the environment or the user acting on the process. The second condition of bisimulation is then stated on the corresponding Markov kernels for all actions. It is possible to to the same for continuous-time. We can consider a family of FD processes indexed by a set of actions. Condition \ref{ind2} is then stated for all these actions. Everything done afterwards can be adapted to that setting that way.
\end{remark}

\begin{lemma}
An equivalence relation $R$ is a bisimulation if and only if
whenever $x R y$, the following conditions are satisfied: 
\begin{description}
\item[\namedlabel{init2}{(initiation 2)}] for all $obs$-closed sets $B$ in $\mathcal{G}$, $\mathbb{P}^x(B) =
  \mathbb{P}^y (B)$ where by $obs$-closed, we mean that for all $\omega \in B$
  if a trajectory $\omega'$ is such that $obs \circ \omega = obs \circ \omega'
  $, then $\omega' \in B$. 
\item[\ref{ind2}] for all $R$-closed sets $B$ in $\mathcal{G}$, $\mathbb{P}^x(B) =
  \mathbb{P}^y (B)$.
\end{description}
\end{lemma}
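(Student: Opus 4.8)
The plan is to exploit the fact that both characterizations share condition \ref{ind2}; the content of the lemma therefore reduces to showing that, in the presence of \ref{ind2} and the fact that $R$ is an equivalence relation, condition \ref{init1} and condition \ref{init2} are interchangeable. Throughout I treat $R$ as extended to $E_\partial$ with $\partial$ in a singleton class, so that the pointwise comparisons $\omega(t)\,R\,\omega'(t)$ and $obs\circ\omega = obs\circ\omega'$ are meaningful on trajectories taking values in $E_\partial$; this convention is consistent with $obs(\partial)=\partial$.

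For the forward implication, that a bisimulation satisfies \ref{init2}, the key observation is that \ref{init1} upgrades the state-level comparison to a trajectory-level one: if $x\,R\,y$ implies $obs(x)=obs(y)$, then any two trajectories with $\omega(t)\,R\,\omega'(t)$ for all $t$ automatically satisfy $obs\circ\omega = obs\circ\omega'$. Consequently every $obs$-closed set is $R$-closed: if $B$ is $obs$-closed, $\omega\in B$, and $\omega(t)\,R\,\omega'(t)$ for all $t$, then $obs\circ\omega=obs\circ\omega'$ and hence $\omega'\in B$. Applying \ref{ind2}, which yields $\mathbb{P}^x(B)=\mathbb{P}^y(B)$ on all $R$-closed sets, to the subclass of $obs$-closed sets gives \ref{init2} directly, while \ref{ind2} itself is retained unchanged.

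For the converse, that \ref{init2} together with \ref{ind2} makes $R$ a bisimulation, only \ref{init1} needs to be recovered, since \ref{ind2} is again carried over verbatim. Here I would test \ref{init2} on a single, carefully chosen event. Fix $x\,R\,y$ and set $B=\{\omega : obs(\omega(0))=obs(x)\}=X_0^{-1}\!\left(obs^{-1}(\{obs(x)\})\right)$. This set lies in $\mathcal{G}$, using that $X_0$ is $\mathcal{G}$-measurable and $obs$ is measurable, and it is visibly $obs$-closed, since membership depends on $\omega$ only through $obs\circ\omega$. By the defining formula of $\mathbb{P}^x=\mathbb{P}^{\delta_x}$ we have $X_0=x$ almost surely under $\mathbb{P}^x$, so $\mathbb{P}^x(B)=1$; likewise $X_0=y$ almost surely under $\mathbb{P}^y$, so $\mathbb{P}^y(B)$ equals $1$ if $obs(y)=obs(x)$ and $0$ otherwise. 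Condition \ref{init2} forces $\mathbb{P}^x(B)=\mathbb{P}^y(B)$, hence $\mathbb{P}^y(B)=1$, which is possible only when $obs(y)=obs(x)$. This establishes \ref{init1}.

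The only genuinely delicate points are bookkeeping rather than conceptual: one must ensure the chosen sets are measurable, hence the standing assumption that $obs$ is measurable and that $\mathcal{G}_0=\sigma(X_0)$, and one must fix the convention on $\partial$ so the closure notions behave well at the cemetery state. The real engine of the argument is the structural observation in the forward direction that \ref{init1} makes $R$-relatedness of trajectories refine $obs$-equality, so that the $obs$-closed sets sit inside the $R$-closed sets; everything else is a direct computation against the one-point marginal of $\mathbb{P}^x$ at time $0$.
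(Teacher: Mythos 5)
Your proof is correct and follows essentially the same route as the paper: the forward direction observes that \ref{init1} makes every $obs$-closed set $R$-closed so that \ref{ind2} applies, and the converse tests \ref{init2} on the set $B_x=\{\omega : obs(\omega(0))=obs(x)\}$, exactly the paper's argument. Your additional remarks on measurability and the convention at $\partial$ are careful bookkeeping the paper leaves implicit, but they do not change the substance.
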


\begin{proof}
Let us consider a bisimulation $R$. Let us now consider two states $x,y$ such that $x ~R~y$ and an $obs$-closed measurable set $B$. First note that the set $B$ is $R$-closed: if $\omega \in B$ and for all $t \geq 0$, $\omega (t) ~R~ \omega ' (t)$, then, by definition of bisimulation (initiation condition), $obs (\omega (t)) = obs (\omega ' (t))$ for all $t \geq 0$. Since the set $B$ is $obs$-closed, this means that $\omega ' \in B$ and hence $B$ is $R$-closed. Using the induction condition, we have that $\mathbb{P}^x (B) = \mathbb{P}^y (B)$.

Let us now consider an equivalence $R$ that satisfies both conditions. Let $x,y$ be two states such that $x ~R~ y$ and let us define the set $B_x = \{ \omega ~|~ obs (\omega (0)) = obs(x) \}$. The set $B_x$ is $obs$-closed and $\mathbb{P}^x (B_x) = 1$. Therefore $\mathbb{P}^y(B_x) = 1$ (by \ref{init2}) and therefore $obs(y) = obs(x)$.
\end{proof}

\begin{definition}
Two states are \emph{bisimilar} if there is a bisimulation that relates them.
\end{definition}

\begin{prop}
\label{prop:transitive-union-bisimulation}
Given two bisimulations $R_1$ and $R_2$, the transitive closure $R$ of $R_1
\cup R_2$ is a bisimulation. 
\end{prop}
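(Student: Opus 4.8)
The plan is to verify the two defining conditions of a bisimulation for $R$, after first checking that $R$ is genuinely an equivalence relation. Reflexivity and symmetry of $R_1 \cup R_2$ are inherited from $R_1$ and $R_2$, each being an equivalence relation, and the transitive closure of a reflexive and symmetric relation is again reflexive and symmetric, and transitive by construction; hence $R$ is an equivalence relation. The substantive work is then checking conditions \ref{init1} and \ref{ind2}.

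For any pair with $x \mathrel{R} y$, the definition of transitive closure supplies a finite chain $x = z_0, z_1, \dots, z_n = y$ in which each consecutive pair $(z_i, z_{i+1})$ lies in $R_1$ or in $R_2$. Condition \ref{init1} is then immediate: each step preserves $obs$ because $R_1$ and $R_2$ are bisimulations, so $obs(x) = obs(z_1) = \dots = obs(y)$ by transitivity of equality.

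The key observation for condition \ref{ind2} is that every $R$-closed set is automatically both $R_1$-closed and $R_2$-closed. Indeed, since $R_1 \subseteq R$, whenever $\omega \in B$ and a trajectory $\omega'$ satisfies $\omega(t) \mathrel{R_1} \omega'(t)$ for all $t \geq 0$, then a fortiori $\omega(t) \mathrel{R} \omega'(t)$ for all $t$, so $\omega' \in B$ by $R$-closedness of $B$; the identical argument applies to $R_2$. Now fix an $R$-closed measurable set $B \in \mathcal{G}$ together with the chain above. Since $B$ is closed for both $R_1$ and $R_2$, applying condition \ref{ind2} for whichever of $R_1, R_2$ relates $z_i$ to $z_{i+1}$ yields $\mathbb{P}^{z_i}(B) = \mathbb{P}^{z_{i+1}}(B)$ at each step. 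Telescoping these equalities along the chain gives $\mathbb{P}^x(B) = \mathbb{P}^y(B)$, which is precisely \ref{ind2} for $R$.

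I expect the single step that carries the whole argument to be the inclusion $R_i \subseteq R$ and the resulting implication that an $R$-closed set is $R_i$-closed; once this is established, the remainder is a routine telescoping along the path witnessing the transitive closure. The point worth double-checking is that no measurability issue intrudes: the same measurable set $B$ is used at every step, so the hypothesis $B \in \mathcal{G}$ required by condition \ref{ind2} for $R_1$ and $R_2$ holds throughout.
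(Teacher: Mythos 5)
Your proof is correct and follows essentially the same route as the paper's: extract a finite chain from the transitive closure, use \ref{init1} stepwise for the observations, and observe that any $R$-closed set is automatically $R_1$- and $R_2$-closed (since $R_i \subseteq R$) so that \ref{ind2} telescopes along the chain. The paper's argument is identical in substance, and your explicit remark that the same measurable set $B$ is reused at every step is a fair (if minor) clarification of why no measurability issue arises.
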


\begin{proof}
Clearly $R$ is an equivalence.

Let us prove that the equivalence $R$ satisfies both conditions. Assume $x R
y$. This means that there is a finite sequence
\[x ~ R_1~ x_0~ R_2~ x_1~ R_1~  ... ~R_2~ x_n~ R_1~ y.\]

Let us consider an $obs$-closed set $B$ in $\mathcal{G}$, then since both $R_1$
and $R_2$ are bisimulations, we have that 
\[ obs(x) = obs(x_0) = obs(x_1) = ... = obs(x_n) = obs(y).\]

Let us now consider an $R$-closed set $B$. First, note that the set $B$ is $R_1$-closed: consider $\omega \in B$, and a trajectory $\omega'$ such that for all time $t \geq 0$, $\omega (t)~ R_1~ \omega ' (t)$. Then, in particular, for all time $t \geq 0$, $\omega (t)~ R~ \omega ' (t)$, and since $B$ is $R$-closed, $\omega' \in B$. Similarly, the set $B$ is $R_2$-closed.

Since $R_1$ is a bisimulation, we have that $\mathbb{P}^x(B) = \mathbb{P}^{x_0}(B)$, $\mathbb{P}^{x_n}(B) = \mathbb{P}^{y}(B)$ and $\mathbb{P}^{x_{2k+1}}(B) = \mathbb{P}^{x_{2k+2}}(B)$ (for all suitable $k$). And since $R_2$ is a bisimulation, we have that $\mathbb{P}^{x_{2k}}(B) = \mathbb{P}^{x_{2k}}(B)$ (for all suitable $k$). We then have:
\[ \mathbb{P}^x(B) = \mathbb{P}^{x_0}(B) = \mathbb{P}^{x_1}(B) = ... = \mathbb{P}^{x_n}(B) = \mathbb{P}^y(B)\]
\end{proof}

\begin{prop}
The relation ``is bisimilar to'' is the greatest bisimulation.
\end{prop}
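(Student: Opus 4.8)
The plan is to let $\sim$ denote the relation ``is bisimilar to'', which by its definition is precisely the union of all bisimulations on $E$, and to verify that $\sim$ is itself a bisimulation; maximality is then immediate. First I would check that $\sim$ is an equivalence relation. Reflexivity holds because equality is a bisimulation (as already noted), so $x \sim x$. Symmetry holds because each bisimulation is by definition a symmetric relation: if $x \sim y$ is witnessed by a bisimulation $R$ with $x \mathrel{R} y$, then $y \mathrel{R} x$, so $y \sim x$. Transitivity is the only clause requiring real work, and it is exactly what Proposition \ref{prop:transitive-union-bisimulation} supplies: given $x \sim y$ via $R_1$ and $y \sim z$ via $R_2$, the transitive closure $R$ of $R_1 \cup R_2$ is a bisimulation, and since $x \mathrel{R_1} y$ and $y \mathrel{R_2} z$ we get $x \mathrel{R} z$, hence $x \sim z$.

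Next I would verify the two bisimulation conditions for $\sim$. Condition \ref{init1} is straightforward: if $x \sim y$, pick a witnessing bisimulation $R$ with $x \mathrel{R} y$; since $R$ satisfies \ref{init1} we get $obs(x) = obs(y)$. The heart of the argument is condition \ref{ind2}, and the key observation is a monotonicity fact about closed sets. Since every bisimulation $R$ is contained in $\sim$, any $\sim$-closed set $B \in \mathcal{G}$ is automatically $R$-closed: if $\omega \in B$ and $\omega'$ satisfies $\omega(t) \mathrel{R} \omega'(t)$ for all $t \geq 0$, then a fortiori $\omega(t) \sim \omega'(t)$ for all $t$, so $\sim$-closedness of $B$ forces $\omega' \in B$. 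With this in hand, verifying \ref{ind2} for $\sim$ reduces to the same condition for a single witness: given $x \sim y$ and a $\sim$-closed set $B \in \mathcal{G}$, choose a bisimulation $R$ with $x \mathrel{R} y$; by the monotonicity fact $B$ is $R$-closed, so applying \ref{ind2} for $R$ yields $\mathbb{P}^x(B) = \mathbb{P}^y(B)$. Thus $\sim$ is a bisimulation.

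Finally, maximality holds by construction: if $R$ is any bisimulation and $x \mathrel{R} y$, then $x$ and $y$ are bisimilar by definition, so $R \subseteq \sim$. Since $\sim$ is itself a bisimulation containing every bisimulation, it is the greatest one. I expect the only genuinely delicate point to be the direction of the monotonicity of closedness under relation inclusion --- namely that a set closed under the \emph{larger} relation $\sim$ is closed under each \emph{smaller} bisimulation $R$, rather than the reverse. This inclusion is easy to state backwards, yet it is exactly what lets the single-witness reduction for \ref{ind2} go through, so I would state and justify it explicitly before invoking it.
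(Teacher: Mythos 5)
Your proposal is correct and follows essentially the same route as the paper's own proof: establish that the union of all bisimulations is an equivalence via Proposition \ref{prop:transitive-union-bisimulation}, then verify \ref{init1} through a witnessing bisimulation and \ref{ind2} via the observation that any set closed under the larger relation is closed under each contained bisimulation. The monotonicity point you flag as delicate is exactly the step the paper also spells out, so nothing is missing.
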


\begin{proof}
Let us denote $R_{\max}$ the relation ``is bisimilar to'':
\[ R_{max} = \bigcup_{R \text{ bisimulation}} R \]

It is enough to prove that it is a bisimulation. First note that it is an
equivalence. Indeed, it is reflexive and symmetric since the equality is a
bisimulation.  For transitivity, note that if $x~ R_{\max}~ y$ and $y ~R_{\max}~ z$,
then there are two bisimulations $R_1$ and $R_2$ such that $x ~R_1~ y$ and
$y ~ R_2~ z$. By proposition \ref{prop:transitive-union-bisimulation}, we know that
the transitive closure $R$ of $R_1 \cup R_2$ is a bisimulation and in particular
this means that $x ~R~ y$, $y~R ~z$ and therefore $x ~R~z$. Since $R$ is a
bisimulation, this means that $x$ is bisimilar to $z$, which proves the
transitivity of $R_{\max}$.

Consider $x$ and $y$ such that $x$ is bisimilar to $y$. This means that there is
a bisimulation $R$ such that $x ~ R ~ y$. The initiation condition for $R$ gives us that $obs(x) = obs(y)$,
which also corresponds to the initiation condition we want for $R_{\max}$. Consider now
$B$ an $R_{\max}$-closed set. The set $B$ is also $R$-closed: consider $\omega \in B$ and a trajectory $\omega'$ such that for all time $t \geq 0$, $\omega (t) ~R~ \omega'(t)$. Then, we also have that for all time $t \geq 0$, $\omega (t)~ R_{max}~ \omega'(t)$ and since the set $B$ is $R_{max}$-closed, we have that $\omega' \in B$. And since $x~R~y$, we have that $\mathbb{P}^x(B) = \mathbb{P}^y(B)$ which concludes the proof.
\end{proof}

We now consider several examples and give their greatest bisimulation. Proving that an equivalence is the greatest bisimulation follows the following outline: first proving that the equivalence satisfies conditions \ref{init1} and \ref{ind2} (and hence it is a bisimulation), and then using \ref{init2} to prove that it is the greatest bisimulation possible.

\subsection{Basic examples}

\subsubsection{Deterministic Drift}

Consider a deterministic drift on the real line $\mathbb{R}$ with constant speed $a \in \mathbb{R}$. We consider two cases: with $0$ as the only
distinguished point and with all the integers distinguished from the other
points.

\paragraph*{With zero distinguished: }
Let us consider the case when there is a single atomic proposition called $obs$,
and $obs(x) = 1$ if and only if $x = 0$.

\begin{prop}
Two states $x$ and $y$ are bisimilar if and only if either $ax >0$ and $ay > 0$ or $x = y$.
\end{prop}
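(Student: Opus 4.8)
The plan is to exploit the fact that the drift is \emph{deterministic}: the Feller--Dynkin semigroup is $\hat P_t f(x) = f(x+at)$, whose associated kernels are the Dirac kernels $P_t(x,\cdot) = \delta_{x+at}$, so the canonical law $\mathbb{P}^x$ is the Dirac measure concentrated on the single (continuous, non-killed) trajectory $\omega_x(t) = x + at$. Consequently $\mathbb{P}^x(B) = \mathbf{1}[\omega_x \in B]$ for every $B \in \mathcal{G}$, and every statement about measures reduces to membership of these linear trajectories. By the reflection symmetry $x \mapsto -x$, which preserves $obs$ and conjugates drift $a$ into drift $-a$, I may assume $a>0$ (the case $a<0$ follows immediately); then the claim reads: $x,y$ are bisimilar iff $(x>0 \text{ and } y>0)$ or $x=y$. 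The observation profile of $\omega_x$ is governed by the hitting time of $0$, since $obs(\omega_x(t)) = 1 \iff x+at=0 \iff t=-x/a$; thus $obs\circ\omega_x \equiv 0$ when $x>0$, and otherwise the profile is supported at the single time $-x/a \geq 0$.

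For the soundness direction (``if''), the case $x=y$ is immediate because equality is a bisimulation. For the positive case I would exhibit the equivalence relation $R = \{(u,v) : u,v>0\} \cup \{(u,u) : u \in \mathbb{R}\}$ and verify it is a bisimulation. Condition \ref{init1} is clear since $obs$ vanishes off $0$. For \ref{ind2}, let $u,v>0$ and let $B$ be $R$-closed; since $\omega_u(t) = u+at > 0$ and $\omega_v(t) = v+at > 0$ for all $t \geq 0$, the trajectories $\omega_u$ and $\omega_v$ are pointwise $R$-related, so $R$-closedness gives $\omega_u \in B \iff \omega_v \in B$, whence $\mathbb{P}^u(B) = \mathbb{P}^v(B)$. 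Thus all strictly positive states are mutually bisimilar.

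For the completeness direction (``only if''), suppose $x$ and $y$ are bisimilar; then some bisimulation relates them, and by the equivalent characterisation through \ref{init2} we have $\mathbb{P}^x(B) = \mathbb{P}^y(B)$ for every $obs$-closed $B \in \mathcal{G}$. The key observation is that this forces $obs\circ\omega_x = obs\circ\omega_y$: if the profiles differed at some time $t_0$, say $\omega_x(t_0)=0 \neq \omega_y(t_0)$, then the single-time set $B = X_{t_0}^{-1}(\{0\}) = \{\omega : obs(\omega(t_0)) = 1\}$ is measurable and $obs$-closed, yet $\mathbb{P}^x(B) = 1 \neq 0 = \mathbb{P}^y(B)$, a contradiction. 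From the profile description I would then read off that $obs\circ\omega_x = obs\circ\omega_y$ holds exactly when either both $x,y>0$ (both profiles are identically $0$) or $-x/a = -y/a$ with $x,y \leq 0$, i.e.\ $x=y$ --- precisely the stated dichotomy.

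The step I expect to be the crux is this completeness argument, and specifically the choice of separating sets. The naive separator is the hitting event ``the trajectory reaches $0$ before time $s$'', whose measurability would require début-type arguments. The simplification I would rely on is that, because each $\mathbb{P}^x$ is Dirac, it suffices to separate the two \emph{particular} linear paths $\omega_x,\omega_y$, and for this the single-time preimages $X_{t_0}^{-1}(\{0\})$ --- manifestly measurable and manifestly $obs$-closed --- already suffice, so no hitting-time measurability is needed.
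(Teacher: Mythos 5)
Your proposal is correct and follows essentially the same route as the paper: both exploit that $\mathbb{P}^z$ is the Dirac measure on the single linear trajectory $\omega_z(t)=z+at$, both exhibit the same relation $R=\{(u,v):u,v>0\}\cup\{(u,u):u\in\mathbb{R}\}$ for soundness, and both establish maximality via condition \ref{init2} using single-time $obs$-closed sets of the form $\{\omega : \omega(t_0)=0\}$ (the paper takes $t_0=-y/a$ explicitly, while you phrase it as equality of observation profiles, but the separating sets are the same). No gap to report.
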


\begin{proof}
To make this proof not too tedious, we will assume that $a > 0$ (case $a < 0$
works in a similar fashion and case $a =0$ is boring). Denote 
\[ R = (\mathbb{R}_{>0} \times \mathbb{R}_{>0}) \cup \{ (x,x) ~|~ x \in \mathbb{R}_{\leq 0} \} \]
Let $x~R~y$.



Let us consider $x$ and $y$ such that $x R y$. We have to consider two cases:
\begin{itemize}
\item If $obs(x) = 1$, this means that $x = 0$. The state $0$ is only bisimilar to itself, which means that $y = 0$ and therefore $obs(x) = obs(y)$.
\item If $obs(x) = 0$, this means that $x \neq 0$. The state $0$ is only bisimilar to itself, which means that $y \neq 0$ and therefore $obs(x) = obs(y)$.
\end{itemize} 

Consider a measurable set $B$. First, for any $z \in \mathbb{R}$, let us
denote $\omega_z$ the trajectory $\omega_z (t) = z + at$ and note that
$\mathbb{P}^z (B) = \delta_B (\omega_z)$.

Consider an $R$-closed measurable set $B$. We want to show that $\mathbb{P}^x (B)
= \mathbb{P}^y(B)$. For that, there are two cases to consider: 
\begin{itemize}
\item either $x \leq 0$, in which case $x = y$ since $xRy$ which proves the point,
\item or $x >0$, in which case $y > 0$. In that case, for all $t$, $\omega_x(t) >0$ and $\omega_y(t) > 0$ and in particular for all time $t$, $\omega_x (t) ~R~ \omega_y(t)$. Since $B$ is $R$-closed, $\omega_x \in B$ if, and only if, $\omega_y \in B$.
\end{itemize}

This concludes the second part of the proof.

Let us now prove that this is the greatest such bisimulation. We are using condition \ref{init2} for that:

Consider $x > 0$ and $y \leq 0$. For all $t \geq 0$, $\omega_x(t) \neq 0$, but $\omega_y(-y / a) = 0$. Define $B = \{ \omega ~|~ \omega(-y / a) = 0\}$. This set is $obs$-closed but $\mathbb{P}^x (B) = 0$ and $\mathbb{P}^y (B) = 1$. These two states cannot be bisimilar.

Consider $x,y \leq 0$. Note that $\omega_x(t) \neq 0$ for all $t \neq -x / a$. Define $B = \{ \omega ~|~ \omega(-y / a) = 0\}$. We have that $\mathbb{P}^y (B) = 1$ and the only way $\mathbb{P}^x (B) = 1$ is to have $-x / a = -y / a$, i.e. $x = y$. This concludes the proof.
\end{proof}

\paragraph*{With all integers distinguished: }
Let us consider the case when there is a single atomic proposition and $obs (x) = 1$ if and only if $x \in \mathbb{Z}$.

\begin{prop}
Two states $x$ and $y$ are bisimilar if and only if $x - \lfloor x \rfloor = y - \lfloor y \rfloor$, i.e. $x - y \in \mathbb{Z}$.
\end{prop}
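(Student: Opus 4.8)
The plan is to show that the equivalence relation
$R = \{(x,y) \in \mathbb{R}^2 \mid x - y \in \mathbb{Z}\}$
is exactly the greatest bisimulation, following the two-stage outline described above: first that $R$ is a bisimulation, then that no coarser relation can be. As in the previous proposition I would assume without loss of generality that $a > 0$ (the case $a < 0$ is symmetric, and $a = 0$ is degenerate since then trajectories are constant). The key structural fact is that the drift is deterministic: for each starting point $z$ the only trajectory is $\omega_z(t) = z + at$, so $\mathbb{P}^z = \delta_{\omega_z}$ and hence $\mathbb{P}^z(B) = \delta_B(\omega_z)$ for every measurable $B \in \mathcal{G}$, exactly as recorded in the proof of the previous proposition.

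To check that $R$ is a bisimulation, condition \ref{init1} is immediate: if $x - y \in \mathbb{Z}$ then $x \in \mathbb{Z}$ iff $y \in \mathbb{Z}$, so $obs(x) = obs(y)$. For condition \ref{ind2}, suppose $x\,R\,y$ and let $B$ be an $R$-closed set. Since $\omega_x(t) - \omega_y(t) = x - y \in \mathbb{Z}$ for every $t \geq 0$, we have $\omega_x(t)\,R\,\omega_y(t)$ for all $t$, so $R$-closedness forces $\omega_x \in B$ if and only if $\omega_y \in B$; therefore $\mathbb{P}^x(B) = \delta_B(\omega_x) = \delta_B(\omega_y) = \mathbb{P}^y(B)$.

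For maximality I would use the reformulation \ref{init2} from the Lemma. Suppose $x - y \notin \mathbb{Z}$; the goal is to exhibit an $obs$-closed set separating $x$ and $y$. Set $s = (\lceil y \rceil - y)/a \geq 0$, the first time at which $\omega_y$ meets $\mathbb{Z}$, so that $\omega_y(s) = \lceil y \rceil \in \mathbb{Z}$. Define $B = \{\omega \mid obs(\omega(s)) = 1\}$, i.e.\ the trajectories lying on an integer at time $s$; this set is $obs$-closed because membership depends only on $obs \circ \omega$. Then $\mathbb{P}^y(B) = 1$, whereas $\omega_x(s) = x + as = (x - y) + \lceil y \rceil$, which is an integer precisely when $x - y \in \mathbb{Z}$; hence $\mathbb{P}^x(B) = 0$. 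By \ref{init2} the states $x$ and $y$ cannot lie in any bisimulation, which shows that $R$ is the greatest bisimulation.

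The only delicate point is the arithmetic identity $\omega_x(s) = (x - y) + \omega_y(s)$ with $\omega_y(s) \in \mathbb{Z}$, which is what converts the analytic question ``does the $x$-trajectory hit an integer at time $s$?'' into the purely arithmetic condition $x - y \in \mathbb{Z}$; everything else reduces to the observation that determinism collapses $\mathbb{P}^z$ to a Dirac mass at $\omega_z$. I do not expect a substantive obstacle beyond bookkeeping the sign of $a$ and dismissing the trivial $a = 0$ case.
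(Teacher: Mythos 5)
Your proposal is correct and takes essentially the same route as the paper: the maximality argument is the very same construction (the paper uses the hitting time $(\lceil x \rceil - x)/a$ of the $x$-trajectory where you use that of the $y$-trajectory, and both separate the states with the $obs$-closed set of trajectories sitting on an integer at that instant). The only cosmetic difference is in verifying the induction condition for $R$: the paper invokes translation invariance (an $R$-closed $B$ satisfies $B + k = B$, so $\mathbb{P}^x(B) = \mathbb{P}^{x+k}(B+k) = \mathbb{P}^y(B)$), whereas you exploit determinism directly via $\mathbb{P}^z = \delta_{\omega_z}$, which is exactly the device the paper itself uses in the preceding zero-distinguished example.
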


\begin{proof}
Define $R = \{ (x,y) ~|~ x - y \in \mathbb{Z}\}$.

First let us prove that $R$ is a bisimulation. Take $x~ R~y$ and denote $k = x-y \in \mathbb{Z}$.

Note that $x \in \mathbb{Z}$ if and only if $y \in \mathbb{Z}$ and therefore $obs(x) = obs(y)$.

Consider an $R$-closed set $B$. In particular, this means that $k + B = \{ t \mapsto k+\omega(t) ~|~ \omega \in B\} = B$. Deterministic drift is invariant under translation, which means that $\mathbb{P}^y (B) = \mathbb{P}^{y +k}(B+k) = \mathbb{P}^x (B)$.

Let us now prove that this is the greatest bisimulation. Let $x,y \in \mathbb{R}$. Here we are going to assume that $a>0$, the case $a<0$ works exactly the same but considering $\lfloor x \rfloor$ instead of $\lceil x \rceil$. Define $z = \lceil x \rceil - x$. For any $s \in \mathbb{R}$, let us denote $\omega_s$ the trajectory $\omega_s (t) = s + at$.
Note that $\omega_x (z / a) = x + z = \lceil x \rceil \in \mathbb{Z}$ and $\omega_y (z / a) = y + z = y - x+ \lceil x \rceil$. This means that $\omega_y (z / a) \in \mathbb{Z}$ if and only if $y - x \in \mathbb{Z}$. Finally, define $B = \{ \omega ~|~ \omega (z / a) \in \mathbb{Z}\}$. This set is $obs$-closed and measurable, but we have proven that $\mathbb{P}^x(B) = \mathbb{P}^y(B)$ if and only if $y-x \in \mathbb{Z}$.
\end{proof}

\subsubsection{Fork}
\label{subsubsec:fork}

One could think that since trajectories are already included in the initiation condition \ref{init2}, the additional induction condition is not necessary. However, this example illustrates the crucial role of the induction condition in the definition of bisimulation. It is an extension of the standard ``vending machine'' example in discrete time to our continuous-time setting and it shows that even the condition \ref{ind1} are enough to discriminate between states that \ref{init2} cannot distinguish.

Let us consider the following state space:\\
\includegraphics[scale=0.3]{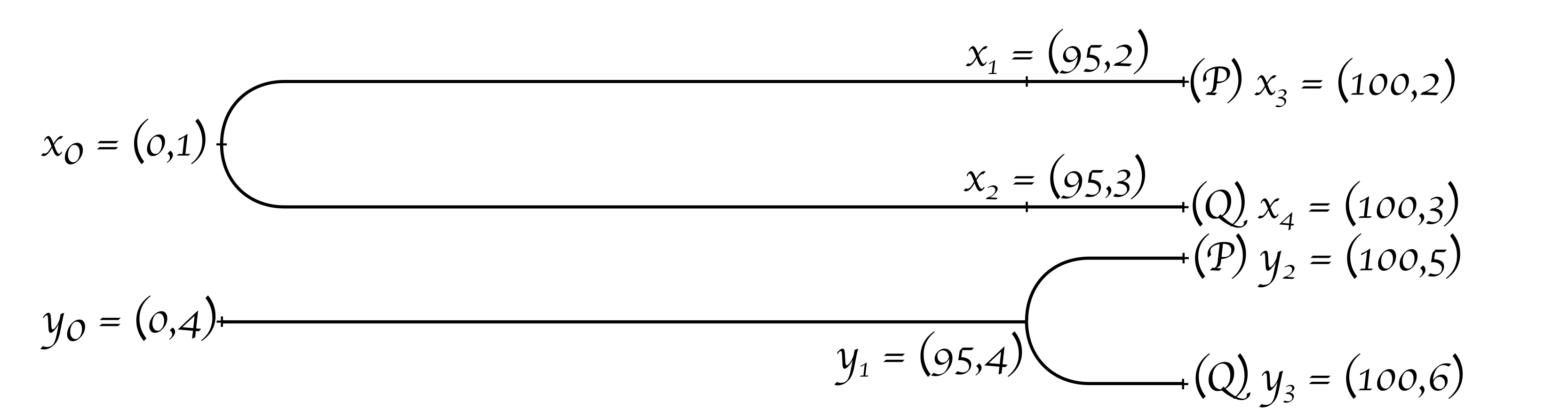}\\
There are two atomic propositions (denoted $P$ and $Q$ on the diagram), that are satisfied by the final state of some of the branches. The process is a drift at a constant speed to the right. When it reaches a fork, it moves to either branch with probability $1/2$ (and stops when he hits an atomic proposition).

The state space is made explicit as:
\[ \{(0,1)\} \uplus \left( (0,100] \times \{2,3\} \right) \uplus \left( [0,95] \times \{ 4\}\right) \uplus \left( (95,100] \times \{5,6\} \right) \]

We are going to use the following states:
\begin{align*}
x_0 & = (0,1) & y_0 & = (0,4)\\
x_1 & = (95,2) & y_1 & = (95,4)\\
x_2 & = (95,3)\\
x_3 & = (100,2) & y_2 & = (100,5)\\
x_4 & = (100,3) & y_3 & = (100,6)
\end{align*}
There are two atomic propositions $P$ and $Q$ and $obs(x_3) = obs(y_2) = (1,0)$, $obs(x_4) = obs(y_3) = (0,1)$ and $obs(z) = (0,0)$ otherwise.

The kernel is defined as follows for $t \leq 100$:
\begin{align*}
P_t(x_0, \{ (t,j) \}) & = \frac{1}{2} \quad \text{for } j = 2,3, ~ t \neq 0\\
P_t((x,j), (x+t, j)) & = 1 \quad \text{for all $j$ and for all $t$ such that $(x+t,j)$ exists}\\
P_t((y,4), (y+t, j)) & = \frac{1}{2} \quad \text{for $j = 5,6$ and for all $t$ such that $(y+t,j)$ exists}\\
P_t((100, j), (100, j)) & = 1 \quad \text{for } j = 2,3,5,6
\end{align*}

The basic claim is that the states $x_0$ and $y_0$ cannot be bisimilar since states $x_1, x_2, y_1$ cannot be bisimilar either. This is where condition \ref{ind2} is really important since the two states $x_0$ and $y_0$ have similar traces as they both satisfy the condition \ref{init2}.

\begin{prop}
The two states $x_0$ and $y_0$ satisfy the condition \ref{init2}.
\end{prop}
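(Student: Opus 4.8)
The plan is to reduce both laws $\mathbb{P}^{x_0}$ and $\mathbb{P}^{y_0}$ to explicit two-point measures and then to compare them only through the composite $obs\circ\omega$. First I would determine which trajectories carry positive mass. Since in each system the dynamics is a deterministic drift punctuated by a single binary fork, the finite-dimensional distributions concentrate, at every tuple of times, on two deterministic branch-paths of weight $\tfrac12$ each; by the uniqueness in the Kolmogorov extension theorem used to define $\mathbb{P}^{x}$, this forces
\[
  \mathbb{P}^{x_0} = \tfrac12\,\delta_{\omega^x_P} + \tfrac12\,\delta_{\omega^x_Q},
  \qquad
  \mathbb{P}^{y_0} = \tfrac12\,\delta_{\omega^y_P} + \tfrac12\,\delta_{\omega^y_Q},
\]
where $\omega^x_P$ is the path through branch $2$ absorbed at $x_3$, $\omega^x_Q$ the path through branch $3$ absorbed at $x_4$, $\omega^y_P$ the path drifting along branch $4$ then branch $5$ to $y_2$, and $\omega^y_Q$ the path along branch $4$ then branch $6$ to $y_3$.

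Next I would evaluate $obs$ along these four trajectories. Every state visited strictly before time $100$ ($x_0$, the interior branch points, $y_1$, and so on) carries $obs=(0,0)$, and only the terminal absorbing state differs. Thus $obs\circ\omega^x_P$ and $obs\circ\omega^y_P$ are both the function $\xi_P$ equal to $(0,0)$ on $[0,100)$ and to $(1,0)$ on $[100,\infty)$, while $obs\circ\omega^x_Q$ and $obs\circ\omega^y_Q$ are both the function $\xi_Q$ equal to $(0,0)$ on $[0,100)$ and to $(0,1)$ afterwards. The essential point is that the \emph{timing} of the fork---at time $0$ for $x_0$ but only at time $95$ for $y_0$---is erased by $obs$, because the observation stays $(0,0)$ until time $100$ in both systems.

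To conclude, let $B\in\mathcal{G}$ be $obs$-closed. From the two-point form of the laws,
\[
  \mathbb{P}^{x_0}(B)=\tfrac12\,\mathbf{1}[\omega^x_P\in B]+\tfrac12\,\mathbf{1}[\omega^x_Q\in B],
  \qquad
  \mathbb{P}^{y_0}(B)=\tfrac12\,\mathbf{1}[\omega^y_P\in B]+\tfrac12\,\mathbf{1}[\omega^y_Q\in B].
\]
Since $obs\circ\omega^x_P=obs\circ\omega^y_P$ and $B$ is $obs$-closed, we have $\omega^x_P\in B$ iff $\omega^y_P\in B$, and likewise for the $Q$-paths; hence the two right-hand sides are equal and $\mathbb{P}^{x_0}(B)=\mathbb{P}^{y_0}(B)$, which is exactly condition \ref{init2}. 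Reducing to atomic measures is what lets me sidestep any question about the measurability of the observation fibres: I only ever test membership of four fixed trajectories.

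The hard part is the first step---pinning down $\mathbb{P}^{x_0}$ and $\mathbb{P}^{y_0}$ as genuine two-point measures. This requires returning to the finite-dimensional distributions prescribed by the kernels $P_t$ above, checking that at each finite list of times all the mass sits on the two deterministic branch-paths, and invoking uniqueness of the Kolmogorov extension to rule out any further spreading of mass; once that is in place, the remaining verification of \ref{init2} is a direct evaluation of $obs$ along explicit trajectories, and the same timing phenomenon is precisely what will make the induction condition \ref{ind2} fail for this pair.
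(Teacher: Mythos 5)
Your proposal is correct and follows essentially the same route as the paper's proof: identify the two trajectories of probability $\tfrac12$ issuing from each of $x_0$ and $y_0$, observe that the fork-timing difference is invisible to $obs$ so the observation traces match pairwise, and then use $obs$-closedness of $B$ to equate the two two-point measures. The only difference is that you are more explicit about justifying the atomic form of $\mathbb{P}^{x_0}$ and $\mathbb{P}^{y_0}$ via the finite-dimensional distributions and Kolmogorov uniqueness, a step the paper simply asserts.
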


\begin{proof}
From state $x_0$, there are only two trajectories possible, each with probability $1/2$:
\[ \omega_1^x (t) = \begin{cases}
x_0 ~~\text{ if } t = 0\\
(t,2) ~~ \text{ otherwise}
\end{cases}
\qquad
\omega_2^x (t) = \begin{cases}
x_0 ~~\text{ if } t = 0\\
(t,3) ~~ \text{ otherwise}
\end{cases} \]

From state $y_0$, there are only two trajectories possible, each with probability $1/2$:
\[ \omega_1^y (t) = \begin{cases}
(t,4) ~~\text{ if } t \leq 95\\
(t,5) ~~ \text{ otherwise}
\end{cases}
\qquad
\omega_2^y (t) = \begin{cases}
(t,4) ~~\text{ if } t \leq 95\\
(t,6) ~~ \text{ otherwise}
\end{cases} \]

However, for all time $t \geq 0$, $obs(\omega^x_1(t)) = obs(\omega^y_1(t))$ and $obs(\omega^x_2(t)) = obs(\omega^y_2(t))$, which means that if a set $B$ is $obs$-closed, then $\omega^x_1 \in B$ (resp. $\omega^x_2 \in B$) if and only if $\omega^y_1 \in B$ (resp. $\omega^y_2 \in B$).

Putting all this together, we get that for any $obs$-closed set $B$:
\begin{align*}
\mathbb{P}^x(B)
& = \frac{1}{2} \delta_B (\omega^x_1) + \frac{1}{2} \delta_B (\omega^x_2)\\
& = \frac{1}{2} \delta_B (\omega^y_1) + \frac{1}{2} \delta_B (\omega^y_2)\\
& = \mathbb{P}^y(B).
\end{align*}
\end{proof}

\begin{prop}
The states $x_0$ and $y_0$ cannot be bisimilar.
\end{prop}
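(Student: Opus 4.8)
The plan is to argue by contradiction: suppose there is a bisimulation $R$ with $x_0\,R\,y_0$, and derive a contradiction by exhibiting a single $R$-closed measurable set $B\in\mathcal{G}$ on which $\mathbb{P}^{x_0}$ and $\mathbb{P}^{y_0}$ disagree, contradicting \ref{ind2}. The preceding proposition already shows that no $obs$-closed set separates $x_0$ and $y_0$, so $B$ must genuinely exploit $R$. The whole point is that $R$ can detect, at time $95$, the fact that the $x$-process has \emph{already} committed to reaching $P$ or $Q$, whereas the $y$-process has not yet forked.

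The key elementary computation is that the three states reached at time $95$ are pairwise non-bisimilar. Let $H_P = \{\omega \mid \exists t\ge 0,\ obs(\omega(t)) = (1,0)\}$ be the (evidently $obs$-closed) event that $P$ is eventually observed. From $x_1 = (95,2)$ the drift is deterministic and reaches $x_3$, where $P$ holds, so $\mathbb{P}^{x_1}(H_P) = 1$; from $x_2 = (95,3)$ the drift reaches $x_4$, where $Q$ holds, and never $P$, so $\mathbb{P}^{x_2}(H_P) = 0$; and from $y_1 = (95,4)$ the process still forks, so $\mathbb{P}^{y_1}(H_P) = \tfrac12$. Since $H_P$ is $obs$-closed, these three distinct values already witness, via \ref{init2}, that $x_1$, $x_2$ and $y_1$ lie in three different bisimilarity classes.

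I would then take $B = \{\omega \mid \mathbb{P}^{\omega(95)}(H_P) = 1\} = X_{95}^{-1}(S)$, where $S = \{s \in E \mid \mathbb{P}^{s}(H_P) = 1\}$ is the set of states ``committed to $P$''. This $S$ is an explicit, manifestly measurable set (the points from which $P$ is reached almost surely, namely branches $2$ and $5$), and more generally $s \mapsto \mathbb{P}^{s}(H_P)$ is measurable, so $B \in \mathcal{G}$. To see that $B$ is $R$-closed, suppose $\omega \in B$ and $\omega(t)\,R\,\omega'(t)$ for all $t$; then $\omega(95)\,R\,\omega'(95)$, and since $R$ is a bisimulation and $H_P$ is $obs$-closed, \ref{init2} gives $\mathbb{P}^{\omega'(95)}(H_P) = \mathbb{P}^{\omega(95)}(H_P) = 1$, so $\omega' \in B$. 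Finally, under $\mathbb{P}^{x_0}$ the time-$95$ state is $x_1$ or $x_2$ with probability $\tfrac12$ each, so $\mathbb{P}^{x_0}(B) = \tfrac12$, whereas under $\mathbb{P}^{y_0}$ it is always $y_1 \notin S$, so $\mathbb{P}^{y_0}(B) = 0$. As $B$ is $R$-closed and $x_0\,R\,y_0$, condition \ref{ind2} forces $\tfrac12 = \mathbb{P}^{x_0}(B) = \mathbb{P}^{y_0}(B) = 0$, the desired contradiction.

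The main obstacle is not the arithmetic but the construction of a set that is simultaneously measurable and $R$-closed for an \emph{arbitrary} bisimulation $R$ containing $(x_0,y_0)$. A naive attempt, such as pulling back an $R$-equivalence class at time $95$, runs into the measurability of $R$-classes, which need not be controllable; the trick is to replace the class of $x_1$ by the intrinsically defined, manifestly measurable proxy $S = \{\mathbb{P}^{\cdot}(H_P) = 1\}$, and then to use the bisimulation property itself (\ref{init2}) to show that this proxy set is automatically $R$-closed. The only other point to check carefully is that the time-$95$ marginals of $\mathbb{P}^{x_0}$ and $\mathbb{P}^{y_0}$ are as claimed, which is immediate from the explicit description of the two trajectories leaving each state given in the previous proposition.
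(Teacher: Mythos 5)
Your proof is correct, but it takes a genuinely different route from the paper on the decisive step. The paper also works at time $95$, but it proceeds in three stages: it first shows $x_1$, $x_2$, $y_1$ are pairwise non-bisimilar (using the $obs$-closed set $\{\omega \mid obs(\omega(5)) = (1,0)\}$), then argues that $x_1$, $x_2$, $y_1$ are bisimilar \emph{only to themselves}, so that the singleton $\{x_1\}$ is $R$-closed as a subset of the state space for \emph{any} bisimulation $R$, and finally applies the kernel-level condition \ref{ind1} (which follows from \ref{ind2}) to get $P_{95}(x_0,\{x_1\}) = 1/2 \neq 0 = P_{95}(y_0,\{x_1\})$. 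The fragile part of that argument is precisely the middle stage: certifying that $x_1$ is related to no other state requires a scan over the whole state space (and the set $B'$ the paper uses for this does not quite do the job as stated, since e.g.\ $z=(96,2)$ also satisfies $\mathbb{P}^z(B')=1$). Your proxy-set trick eliminates that stage entirely: the set $S = \{s \mid \mathbb{P}^s(H_P)=1\}$ is defined intrinsically, its measurability is explicit, and its $R$-saturation is automatic from \ref{init2} applied to the single $obs$-closed set $H_P$ — no classification of bisimilarity classes, and no worry about measurability of equivalence classes. This is the main thing your approach buys: it isolates exactly what is needed (an $R$-saturated measurable set separating the time-$95$ marginals) and produces it uniformly in $R$. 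What the paper's approach buys in exchange is concreteness: the separating set is a singleton and the final contradiction is a one-line kernel computation.

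One point you should tighten: condition \ref{init2} quantifies over $obs$-closed sets \emph{in} $\mathcal{G}$, so you must check $H_P \in \mathcal{G}$; the existential quantifier over uncountably many $t$ is not innocent in general. Here it is harmless because the states where $P$ holds are absorbing, so
\begin{equation*}
H_P \;=\; \bigcup_{n \in \mathbb{N}} \, X_n^{-1}\bigl( obs^{-1}(\{(1,0)\}) \bigr) \in \mathcal{G}.
\end{equation*}
Alternatively, you could run your whole argument with the paper's manifestly measurable set $\{\omega \mid obs(\omega(5)) = (1,0)\}$ in place of $H_P$; nothing else in your proof would change.
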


\begin{proof}
First, the states $x_1, x_2$ and $y_1$ cannot be bisimilar. Indeed, consider the set of trajectories $B = \{ \omega ~|~ obs(\omega (5)) = (1,0)\}$. This set is $obs$-closed (and measurable) but we have that $\mathbb{P}^{x_1} (B) = 1$, $\mathbb{P}^{x_2}(B) = 0$ and $\mathbb{P}^{x_1} (B) = 1/2$.

Second, the states $x_1$, $x_2$ and $y_1$ can only be bisimilar to themselves. Indeed, we can consider the set $B' = \{ \omega ~|~ obs(\omega (5)) = (1,0) ~\text{or}~ (0,1)\}$. For all $z \neq x_1, x_2$ or $y_1$, $\mathbb{P}^z(B') = 0$, whereas for $z = x_1, x_2$ or $y_1$, $\mathbb{P}^z(B') = 1$.

Third, we can consider the set $\{ x_1\}$. We have just shown that for any bisimulation $R$, the set $\{ x_1\}$ is $R$-closed. But then $P_t(x_0, \{ x_1\}) = 1/2$ whereas $P_t(y_0, \{ x_1\}) = 0$.
\end{proof}

\subsection{Examples based on Brownian motion}
\label{subsec:example_BM}

\subsubsection{Standard Brownian Motion}

\paragraph*{With zero distinguished: }
Let us consider the case when there is a single atomic proposition and $obs (x) = 1$ if and only if $x = 0$.

\begin{prop}
Two states $x$ and $y$ are bisimilar if and only if $|x| = |y|$.
\end{prop}

\begin{proof}
First let us prove that $R = \{ (x,y) ~|~ |x| = |y| \}$ is a bisimulation.

Consider $x~R~y$, i.e. $|x| = |y|$. This means that $x = 0$ if and only if $y = 0$. In other terms, $obs(x) = 1$ if and only if $obs(y) = 1$ and hence $obs(x) = obs(y)$.

Let now $B$ be an $R$-closed measurable set of trajectories. This means that $\omega \in B$ if and only if $- \omega \in B$ since $|\omega (t)| = |- \omega (t)|$ for all time $t \geq 0$. And therefore $\mathbb{P}^x(B) = \mathbb{P}^{-x}(-B) = \mathbb{P}^{-x}(B)$ where $-B := \{ t \mapsto - \omega(t) ~|~ \omega \in B \}$.

Let us now show that this is the greatest bisimulation. It can easily be seen that $0$ and $x \neq 0$ since $obs(0) \neq obs(x)$.

Let us now consider two different states $x$ and $y$. We can define the set $B_t = \{ \omega ~|~ \exists s < t ~ \omega (s) = 0\}$. This set is $obs$-closed. It can also be expressed as $B_t = T_0 ^{-1} ([0, t))$ where $T_0$ is the hitting time for Brownian motion and we know that for any state $z$,
\[ \mathbb{P}^z(B_t) = \sqrt{\frac{2}{\pi}} \int_{|z| \sqrt t} ^\infty e^{-s^2 / 2} ds \]
If $|x| \neq |y|$, it is impossible to have that $\mathbb{P}^x(B_t) = \mathbb{P}^y(B_t)$. This proves that no equivalence strictly bigger than $R$ may satisfy \ref{init2}.
\end{proof}

\paragraph*{With all integers distinguished: }
Let us consider the case when there is a single atomic proposition and $obs (x) = 1$ if and only if $x \in \mathbb{Z}$.

\begin{prop}
Two states $x$ and $y$ are bisimilar if and only if $x - \lfloor x \rfloor = y - \lfloor y \rfloor$ or $\lceil y \rceil - y$ .
\end{prop}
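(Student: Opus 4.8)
The plan is to follow the standard outline used above: exhibit the equivalence relation, show it is a bisimulation, then use condition \ref{init2} to show it is the greatest one. I would define
\[ R = \{(x,y) \mid x - y \in \mathbb{Z} \text{ or } x + y \in \mathbb{Z}\}, \]
which is exactly the orbit equivalence of the infinite dihedral group acting on $\mathbb{R}$ by the translations $z \mapsto z + k$ and reflections $z \mapsto k - z$ (for $k \in \mathbb{Z}$), i.e.\ the isometries of $\mathbb{R}$ preserving the marked set $\mathbb{Z}$. One checks directly that this is the condition $x - \lfloor x\rfloor = y - \lfloor y\rfloor$ or $x - \lfloor x\rfloor = \lceil y\rceil - y$ of the statement, and that $R$ is an equivalence relation: transitivity holds because a composite of two dihedral symmetries is again one, which amounts to checking the four sign combinations of $x\pm y, y\pm z \in \mathbb{Z}$, each immediate.

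For the initiation condition, if $xRy$ then $x \in \mathbb{Z} \iff y \in \mathbb{Z}$ (immediate in both the $x-y\in\mathbb{Z}$ and the $x+y\in\mathbb{Z}$ case), so $obs(x) = obs(y)$. For \ref{ind2}, let $B$ be $R$-closed with $xRy$, and split into two cases. If $y = x + k$ with $k \in \mathbb{Z}$: since $\omega(t)$ and $\omega(t)+k$ are always $R$-related, $R$-closure forces $B + k = B$, and translation invariance of Brownian motion gives $\mathbb{P}^y(B) = \mathbb{P}^{x+k}(B+k) = \mathbb{P}^x(B)$, exactly as in the deterministic-drift computation above. If $y = k - x$: since $\omega(t)$ and $k - \omega(t)$ are always $R$-related (their sum is $k \in \mathbb{Z}$), $R$-closure forces $\{t \mapsto k - \omega(t) \mid \omega \in B\} = B$, and combining the reflection invariance $\mathbb{P}^{-z}(-B)=\mathbb{P}^z(B)$ (already used in the previous proposition) with translation invariance yields $\mathbb{P}^{k-x}(B) = \mathbb{P}^x(B)$. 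Hence $R$ is a bisimulation.

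For maximality I would use the first hitting time of the marked set, $T_{\mathbb{Z}}(\omega) = \inf\{t \geq 0 \mid obs(\omega(t)) = 1\}$, which is determined by $obs\circ\omega$, so every set $\{T_{\mathbb{Z}} \leq t\}$ is $obs$-closed. For non-integer $x$ the process must hit one of the two nearest integers $\lfloor x\rfloor, \lceil x\rceil$ first, so $T_{\mathbb{Z}}$ is the exit time of that unit interval; by translation invariance its law equals the exit law of $(0,1)$ from position $x - \lfloor x\rfloor$. A short computation gives $\mathbb{E}^x[T_{\mathbb{Z}}] = (x-\lfloor x\rfloor)(1 - (x-\lfloor x\rfloor))$. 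This function on $[0,1)$ is strictly concave and symmetric about $1/2$, so its level sets are exactly the pairs $\{p, 1-p\}$, which is precisely the reflection identification inside $R$. Consequently, if $x,y$ are not $R$-related and both non-integer, their hitting-time means differ, hence $\mathbb{P}^x(T_{\mathbb{Z}} \leq t) \neq \mathbb{P}^y(T_{\mathbb{Z}} \leq t)$ for some $t$, violating \ref{init2}; the mixed case $x\in\mathbb{Z}$, $y\notin\mathbb{Z}$ is separated directly by $obs$ via $\{\omega \mid obs(\omega(0)) = 1\}$ as in the Lemma.

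The main obstacle is the maximality direction: one must argue that the only information the observations leak about the starting point is the law of $T_{\mathbb{Z}}$ up to the interval-reflection symmetry, and that this law identifies the fractional part exactly up to $p \mapsto 1-p$, matching $R$ on the nose. The clean route is the mean-exit-time computation together with strict concavity, which pins the level sets down to $\{p,1-p\}$; the only delicate point is to note that equality of all the values $\mathbb{P}^{\cdot}(T_{\mathbb{Z}} \leq t)$ forces equal means, so that distinct means certify non-bisimilarity.
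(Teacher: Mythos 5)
Your proof is correct, and the bisimulation half is essentially the paper's own argument: your relation ($x-y\in\mathbb{Z}$ or $x+y\in\mathbb{Z}$) coincides with the relation in the statement, and both you and the paper verify the induction condition via translation invariance in the first case and reflection-plus-translation invariance in the second. The genuine difference is in the maximality direction. The paper, like you, reduces via the $obs$-closed sets $B_t = \{\omega \mid \exists s \in [0,t),\ \omega(s) \in \mathbb{Z}\}$ to the law of the exit time $T_0 \wedge T_1$ of the unit interval, but it then computes the full Laplace transform
\[
\mathbb{E}^{z}\bigl[\exp(-\lambda (T_0\wedge T_1))\bigr] \;=\; \frac{\cosh\bigl(\bigl(z-\tfrac12\bigr)\sqrt{2\lambda}\bigr)}{\cosh\bigl(\tfrac12\sqrt{2\lambda}\bigr)}
\]
and concludes using the injectivity of $\cosh$ up to sign. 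You instead use only the first moment, $\mathbb{E}^{p}[T_0\wedge T_1]=p(1-p)$, whose level sets are exactly the pairs $\{p,1-p\}$ (directly: $p(1-p)=q(1-q)$ iff $(p-q)(1-p-q)=0$), together with the observation that equality of $\mathbb{P}^{\cdot}(T_{\mathbb{Z}}\le t)$ for all $t$ forces equality of the finite means, so distinct means certify non-bisimilarity. This is sound and strictly more elementary; the one detail you should still record, as the paper does, is measurability of $\{T_{\mathbb{Z}}\le t\}$, e.g.\ by writing it as a countable union of sets of the form $T_n^{-1}([0,t])$ over $n\in\mathbb{Z}$. A caveat worth knowing: your shortcut exploits the symmetry of this particular example. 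In the paper's subsequent drift examples the mean exit time from $(0,1)$ still vanishes at both endpoints and therefore still has two-point level sets, yet the correct bisimulation there identifies nothing beyond equal fractional parts; means are then too coarse, and the Laplace-transform machinery your argument bypasses becomes genuinely necessary. So each approach buys something: yours is shorter and cleaner here, while the paper's method is the one that survives the loss of symmetry.
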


\begin{proof}
First let us prove that $R = \{ (x,y) ~|~ x - \lfloor x \rfloor = y - \lfloor y \rfloor \text{ or } \lceil y \rceil - y\}$ is indeed a bisimulation. This relies on the invariance under translation and symmetry of the problem.

Let us consider $x~R~y$. There are two cases to consider:
\begin{itemize}
\item $x - \lfloor x \rfloor = y - \lfloor y \rfloor$. Let $k = \lfloor y \rfloor - \lfloor x \rfloor \in \mathbb{Z}$, note that $x + k = y$. This means that $x \in \mathbb{Z}$ if and only if $y \in \mathbb{Z}$ and therefore $obs(x) = obs(y)$. Now consider an $R$-closed measurable set $B$ of trajectories. Since it is $R$-closed, we have that $B+k := \{ t \mapsto \omega (t) + k ~|~ \omega \in B\} = B$, which means that
\[ \mathbb{P}^x (B) = \mathbb{P}^{x + k} (B+k) = \mathbb{P}^y (B) \]
\item $x - \lfloor x \rfloor = \lceil y \rceil - y$. Using previous case, we can assume that $x$ and $y$ are in $[0,1]$ and $x = 1 - y$. We have that $x \in \{ 0,1\}$ if and only if $y \in \{ 0,1\}$ and therefore $obs(x) = obs(y)$. Now consider an $R$-closed measurable set $B$ of trajectories. Since it is $R$-closed, we have that $1-B := \{ t \mapsto 1 - \omega (t) ~|~ \omega \in B\} = B$, which means that
\[ \mathbb{P}^x (B) = \mathbb{P}^{1-x} (1-B) = \mathbb{P}^y (B) \]
\end{itemize}

Let us show that it is the greatest such bisimulation. Consider $x \notin \mathbb{Z}$ and $y \in \mathbb{Z}$. We have that $obs(x) \neq obs(y)$ and therefore these two states cannot be bisimilar.

Let us now consider $x,y \notin \mathbb{Z}$ such that $x Ry$ and the sets $B_t = \{ \omega ~|~ \exists s \in [0,t) ~ \omega (s) \in \mathbb{Z}\}$. These sets are $obs$-closed. Furthermore, they can also be expressed as:
\[B_t = \bigcup_{n \in \mathbb{N}} \{ \omega ~|~ T_n(\omega) <t\} = \bigcup_{n \in \mathbb{N}} T_n^{-1}([0,t)) \]
This proves that the sets $B_t$ are measurable. Let us compute $\mathbb{P}^z(B_t)$ for any $z \in \mathbb{R}$:
\begin{align*}
\mathbb{P}^z (B_t)
& = \mathbb{P}^z \left(\left(T_{\lfloor x \rfloor} \wedge T_{\lceil x \rceil}\right)^{-1} \left([0,t)\right)\right)\\
& = \mathbb{P}^{z- \lfloor z \rfloor} \left(\left(T_0 \wedge T_1\right)^{-1} \left([0,t)\right)\right)\\
& = \int_0^t \mathbb{P}^{z- \lfloor z \rfloor} \left(\left(T_0 \wedge T_1\right) \in ds \right)
\end{align*}
Since this is true for all $t \geq 0$ and all $z \in \mathbb{R}$, we have that $\mathbb{P}^{x- \lfloor x \rfloor} \left(\left(T_0 \wedge T_1\right) \in ds \right) = \mathbb{P}^{y- \lfloor y \rfloor} \left(\left(T_0 \wedge T_1\right) \in ds \right)$ and therefore the following Laplace transforms are equal:
\[ \mathbb{E}^{x- \lfloor x \rfloor} [\exp (- \lambda (T_0 \wedge T_1))] = \mathbb{E}^{y- \lfloor y \rfloor} [\exp (- \lambda (T_0 \wedge T_1))] \]
Using~\cite{Karatzas12}, we have that
\[ \mathbb{E}^{z} [\exp (- \lambda (T_0 \wedge T_1))] = \frac{\cosh \left( \left( z - \frac{1}{2} \right) \sqrt{2 \lambda} \right)}{\cosh \left( \frac{1}{2}\sqrt{2 \lambda} \right)} \]
Therefore, we know that
\[ \cosh \left(  \left( x- \lfloor x \rfloor - \frac{1}{2} \right) \sqrt{2 \lambda} \right) = \cosh \left(  \left( y- \lfloor y \rfloor - \frac{1}{2} \right) \sqrt{2 \lambda} \right) \]
Using simple properties of $\cosh$, we get that either $x- \lfloor x \rfloor - \frac{1}{2} = y- \lfloor y \rfloor - \frac{1}{2}$ (i.e. $x- \lfloor x \rfloor = y- \lfloor y \rfloor$) or $x- \lfloor x \rfloor - \frac{1}{2} = \frac{1}{2} + \lfloor y \rfloor - y$ (i.e. $x- \lfloor x \rfloor = 1 + \lfloor y \rfloor - y = \lceil y \rceil -y$). This proves that no equivalence strictly bigger than $R$ may satisfy \ref{init2}.
\end{proof}

\paragraph*{With an interval distinguished: }
Let us consider the case when there is a single atomic proposition and $obs (x) = 1$ if and only if $x \in [-1,1]$.

\begin{prop}
Two states $x$ and $y$ are bisimilar if and only if $|x| = |y|$.
\end{prop}

\begin{proof}
First let us prove that $R = \{ (x,y) ~|~ |x| = |y| \}$ is a bisimulation.

Consider $x ~R~ y$, i.e. $|x| = |y|$. Clearly, $x \in [-1,1]$ if and only if $y \in [-1,1]$ and therefore $obs(x) = obs(y)$. Let us now look at the induction condition. Consider an $R$-closed measurable set $B$ of trajectories. This means that $-B := \{ t \mapsto - \omega (t) ~|~ \omega \in B\} = B$ and therefore $\mathbb{P}^x (B) = \mathbb{P}^y (B)$.

Let us now prove that this is the greatest bisimulation.

For $x \in [-1, 1]$ and $y \notin [-1,1]$, clearly $obs(x) \neq obs(y)$ which means that $x$ and $y$ cannot be bisimilar.

Let $x,y \notin [-1, 1]$. Let us define the sets $B_t = \{ \omega ~|~ \exists s \in [0,t) ~ \omega (s) \in [-1,1]\}$. This set is $obs$-closed, however, for $z > 1$, we have that
\[ \mathbb{P}^z (B_t) = \mathbb{P}^{|z|} (B_t) = \mathbb{P}^z (T_1^{-1} ([0,t))) = \sqrt{\frac{2}{\pi}} \int_{(1 - |z|) \sqrt{t}}^\infty e^{-s^2 / 2}ds \]
Therefore if there is a bisimulation $R$ such that $x R y$, in particular we have that $\mathbb{P}^x (B_t) = \mathbb{P}^y (B_t)$ and hence $|x| = |y|$.

Let $x,y \in [-1, 1]$ such that $x Ry$. Define $B_t = \{ \omega ~|~ \exists s \in [0,t) ~ \omega (s) \notin [-1,1]\}$. Note that
\[ \mathbb{P}^z(B_t) = \mathbb{P}^{z+1} (T_0 \wedge T_2 < t) = \int_0^t \mathbb{P}^{z+1} (T_0 \wedge T_2 \in ds)\]
Since we have that $\mathbb{P}^x(B_t) = \mathbb{P}^y(B_t)$ for all $t \geq 0$, we get that $\mathbb{P}^{x+1} (T_0 \wedge T_2 \in ds) = \mathbb{P}^{y+1} (T_0 \wedge T_2 \in ds)$ and therefore the corresponding Laplace transforms are equal:
\[ \mathbb{E}^{x+1} [\exp (- \lambda (T_0 \wedge T_2))] = \mathbb{E}^{y+1} [\exp (- \lambda (T_0 \wedge T_2))] \]
Using~\cite{Karatzas12}, we have that
\[ \mathbb{E}^{z} [\exp (- \lambda (T_0 \wedge T_2))] = \frac{\cosh \left( \left( z - 1 \right) \sqrt{2 \lambda} \right)}{\cosh \left( \sqrt{2 \lambda} \right)} \]
Therefore, we know that
\[ \cosh \left(  x \sqrt{2 \lambda} \right) = \cosh \left(  y \sqrt{2 \lambda} \right) \]
Using simple properties of $\cosh$, we get that either $x = y$ or $x = -y$ which concludes the proof.
\end{proof}

\subsubsection{Brownian motion with drift}

Let us consider a Brownian process with drift: $W'_t = W_t + at$ (where $W_t$ is the standard Brownian motion and $a > 0$, note that the case $a<0$ is symmetric).

\paragraph*{With zero distinguished: }
Let us consider the case when there is a single atomic proposition and $obs (x) = 1$ if and only if $x = 0$.

\begin{prop}
Two states $x$ and $y$ are bisimilar if and only if $x = y$.
\end{prop}

\begin{proof}
As stated before, the equivalence where a state is only related to itself is a bisimlation.

Let us now show that this is the greatest bisimulation. Let us consider two different states $x$ and $y$. Similarly to what we did for the standard Brownian motion, we can rule out the case where $x = 0$ (and $y \neq 0$) or $y = 0$ (and $x \neq 0$) by simply looking at the function $obs$.

We can define the set $B_t = \{ \omega ~|~ \exists s < t ~ \omega (s) = 0\}$. This set is $obs$-closed. It can also be expressed as $B_t = T_0 ^{-1} ([0, t))$ where $T_0$ is the hitting time for Brownian motion and we know that for any state $z$,
\[ \mathbb{P}^z(T_0 \in ds) = \frac{|z|}{\sqrt{2 \pi s^3}} \exp \left(- \frac{(z + as)^2}{2s} \right) ds \]
Since we have that for all $t$, $\mathbb{P}^x (B_t) = \mathbb{P}^y (B_t)$, then we also have that for all $s \geq 0$,
\[ |x| \exp \left(- \frac{(x + as)^2}{2s} \right) = |y| \exp \left(- \frac{(y + as)^2}{2s} \right) \]
Since $x,y \neq 0$, we have that for all $s,t \geq 0$,
\[ -\frac{(y + as)^2}{2s} + \frac{(x + as)^2}{2s} = -\frac{(y + at)^2}{2t} + \frac{(x + at)^2}{2t} \]
which is equivalent to $(s-t)y^2 = (s-t) x^2$. This means that in that case $|x| = |y|$. Going back to the original expression, we have that for all $s \geq 0$, $-(x+as)^2 = -(y+as)^2$ and therefore $2asx = 2asy$. Since $a \neq 0$, we get that $x = y$ in order to have \ref{init2}.
\end{proof}

\paragraph*{With all integers distinguished: }
Let us consider the case when there is a single atomic proposition and $obs (x) = 1$ if and only if $x \in \mathbb{Z}$.

\begin{prop}
Two states $x$ and $y$ are bisimilar if and only if $x - \lfloor x \rfloor = y - \lfloor y \rfloor$.
\end{prop}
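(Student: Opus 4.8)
The plan is to follow the same two-step outline used for the previous examples: first exhibit $R = \{(x,y) \mid x - \lfloor x\rfloor = y - \lfloor y\rfloor\}$ as a bisimulation, then use condition \ref{init2} to show that nothing strictly larger can be a bisimulation. The first step is essentially identical to the standard Brownian motion case with integers distinguished, the only structural input being that Brownian motion with drift is still spatially homogeneous: if $x \mathrel{R} y$ and $k = \lfloor y\rfloor - \lfloor x\rfloor \in \mathbb{Z}$, then $y = x+k$, so $x \in \mathbb{Z}$ iff $y\in\mathbb{Z}$ and hence $obs(x)=obs(y)$ (condition \ref{init1}); and for any $R$-closed $B$ one checks $B + k = B$ (shifting a trajectory by the integer $k$ keeps it $R$-related to the original at every time), so translation invariance gives $\mathbb{P}^x(B) = \mathbb{P}^{x+k}(B+k) = \mathbb{P}^y(B)$, which is \ref{ind2}. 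The conceptual point, which I would stress, is that the drift destroys the reflection symmetry $z \mapsto 1-z$ available in the driftless case; this is exactly why the extra family of bisimilar pairs $x - \lfloor x\rfloor = \lceil y\rceil - y$ disappears here.

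For maximality, suppose $x \mathrel{R_{\max}} y$. The case $x\notin\mathbb{Z}$, $y\in\mathbb{Z}$ (or vice versa) is ruled out immediately by $obs$, so assume $x,y\notin\mathbb{Z}$. As in the earlier integer examples I would use the $obs$-closed measurable sets $B_t = \{\omega \mid \exists s\in[0,t)\ \omega(s)\in\mathbb{Z}\}$, which record the first time the trajectory meets $\mathbb{Z}$. By path-continuity the first hit of $\mathbb{Z}$ from a point in $(\lfloor z\rfloor, \lceil z\rceil)$ is the exit time of that unit interval, and spatial homogeneity reduces it to the exit time $T_0\wedge T_1$ of $(0,1)$ started from the fractional part, so $\mathbb{P}^z(B_t) = \mathbb{P}^{z-\lfloor z\rfloor}(T_0\wedge T_1 < t)$. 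Equality $\mathbb{P}^x(B_t)=\mathbb{P}^y(B_t)$ for all $t$ therefore forces equality of the laws of $T_0\wedge T_1$ under $\mathbb{P}^{x-\lfloor x\rfloor}$ and $\mathbb{P}^{y-\lfloor y\rfloor}$, hence equality of the Laplace transforms $u(z) := \mathbb{E}^z[\exp(-\lambda(T_0\wedge T_1))]$ at $z = x-\lfloor x\rfloor$ and $z = y-\lfloor y\rfloor$ for every $\lambda > 0$. The function $u$ solves $\tfrac12 u'' + a u' = \lambda u$ on $(0,1)$ with $u(0)=u(1)=1$, which (as in~\cite{Karatzas12}) gives
\[ u(z) = \frac{e^{-az}\sinh(\mu(1-z)) + e^{a(1-z)}\sinh(\mu z)}{\sinh\mu}, \qquad \mu = \sqrt{a^2+2\lambda}. \]
Unlike the symmetric $\cosh$ expression of the driftless case, this is not invariant under $z\mapsto 1-z$ once $a\neq 0$.

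The crux --- and the step I expect to be the main obstacle --- is turning ``equal for all $\lambda$'' into $z_1 = z_2$. The clean way is to clear the denominator and expand $2\sinh$ into exponentials, writing $2\sinh(\mu)\,u(z)$ as the finite sum
\[ e^{-az}e^{(1-z)\mu} + e^{a(1-z)}e^{z\mu} - e^{a(1-z)}e^{-z\mu} - e^{-az}e^{-(1-z)\mu}, \]
a linear combination of exponentials $e^{c\mu}$ whose rates $c\in\{\pm z, \pm(1-z)\}$ and amplitudes depend on $z$. Since $\lambda\mapsto\mu$ sweeps an interval, distinct exponentials are linearly independent, so $u(z_1)=u(z_2)$ for all $\lambda$ forces the rate-multisets to coincide: either $z_1=z_2$ (done) or $z_1 = 1-z_2$. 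In the latter case the amplitude of the exponent $e^{(1-z_1)\mu}$ is $e^{-az_1}$ on one side but $e^{az_1}$ on the other, and these agree only if $az_1 = 0$; since $a>0$ and $z_1\in(0,1)$ this is impossible. Hence $z_1=z_2$, i.e.\ $x-\lfloor x\rfloor = y-\lfloor y\rfloor$, which completes the proof. The degenerate sub-case $z=\tfrac12$, where two rates collide, only strengthens the conclusion and can be dispatched directly, so it presents no real difficulty.
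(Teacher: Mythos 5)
Your proposal is correct, and its overall skeleton coincides with the paper's: the same relation $R$ verified via translation invariance (with the same observation that the drift destroys the reflection symmetry), the same $obs$-closed test sets $B_t = \{\omega \mid \exists s \in [0,t)\ \omega(s)\in\mathbb{Z}\}$, the same reduction by spatial homogeneity to the exit time $T_0\wedge T_1$ of the unit interval, and the same closed-form Laplace transform. Where you genuinely diverge is in the key injectivity lemma, i.e.\ showing that $u_{z_1}(\lambda)=u_{z_2}(\lambda)$ for all $\lambda$ forces $z_1=z_2$. The paper argues analytically: it first extracts $\max\{z,1-z\}$ from the asymptotics of $\frac{1}{k}\ln g_z(k)$ as $k\to\infty$ (giving $z_1=z_2$ or $z_1=1-z_2$), and then kills the case $z_1=1-z_2$ by rearranging the identity into $\frac{\sinh(k(1-z))}{\sinh(kz)}=\frac{\sinh(a(1-z))}{\sinh(az)}$ and observing that the left side tends to $0$ or $+\infty$ while the right side is a fixed finite nonzero constant. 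You instead clear the denominator, expand everything into the four exponentials $e^{c\mu}$ with rates $c\in\{\pm z,\pm(1-z)\}$, and invoke linear independence of distinct exponentials on an interval to match rates and then coefficients; the case $z_1=1-z_2$ dies because the coefficient of $e^{(1-z_1)\mu}$ would have to be simultaneously $e^{-az_1}$ and $e^{az_1}$. Your route is more algebraic and arguably cleaner, since a single linear-independence principle replaces two separate limit computations, and the coefficient clash isolates exactly where $a\neq 0$ enters; the paper's asymptotic route has the mild advantage of not requiring you to track possible rate collisions (your degenerate case $z=\tfrac12$, which you dismiss a bit briskly but which does check out: if $z_1=\tfrac12$ the maximal rate of one side is $\tfrac12$, so the other side cannot carry a rate exceeding $\tfrac12$, forcing $z_2=\tfrac12$ as well). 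Both arguments are sound; yours is an acceptable, genuinely different proof of the same lemma.
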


\begin{proof}
First let us prove that $R = \{ (x,y) ~|~ x - \lfloor x \rfloor = y - \lfloor y
\rfloor\}$ is indeed a bisimulation. This relies on the invariance under
translation of the problem. Note that compared to standard Brownian motion with all integers distinguished, the drift ``removes'' the invariance under symmetry.

Indeed, let us consider $x~R~y$. Let $k = y - x \in \mathbb{Z}$ (i.e. $x + k = y$). Clearly $obs(x) = obs(y)$ as in the standard Brownian motion case.
Let us consider an $R$-closed set $B$ ($B + k = B$ as in the standard case). We have that:
\[ \mathbb{P}^x (B) = \mathbb{P}^{x + k} (B+k) = \mathbb{P}^y (B) \]

Let us show that it is the greatest such bisimulation. Similarly to the standard case, $x \notin \mathbb{Z}$ and $y \in \mathbb{Z}$ cannot be bisimilar since they don't have the same observables.

Let us now consider $x,y \notin \mathbb{Z}$ and the sets $B_t = \{ \omega ~|~ \exists s \in [0,t) ~ \omega (s) \in \mathbb{Z}\}$ such that for all $t$, $\mathbb{P}^x(B_t) = \mathbb{P}^y(B_t) $. Similarly to what we did in the case of standard BM, for all $z$,
\[ \mathbb{P}^z (B_t) = \mathbb{P}^{z - \lfloor z \rfloor} (T_0 \wedge T_1 < t) \]
Since for all $t$, $\mathbb{P}^x(B_t) = \mathbb{P}^y(B_t) $, we get that $\mathbb{E}^{x - \lfloor x\rfloor} [e ^{-\lambda (T_0 \wedge T_1)}] = \mathbb{E}^{y - \lfloor y\rfloor} [e ^{-\lambda (T_0 \wedge T_1)}]$. For $0 \leq z < 1$ and all $\lambda \geq 0$,
\begin{align*}
\mathbb{E}^z [e ^{-\lambda (T_0 \wedge T_1)}]
& = \mathbb{E}^z [e ^{-\lambda T_0} ~|~ T_0 < T_1] + \mathbb{E}^z [e ^{-\lambda T_1} ~|~ T_1 < T_0]\\
& = \mathbb{E}^z \left[e ^{-\lambda T_0} ~\left|~ \sup_{0 \leq s \leq T_0} W^{(a)}_s < 1 \right. \right] + \mathbb{E}^z \left[e ^{-\lambda T_1} ~\left|~ \inf_{0 \leq s \leq T_1} W^{(a)}_s >0\right.\right]\\
& = \frac{\sinh ((1-z) \sqrt{2 \lambda + a^2}) e^{-a z} + \sinh(z \sqrt{2 \lambda + a^2}) e ^{a (1-z)}}{\sinh( \sqrt{2 \lambda + a^2})}
\end{align*}

We can denote $k = \sqrt{2 \lambda + a^2}$ and we can define for $z \in (0,1)$ and $k \geq a$,
$g_z(k) = \sinh ((1-z) k) e^{-a z} + \sinh(z k) e ^{a (1-z)}$. We have that for
all $k \geq a$, $g_{x - \lfloor x\rfloor}(k) = g_{y - \lfloor y\rfloor}(k)$. We
want to prove that $x - \lfloor x\rfloor = y - \lfloor y\rfloor$. This is done
through the following lemma. 

\begin{lemma}
Consider $z_1, z_2 \in (0,1)$. If $g_{z_1}(k) = g_{z_2}(k)$ for all $k \geq a$, then $z_1 = z_2$.
\end{lemma}

\begin{proof}{Of lemma.}
First, note that for $z \in (0,1)$,
\begin{align*}
\frac{\ln g_z(k)}{k}
& \sim_{k \to \infty} \frac{1}{k} \ln \left( \frac{e^{k(1-z)}}{2 e^{az}} + \frac{e^{kz}}{2 e^{a(z-1)}} \right)\\
& \sim_{k \to \infty} \frac{1}{k} \ln \left( e^{k\max \{ (1-z), z\}} \right) = \max \{ 1-z, z\}
\end{align*}
Since for all $k \geq a$, $g_{z_1}(k) = g_{z_2}(k)$, we get that $\max \{ 1- z_1, z_1\} = \max \{ 1- z_2, z_2\}$, i.e. $z_1 = z_2$ or $z_1 = 1-z_2$.

If $z_1 = 1/2$, then both cases are $z _1 = z_2 = 1/2$.

Let us study the second case $ z_1 = z$ and $z_2 = 1-z$ for $z \in (0,1) \setminus \{ 1/2\}$. We have that for all $k \geq a$, $g_{z}(k) = g_{1-z}(k)$. This equation amounts to
\[ \sinh (k (1-z)) e^{-az} + \sinh (kz) e^{a (1-z)} = \sinh (kz) e^{-a(1-z)} + \sinh(k(1-z)) e^{az} \]
By reorganizing the terms, we get that
\[ \frac{\sinh (k (1-z))}{\sinh (kz)} = \frac{\sinh (a (1-z))}{\sinh (az)} \]
Considering the left hand-side, we have that:
\[\frac{\sinh (k (1-z))}{\sinh (kz)} \sim_{k \to \infty} \frac{e^{k (1-z)}}{e^{kz}} = e ^{k (1- 2z)}\]
We can therefore consider the limit of the left-hand side of the equation:
\[ \lim_{k \to +\infty} \frac{\sinh (k (1-z))}{\sinh (kz)} = \begin{cases}
+ \infty \quad \text{if } z < 1/2\\
0 \quad \text{if } z > 1/2
\end{cases} \]
However, that would mean
\[ \frac{\sinh (a (1-z))}{\sinh (az)} = \begin{cases}
+ \infty \quad \text{if } z < 1/2\\
0 \quad \text{if } z > 1/2
\end{cases} \]
which is impossible. Therefore we get that $z_1 = z_2$.
\end{proof}
This means that in order to satisfy \ref{init2}, we need to have that $x - \lfloor x\rfloor = y - \lfloor y\rfloor$ and therefore $R$ is the greatest bisimulation.
\end{proof}

\paragraph*{With an interval distinguished: }
Let us consider the case when there is a single atomic proposition and $obs (x) = 1$ if and only if $x \in [-1,1]$.

\begin{prop}
Two states $x$ and $y$ are bisimilar if and only if $x = y$.
\end{prop}

\begin{proof}
As stated, the equality is a bisimulation. Let us prove that it is the greatest.

Clearly, $x \in [-1, 1]$ and $y \notin [-1,1]$ don't have the same observables, which means that $x$ and $y$ cannot be bisimilar.

Let $x,y \notin [-1, 1]$ and for all $t \geq 0$, $B_t = \{ \omega ~|~ \exists s < t ~ \omega (s) \in [-1,1]\}$ such that for all $t \geq 0$, $\mathbb{P}^x (B_t) = \mathbb{P}^y (B_t)$. Then, for all $\lambda \geq 0$, $\mathbb{E}^x [e^{- \lambda (T_{-1} \wedge T_1)}] = \mathbb{E}^y [e^{- \lambda (T_{-1} \wedge T_1)}]$.

For $z > 1$, we have that
\begin{align*}
\mathbb{E}^z [e^{- \lambda (T_{-1} \wedge T_1)}]
& = \mathbb{E}^z [e^{- \lambda T_1}]\\
& = \exp \left( a (1-z) - |1-z| \sqrt{2 \lambda + a^2} \right)\\
& = \exp \left( a (1-z) + (1-z) \sqrt{2 \lambda + a^2} \right)\\
& = \exp \left( (1-z) (a+ \sqrt{2 \lambda + a^2}) \right)
\end{align*}
This function is injective on $[1, +\infty)$ which means that we cannot have both $x,y >1$.

For $z < -1$, we have that
\begin{align*}
\mathbb{E}^z [e^{- \lambda (T_{-1} \wedge T_1)}]
& = \mathbb{E}^z [e^{- \lambda T_{-1}}]\\
& = \exp \left( a (-1-z) - |-1-z| \sqrt{2 \lambda + a^2} \right)\\
& = \exp \left( -a (1+z) + (1+z) \sqrt{2 \lambda + a^2} \right)\\
& = \exp \left( (1+z) (-a+ \sqrt{2 \lambda + a^2}) \right)
\end{align*}
This function is injective on $(- \infty , 1]$ which means that we cannot have both $x,y <-1$.

Assume $x > 1$ and $y < -1$. We then have that
\[ \exp \left( (1-x) (a+ \sqrt{2 \lambda + a^2}) \right) = \exp \left( (1+y) (-a+ \sqrt{2 \lambda + a^2}) \right) \]
which means that $(1-x) (a+ \sqrt{2 \lambda + a^2}) = (1+y) (-a+ \sqrt{2 \lambda + a^2})$, i.e. $2a = (a+ \sqrt{2 \lambda + a^2}) x + (-a+ \sqrt{2 \lambda + a^2}) y$.
For $\lambda = 0$, we get that $x=1$, which is not possible, and therefore $x$ and $y$ cannot be bisimilar.

Let $x,y \in [-1, 1]$ such that for all $t \geq 0$, $\mathbb{P}^x(B_t) = \mathbb{P}^y(B_t)$ with $B_t = \{ \omega ~|~ \exists 0 \leq s < t ~ \omega (s) \notin [-1,1] \}$. As we did before, for all $\lambda \geq 0$, $\mathbb{E}^x [e^{-\lambda (T_{-1} \wedge T_1)}] = \mathbb{E}^y [e^{-\lambda (T_{-1} \wedge T_1)}]$.
For all $z \in [-1,1]$, we have that
\begin{align*}
\mathbb{E}^z [e ^{- \lambda (T_{-1} \wedge T_1)}]
& = \mathbb{E}^z [e ^{- \lambda T_{-1}} ~|~ T_{-1} < T_1] + \mathbb{E}^z [e ^{- \lambda T_1} ~|~ T_1 < T_{-1}]\\
& = \mathbb{E}^z [e ^{- \lambda T_{-1}} ~|~ \sup_{0 \leq s \leq T_{-1}} W_s < 1] + \mathbb{E}^z [e ^{- \lambda T_1} ~|~ \inf_{0 \leq s \leq T_{-1}} W_s > -1]\\
& = \frac{\sinh ((z+1) \sqrt{2 \lambda + a^2}) e ^{a(1-z)} + \sinh ((1-z) \sqrt{2 \lambda + a^2}) e ^{-a(1+z)}}{\sinh (2 \sqrt{2 \lambda + a^2})}
\end{align*}

We can denote $k = \sqrt{2 \lambda + a^2}$ and we can define for $z \in (0,1)$ and $k \geq a$,
$h_z(k) = \sinh ((z+1) k) e ^{a(1-z)} + \sinh ((1-z) k) e ^{-a(1+z)}$. We have that for all $k \geq a$, $h_{x}(k) = h_{y}(k)$. We want to prove that $x = y$. This is done through the following lemma.

\begin{lemma}
Consider $z_1, z_2 \in [-1,1]$. If $h_{z_1}(k) = h_{z_2}(k)$ for all $k \geq a$, then $z_1 = z_2$.
\end{lemma}

\begin{proof}{Of lemma.}
First, note that for $z \in [-1,1]$,
\begin{align*}
\frac{\ln h_z(k)}{k}
& \sim_{k \to \infty} \frac{1}{k} \ln \left( \frac{e^{k(1+z)}}{2 e^{-a(1-z)}} + \frac{e^{k(1-z)}}{2 e^{a(z+1)}} \right)\\
& \sim_{k \to \infty} \frac{1}{k} \ln \left( e^{k\max \{ (1-z), 1+z\}} \right) = \max \{ 1-z, 1+z\}
\end{align*}
Since for all $k \geq a$, $g_{z_1}(k) = g_{z_2}(k)$, we get that $\max \{ 1- z_1, 1+z_1\} = \max \{ 1- z_2, 1+z_2\}$, i.e. $z_1 = z_2$ or $z_1 = -z_2$.

If $z_1 = 0$, then both cases are $z _1 = z_2 = 0$.

Let us study the second case $ z_1 = z$ and $z_2 = -z$ for $z \in [-1,1] \setminus \{ 0\}$. We have that for all $k \geq a$, $h_{z}(k) = h_{-z}(k)$. This equation amounts to
\[ \sinh ((1+z) k) e ^{a(1-z)} + \sinh ((1-z) k) e ^{-a(1+z)} = \sinh ((1-z) k) e ^{a(1+z)} + \sinh ((1+z) k) e ^{-a(1-z)} \]
By reorganizing the terms, we get that
\[ \frac{\sinh (k (1+z))}{\sinh (k(1-z))} = \frac{\sinh (a (1+z))}{\sinh (a(1-z))} \]
Considering the left hand-side, we have that:
\[\frac{\sinh (k (1+z))}{\sinh (k(1-z))}
\sim_{k \to \infty} \frac{e^{k (1+z)}}{e^{k(1-z)}} = e ^{2kz}\]
We can therefore consider the limit of the left-hand side of the equation:
\[ \lim_{k \to +\infty} \frac{\sinh (k (1+z))}{\sinh (k(1-z))} = \begin{cases}
+ \infty \quad \text{if } z >0\\
0 \quad \text{if } z <0
\end{cases} \]
However, that would mean
\[ \frac{\sinh (a (1+z))}{\sinh (a(1-z))} = \begin{cases}
+ \infty \quad \text{if } z >0\\
0 \quad \text{if } z <0
\end{cases} \]
which is impossible. Therefore we get that $z_1 = z_2$.
\end{proof}
With this the overall proof is complete.
\end{proof}

\subsubsection{Brownian motion with absorbing wall}

Another usual variation on Brownian motion is to add boundaries and to consider that the process does not move anymore or dies once it has hit a boundary. Since all our previous examples involved probability distributions (as opposed to subprobabilities), we will see the boundary as killing the process.

\paragraph*{Absorption at 0: }
let us consider the case of Brownian motion with absorption at the origin
and without any atomic proposition. The state space is $\mathbb{R}_{>0}$.

\begin{prop}
Two states $x$ and $y$ are bisimilar if and only if $x = y$.
\end{prop}

\begin{proof}
We know that equality is a bisimulation. Let us prove that it is the greatest.

For all $t \geq 0$, the set $B_t = \{ \omega ~|~ obs \circ \omega (t)
= \partial \}$ is $obs$-closed. Let us clarify the intuition behind that set $B_t$: it is the set of trajectories such that the process following one of these trajectories is dead at time $t$.

For all state $x \geq 0$, 
\begin{align*}
\mathbb{P}_{abs}^x(B_t)
& = \mathbb{P}^x (T_0 < t)\\
& = \frac{2}{\sqrt{2 \pi}} \int_{x / \sqrt{t}}^{+ \infty} e^{-z^2 /2} dz
\end{align*}
The only way $\mathbb{P}_{abs}^x(B_t) = \mathbb{P}_{abs}^y(B_t)$ is therefore to have $x = y$.
\end{proof}

\paragraph*{Absorption at 0 and $b$: }
let us consider the case of Brownian motion with absorption at the origin and at $b > 0$ and without any atomic proposition. The state space is therefore $(0,b)$.

\begin{prop}
\label{prop:BM_double_absorption}
Two states $x$ and $y$ are bisimilar if and only if $x = y$ or $x = b - y$.
\end{prop}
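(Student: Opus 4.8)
The plan is to follow the template established in the standard Brownian motion examples: first exhibit the candidate relation and verify it is a bisimulation, then invoke condition \ref{init2} together with the Laplace transform of an exit time to prove maximality. Set
\[ R = \{(x,y) \mid x = y \text{ or } x = b - y\}. \]
The only symmetry of Brownian motion killed at the two endpoints of $(0,b)$ is the reflection $x \mapsto b - x$ about the midpoint $b/2$, which interchanges the absorbing points $0$ and $b$; this makes $R$ the natural guess for the greatest bisimulation.

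First I would check that $R$ is a bisimulation. Since there are no atomic propositions, $obs$ is constant on $(0,b)$, so condition \ref{init1} holds trivially. For \ref{ind2}, let $B$ be an $R$-closed measurable set of trajectories and write $b - B := \{t \mapsto b - \omega(t) \mid \omega \in B\}$, with the convention $b - \partial = \partial$. For every $\omega$ the reflected trajectory $b - \omega$ satisfies $\omega(t) \mathrel{R} (b-\omega)(t)$ at each time $t$, so $R$-closedness forces $b - \omega \in B$ whenever $\omega \in B$, and applying the reflection twice gives $b - B = B$. Because reflection about $b/2$ carries the process started at $x$ to the process started at $b-x$ while preserving the absorption structure at $\{0,b\}$, we obtain $\mathbb{P}^x(B) = \mathbb{P}^{b-x}(b-B) = \mathbb{P}^{b-x}(B)$, which is exactly \ref{ind2}.

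For maximality I would argue through \ref{init2}. With no atomic propositions the only information carried by $obs \circ \omega$ is the absorption time $T := T_0 \wedge T_b$, since $obs(\omega(t)) = \partial$ precisely when $t \geq T(\omega)$; hence every $obs$-closed set is determined by the law of $T$. Taking the $obs$-closed sets $B_t = \{\omega \mid obs \circ \omega(t) = \partial\} = \{T < t\}$, condition \ref{init2} forces $\mathbb{P}^x(T < t) = \mathbb{P}^y(T < t)$ for all $t$, so the laws of $T$ under $\mathbb{P}^x$ and $\mathbb{P}^y$ coincide and their Laplace transforms agree. Using the standard exit-time formula (as in the integer-distinguished case, now on $(0,b)$),
\[ \mathbb{E}^z\big[e^{-\lambda(T_0 \wedge T_b)}\big] = \frac{\cosh\!\big((z - \tfrac{b}{2})\sqrt{2\lambda}\big)}{\cosh\!\big(\tfrac{b}{2}\sqrt{2\lambda}\big)}, \]
the equality of transforms reduces to $\cosh\big((x - \tfrac{b}{2})\sqrt{2\lambda}\big) = \cosh\big((y - \tfrac{b}{2})\sqrt{2\lambda}\big)$ for all $\lambda$. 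Since $\cosh$ is even and strictly increasing on $[0,\infty)$, this yields $|x - \tfrac{b}{2}| = |y - \tfrac{b}{2}|$, that is $x = y$ or $x = b - y$.

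The main obstacle is the maximality step, and in particular the observation that $obs$-closed events see nothing beyond the total absorption time $T_0 \wedge T_b$. Because $obs$ cannot tell the two endpoints apart, one cannot separate "killed at $0$" from "killed at $b$", and the whole argument rests on the fact that the law of $T$ alone pins down $z$ only up to the reflection $z \leftrightarrow b - z$ --- exactly the symmetry built into $R$. The remaining computation (the $\cosh$ identity and uniqueness of Laplace transforms) is routine and mirrors the earlier Brownian examples.
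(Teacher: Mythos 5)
Your proof is correct and follows essentially the same route as the paper: the same relation $R$, the same reflection argument $\mathbb{P}^x(B) = \mathbb{P}^{b-x}(b-B) = \mathbb{P}^{b-x}(B)$ for the bisimulation half, and the same $obs$-closed sets $B_t$ reducing maximality to the law of $T_0 \wedge T_b$. The only difference is that you write out the $\cosh$ Laplace-transform computation explicitly, whereas the paper abbreviates it by citing the earlier integer-distinguished Brownian motion example.
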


\begin{proof}
Let us define the equivalence
\[ R = \{ (x,x), (x, b-x) ~|~ x \in (0,b)\} \]

First note that it is indeed a bisimulation. There are no atomic propositions. Let us now consider an $R$-closed set $B$ of trajectories. This means that $b - B := \{ t \mapsto b - \omega (t) ~|~ \omega \in B \} = B$. So for $x \in (0,b)$, we have that
\[ \mathbb{P}^x(B) = \mathbb{P}^{b - x} (b - B) = \mathbb{P}^{b - x} (B) \]
This means that $R$ is a bisimulation. 

Let us now prove that it is the greatest bisimulation. For all $t \geq 0$,
the set $B_t = \{ \omega ~|~ \omega (t) = \partial \}$ is $obs$-closed. For
all $x \in (0,b)$ and $t \geq 0$, 
\[ \mathbb{P}_{abs}^x(B_t) = \mathbb{P}^x (T_0 \wedge T_b < t) \] 
Similarly to what was done in the case of standard Brownian motion with all
integers distinguished, we get that $\mathbb{P}_{abs}^x(B_t) =
\mathbb{P}_{abs}^y(B_t)$ for all $t\geq 0$ if and only if $x = y$ or $x =
b-y$. 
\end{proof}

\paragraph*{Absorption at 0 and $2b$ with atomic proposition at $b$: }
let us consider the case of Brownian motion with absorption at the origin
and at $2b > 0$, so the state space is $(0,2b)$, and with a single atomic proposition such that $obs(b) = 1$
and $obs(x) =0$ for $x \neq b$. 

\begin{prop}
Two states $x$ and $y$ are bisimilar if and only if $x = y$ or $y = 2b - x$.
\end{prop}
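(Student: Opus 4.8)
The plan is to follow the two-part template used throughout this section: first exhibit the candidate equivalence
\[ R = \{ (x,x),\, (x, 2b-x) ~|~ x \in (0,2b)\} \]
and check it is a bisimulation, then invoke \ref{init2} to rule out anything coarser. For the first part the key fact is that the reflection $\phi(z) = 2b-z$ is a symmetry of Brownian motion killed at $0$ and $2b$: it fixes the midpoint $b$ and merely swaps the two killing boundaries, so the killed process is invariant in law under $\phi$, and moreover $obs(2b-z) = 1 \iff 2b-z = b \iff z = b \iff obs(z)=1$, so $obs \circ \phi = obs$. Condition \ref{init1} is then immediate for pairs $(x,2b-x)$. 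For \ref{ind2}, an $R$-closed set $B$ satisfies $2b-B := \{ t \mapsto 2b-\omega(t) ~|~ \omega \in B\} = B$, whence $\mathbb{P}^x(B) = \mathbb{P}^{2b-x}(2b-B) = \mathbb{P}^{2b-x}(B)$, exactly as in Proposition~\ref{prop:BM_double_absorption}.

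For the converse I would first dispose of the distinguished point: since $obs(b)=1$ and $obs(z)=0$ for $z\neq b$, any state bisimilar to $b$ is itself $b = 2b-b$, so the claim holds whenever $x=b$ or $y=b$. For $x,y \in (0,2b)\setminus\{b\}$ I would reuse the death observable $B_t = \{ \omega ~|~ \omega(t)=\partial\}$, which is $obs$-closed, with
\[ \mathbb{P}_{abs}^x(B_t) = \mathbb{P}^x(T_0 \wedge T_{2b} < t). \]
Passing to Laplace transforms in $t$ as in the previous examples, bisimilarity forces $\mathbb{E}^x[e^{-\lambda(T_0\wedge T_{2b})}] = \mathbb{E}^y[e^{-\lambda(T_0\wedge T_{2b})}]$ for all $\lambda \geq 0$, and the two-sided exit-time transform is
\[ \mathbb{E}^z[e^{-\lambda(T_0 \wedge T_{2b})}] = \frac{\cosh((z-b)\sqrt{2\lambda})}{\cosh(b\sqrt{2\lambda})}, \]
which for each fixed $\lambda>0$ is a strictly increasing function of $|z-b|$. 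Equality of the transforms is therefore equivalent to $|x-b| = |y-b|$, i.e.\ to $x=y$ or $x=2b-y$, which is exactly the desired conclusion and runs entirely parallel to Proposition~\ref{prop:BM_double_absorption}.

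An alternative that uses the atomic proposition more directly is the hitting set $B_t = \{ \omega ~|~ \exists s<t,~ obs(\omega(s))=1\}$ of trajectories reaching $b$ before time $t$; this is $obs$-closed, and for $x\in(0,b)$ its Laplace transform is $\sinh(x\sqrt{2\lambda})/\sinh(b\sqrt{2\lambda})$, injective on $(0,b)$, so together with the reflection identity $\mathbb{P}^{2b-x}(B_t) = \mathbb{P}^x(B_t)$ it again pins $x$ down up to reflection. The main obstacle I anticipate is bookkeeping rather than conceptual: one must set up these hitting-time transforms with the correct two-sided boundary conditions and then handle the position case-split ($x,y$ on the same side of $b$ versus opposite sides) with care, verifying that the \emph{only} failure of injectivity of $z \mapsto \cosh((z-b)\sqrt{2\lambda})$ is the reflection $z \mapsto 2b-z$, so that the two surviving alternatives are precisely $x=y$ and $x=2b-y$ and no third possibility slips through.
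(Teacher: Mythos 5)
Your proposal is correct and takes essentially the same route as the paper: the paper also verifies that the reflection $R = \{(x,x),(x,2b-x)\}$ is a bisimulation by noting $2b-x=b \iff x=b$ (so $obs$ is preserved) together with the reflection invariance of the killed process, and it establishes maximality exactly as you do, via the $obs$-closed death observable $B_t = \{\omega \mid \omega(t)=\partial\}$ and the Laplace transform of the two-sided exit time $T_0 \wedge T_{2b}$, which depends only on $|z-b|$. The only difference is presentational: the paper compresses both halves into citations of Proposition~\ref{prop:BM_double_absorption} and the earlier integer-marked Brownian motion computation, whereas you write out the transform $\cosh((z-b)\sqrt{2\lambda})/\cosh(b\sqrt{2\lambda})$ and the case analysis explicitly.
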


\begin{proof}
Let us define the equivalence
\[ R = \{ (x,x), (x, 2b-x) ~|~ x \in (0,2b)\} \]

Let us show that this relation $R$ is a bisimulation. Clearly $x = b$, if and only if, $2b - x = b$ and therefore $obs(x) = obs(2b -x)$. The proof of the induction condition is similar to claim \ref{claim:BM_double_absorption}.

Similarly to proposition \ref{prop:BM_double_absorption}, we also have that this is the greatest bisimulation.
\end{proof}

\paragraph*{Absorption at 0 and $4b$ with atomic proposition at $b$: }
let us consider the case of Brownian motion with absorption at the origin
and at $4b > 0$, so the state space is $(0,4b)$, and with a single atomic proposition such that $obs(b) = 1$
and $obs(x) =0$ for $x \neq b$. 

\begin{prop}
Two states $x$ and $y$ are bisimilar if and only if $x = y$. 
\end{prop}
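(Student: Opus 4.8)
The plan is to show that equality --- which we already know to be a bisimulation --- is the greatest one, so it suffices to prove that if $x$ and $y$ are bisimilar then $x=y$, exploiting condition \ref{init2}. The conceptual point, and the reason the answer differs from Proposition \ref{prop:BM_double_absorption}, is that the reflection through the midpoint $2b$ which exchanges the two absorbing walls $0$ and $4b$ now sends the distinguished point $b$ to $3b\neq b$; it therefore fails to preserve $obs$, and this broken symmetry is exactly what collapses the greatest bisimulation down to equality. To begin, I would dispose of the atomic proposition: since bisimilarity forces $obs(x)=obs(y)$, either both $x$ and $y$ equal $b$, in which case we are done, or neither does, so I may assume $x,y\neq b$ from here on.

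Next I would recover the information already contained in Proposition \ref{prop:BM_double_absorption}. The set $B_t=\{\omega\mid\omega(t)=\partial\}$ is $obs$-closed, and $\mathbb{P}_{abs}^z(B_t)=\mathbb{P}^z(T_0\wedge T_{4b}<t)$ is the law of the death time. Feeding these sets into \ref{init2} and reusing the Laplace-transform computation of Proposition \ref{prop:BM_double_absorption} (with $4b$ in place of $b$) yields that either $x=y$, and we are finished, or else $y=4b-x$.

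It then remains to exclude $y=4b-x$ when $x\neq 2b$ (the case $x=2b$ gives $y=x$). Here I would use the $obs$-closed measurable set $C=\{\omega\mid\exists s\geq 0,\ obs(\omega(s))=1\}=\{\omega\mid\exists s,\ \omega(s)=b\}$, whose probability is the chance of reaching $b$ before being absorbed, $\mathbb{P}^z(C)=\mathbb{P}^z(T_b<T_0\wedge T_{4b})$. A short computation (hitting probabilities of Brownian motion on an interval are piecewise linear) shows that this is the tent-shaped function equal to $z/b$ on $(0,b]$ and to $(4b-z)/(3b)$ on $[b,4b)$, peaking at $z=b$. Reflecting paths through $2b$ identifies $\mathbb{P}^{4b-x}(T_b<T_0\wedge T_{4b})$ with $\mathbb{P}^{x}(T_{3b}<T_0\wedge T_{4b})$, the analogous tent peaking at $3b$, so \ref{init2} applied to $C$ would force these two functions to agree at $x$. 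Since the two tents are mirror images across $2b$, they coincide only at $x=2b$, contradicting $x\neq 2b$; hence $x=y$.

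The main obstacle is precisely this last step: one must notice that the off-centre distinguished point $b$ destroys the reflection symmetry that survived in Proposition \ref{prop:BM_double_absorption}, and then make this quantitative via the explicit tent-shaped hitting probability and the comparison of the levels $b$ and $3b$. Verifying that $C$ is genuinely $obs$-closed and measurable (the hitting time of a point by a continuous path is a stopping time) and carrying out the piecewise-linear computation of the hitting probabilities are routine.
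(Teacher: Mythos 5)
Your proof is correct, and while it follows the paper's overall skeleton (reduce to $y=x$ or $y=4b-x$ via Proposition \ref{prop:BM_double_absorption} and the death-time sets, handle $b$ versus $3b$ through $obs$, then exclude the reflection), the key discriminating step is genuinely different and more elementary. The paper rules out $x \sim 4b-x$ by comparing the Laplace transforms $\mathbb{E}^{z}_{abs}[e^{-\lambda T_b^{abs}}]$, which requires the explicit $\sinh$ formulas for hitting times of Brownian motion on an interval and monotonicity/injectivity arguments in $\lambda$. You instead use a single scalar observable per state: the probability $\mathbb{P}^z(T_b < T_0 \wedge T_{4b})$ of ever visiting the marked point, attached to the $obs$-closed set $C=\{\omega \mid \exists s,\ obs(\omega(s))=1\}$. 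This is computed by gambler's ruin and is the piecewise-linear tent $z/b$ on $(0,b]$, $(4b-z)/(3b)$ on $[b,4b)$; comparing it with its reflection (the tent peaked at $3b$) shows agreement only at $z=2b$, which kills the case $y=4b-x$, $x\neq 2b$. What your route buys is a cleaner and conceptually sharper ending --- it makes explicit that the off-centre marked point breaks the reflection symmetry surviving in Proposition \ref{prop:BM_double_absorption}, and it needs no transform identities in that step (though you still inherit them from the earlier proposition in the reduction). What the paper's route buys is uniformity: it reuses the same time-resolved hitting-time technique as its other Brownian examples. Both arguments correctly invoke condition \ref{init2}, and your checks that $C$ is $obs$-closed and measurable are indeed the only routine verifications left.
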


\begin{proof}
Using proposition \ref{prop:BM_double_absorption}, it is clear that a state $x$ can only be bisimilar to either itself or $4b - x$. First, state $b$ is not bisimilar to $3b$ since $obs(b) = 1$ and $obs(3b) = 0$.

Let us now show that $x \in (0, 2b)$ with $x \neq b$ and $4b-x \in (2b,
4b)$ are not bisimilar. Let us define $B_t = \{ \omega ~|~ \exists s \in
[0,t) ~ \omega(s) = b\}$. This set is indeed $obs$-closed. Similarly to
what was done in the standard Brownian motion case, we can compare
$\mathbb{E}^x_{abs}[e^{-\lambda T^{abs}_b}]$ and
$\mathbb{E}^{4b-x}_{abs}[e^{-\lambda T^{abs}_b}]$ instead. 

For $z <b$,
\begin{align*}
\mathbb{E}^z_{abs} [e^{-\lambda T^{abs}_b}]
& = \int_0 ^\infty e^{-\lambda t} \mathbb{P}^z_{abs} [T^{abs}_b \in dt]\\
& = \int_0 ^\infty e^{-\lambda t} \mathbb{P}^z [T_b \in dt, T_b < T_0]\\
& = \int_0 ^\infty e^{-\lambda t} \mathbb{P}^z [T_b \in dt ~|~ T_b < T_0]
  \mathbb{P}^z (T_b < T_0)\\ 
& = \mathbb{E}^z [e^{-\lambda T_b} ~|~ T_b < T_0] \mathbb{P}^z (T_b < T_0)\\
& = \frac{z \sinh (z \sqrt{2 \lambda})}{b \sinh (b \sqrt{2 \lambda})}
\end{align*}
Similarly, for $z > b$,
\begin{align*}
\mathbb{E}^z_{abs} [e^{-\lambda T^{abs}_b}]
& = \int_0 ^\infty e^{-\lambda t} \mathbb{P}^z_{abs} [T^{abs}_b \in dt]\\
& = \int_0 ^\infty e^{-\lambda t} \mathbb{P}^z [T_b \in dt, T_b < T_{4b}]\\
& = \int_0 ^\infty e^{-\lambda t} \mathbb{P}^z [T_b \in dt ~|~ T_b <
  T_{4b}] \mathbb{P}^z (T_b < T_{4b})\\ 
& = \mathbb{E}^z [e^{-\lambda T_b} ~|~ T_b < T_{4b}] \mathbb{P}^z (T_b < T_{4b})\\
& = \frac{(4b - z) \sinh ((4b - z) \sqrt{2 \lambda})}{3b \sinh (3b \sqrt{2 \lambda})}
\end{align*}
This function is strictly decreasing on $(b, 4b)$, hence we cannot have $x
\in (b, 2b)$ bisimilar to $4b - x$. 
Moreover for $x < b$, we get
\[ \mathbb{E}^{4b -x}_{abs} [e^{-\lambda T^{abs}_b}] = \frac{x \sinh (x
    \sqrt{2 \lambda})}{3b \sinh (3b \sqrt{2 \lambda})} \] 
And we get that $\mathbb{E}^x_{abs} [e^{-\lambda T^{abs}_b}] =
\mathbb{E}^{4b -x}_{abs} [e^{-\lambda T^{abs}_b}]$ if and only if $3b \sinh
(3b \sqrt{2 \lambda}) = b \sinh (b \sqrt{2 \lambda})$ (since $x \neq 0$)
which is not the case. 
\end{proof}

\section{Feller-Dynkin cospan}
\label{sec:FDcospan}
The concept of bisimulation that we have discussed so far is defined between
states of a process.  One often wants to compare different processes with
different state spaces.  For this one needs to use functions that relate the
state spaces of different processes.  One does want to preserve the relational
character of bisimulation.  In the coalgebra literature one uses spans of
so-called ``zigzag'' morphisms.  In previous work~\cite{Danos06} on (discrete-time) Markov
processes people have considered cospans as this leads to a smoother theory.
Intuitively, the difference is whether one thinks of an equivalence relation as
a set of ordered pairs or as a collection of equivalence classes.

\subsection{Feller-Dynkin homomorphism}

This definition of bisimulation can easily be adapted to states in different
Markov processes by constructing the disjoint union of the Markov processes. 

The disjoint union of two Markov processes is defined as such: given two FD processes
$(E_j, \mathcal{E}_j, (\hat{P}_t^j), (P_t^j), \Omega_j, \mathcal{G}_j, (\mathbb{P}_j^x), obs_j)_{j = 1,2}$,
we write $i_1 : E_1 \to E_1 \uplus E_2$ and $i_2 : E_2 \to E_1 \uplus E_2$ for the two corresponding inclusions. The disjoint unions of the two FD processes is the process $(E_1 \uplus E_2, \mathcal{E}, (\hat{P}_t), (P_t), \Omega, \mathcal{G}, (\mathbb{P}^x), obs)$ where:
\begin{itemize}
\item the topology on $E_1 \uplus E_2$ is generated by the topologies on $E_1$ and $E_2$: an open set of $E_1 \uplus E_2$ is $i_1 (O_1) \cup i_2 (O_2)$ where $O_1$ and $O_2$ are opens of $E_1$ and $E_2$ respectively,
\item $\mathcal{E}$ is the Borel-algebra generated by this topology. It can also be expressed as the $\sigma$-algebra generated by $\{ i_1(C) ~|~ C \in \mathcal{E}_1 \}$ and $\{ i_2(C) ~|~ C \in \mathcal{E}_2 \}$,
\item for any state $x \in E_1 \uplus E_2$, any time $t \geq 0$ and any function in $C_0(E_1 \uplus E_2)$, we define the semigroup:
\[ \hat{P}_t f(x) = \begin{cases}
\hat{P}_t^1 f_1 (x_1) ~~\text{ if } x = i_1 (x_1)\\
\hat{P}_t^2 f_2 (x_2) ~~\text{ if } x = i_2 (x_2)
\end{cases} \]
where $f_j : E_j \to \mathbb{R}$ is defined by $f_j (y) = f \circ i_j (y)$. The semigroup $\hat{P}_t$ inherits the desired properties from $\hat{P}_t^1$ and $\hat{P}_t^2$ for it to be a FDS,
\item for any state $x \in E_1 \uplus E_2$, any time $t \geq 0$ and any measurable set $C \in \mathcal{E}$, the kernel can be made explicit as:
\[ P_t(x, C) = \begin{cases}
P_t^1 (x_1, i_1^{-1} (C)) ~~\text{ if } x = i_1 (x_1)\\
P_t^2 (x_2, i_2^{-1} (C)) ~~\text{ if } x = i_2 (x_2)
\end{cases},\]
\item for any state $x \in E_1 \uplus E_2$, we set
\[ obs(x) = \begin{cases}
obs_1 (x_1) ~~\text{ if } x = i_1 (x_1)\\
obs_2 (x_2) ~~\text{ if } x = i_2 (x_2)
\end{cases}, \]
\item the set of trajectories on $(E_1 \uplus E_2)_\partial$ is denoted $\Omega$. Note that a trajectory in $\Omega$ can switch between $E_1$ and $E_2$. The set $\Omega$ is equipped with a $\sigma$-algebra $\mathcal{G}$ as is standard for FD processes. For a state $x$, we can explicit the probability distribution for $B \in \mathcal{G}$:
\[ \mathbb{P}^x(B) = \begin{cases}
\mathbb{P}^{x_1}_1 (B_1) ~~\text{ if } x = i_1 (x_1)\\
\mathbb{P}^{x_2}_2 (B_2) ~~\text{ if } x = i_2 (x_2)
\end{cases} \]
where $B_j = \{ \omega \in \Omega_j ~|~ i_j \circ \omega \in B \}$. Note that for any $x \in E_1 \uplus E_2$, for any measurable set $B \subset \{ \omega \in \Omega ~|~ \exists t_1, t_2 ~ \omega (t_1) \in i_1 (E_1) \text{ and } \omega (t_2) \in i_2 (E_2) \}$, $\mathbb{P}^x (B) = 0$.
\end{itemize}

We can also make explicit what a bisimulation is in that context (we will omit to mention the inclusions $i_1$ and $i_2$ to be readable):
\begin{definition}
Given two FD processes
$(E_j, \mathcal{E}_j, (\hat{P}_t^j), (P_t^j), \Omega_j, \mathcal{G}_j, (\mathbb{P}_j^x), obs_j)_{j = 1,2}$, a bisimulation between the two FDPs is an equivalence $R$ on $E_1 \uplus E_2$ such that for all $x R y$ ($x \in E_i$, $y\in E_j$),
\begin{description}
\item[\namedlabel{init1d}{(inititiation 1)}] $obs_i (x) = obs_j(y)$, and
\item[\namedlabel{ind2d}{(induction 2)}]  for all measurable $R$-closed sets $B$, $\mathbb{P}^x(B \cap \Omega_i) =
  \mathbb{P}^y (B \cap \Omega_j)$.\\
  This condition can also be stated as follows. For all sets $B_1 \in \mathcal{G}_1$ and $B_2 \in \mathcal{G}$, $\mathbb{P}_{i}^x (B_i) = \mathbb{P}_{j}^y (B_j)$ if the two sets satisfy the following condition:
  \[ \forall \omega_k \in B_k ~ \forall \omega_l \in \Omega_l ~ (\forall t \geq 0 ~ \omega_k (t) ~ R~ \omega_l(t)) \Rightarrow \omega_l \in B_l \]
  In that formulation, $B_k = B \cap \Omega_k$ and the condition states that the set $B$ is $R$-closed in terms of the sets $B_1$ and $B_2$.
\end{description}
\end{definition}

Note that $R \cap (E_j \times E_j)$ is a bisimulation on $(E_j, \mathcal{E}_j, (P_{t}^j), (\mathbb{P}_{j}^x))$.
To proceed with our cospan idea we need a functional version of bisimulation; we
call these Feller-Dynking homomorphisms or FD-homomorphisms for short.
\begin{definition}
A continuous function $f : E \to E'$ is called a \emph{FD-homomorphism} if it satisfies the following conditions:
\begin{itemize}
\item $obs = obs' \circ f$,
\item for all $x \in E$ and for all measurable sets $B' \subset \Omega '$,
  $\mathbb{P}^{f(x)}(B') = \mathbb{P}^x (B)$ where $B = \{ \omega \in \Omega ~|~
  f \circ \omega \in B'\}$. 
\end{itemize}
\end{definition}

Note that if $f$ and $g$ are FD-homomorphisms, then so is $g \circ f$.

\begin{prop}
The equivalence relation $R$ defined on $E \uplus E'$ as
\[ R = \{ (x,y) \in E \times E ~|~ f(x) = f(y)\} \cup \{ (x,y), (y,x) ~|~ f(x) = y\} \]
is a bisimulation on $E$.
\end{prop}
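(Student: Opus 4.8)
The plan is to establish the three ingredients of a bisimulation on the disjoint union $E \uplus E'$: that $R$ is an equivalence relation, that it satisfies \ref{init1d}, and that it satisfies \ref{ind2d}. For the first point, note that the relation as written is reflexive on $E$, symmetric (the cross pairs are listed with both orientations), and transitive because $f$ is a function; reflexivity on $E'$ forces us to adjoin the diagonal $\{(y,y) : y \in E'\}$, after which $R$ is precisely the kernel of the copairing $\phi = [\,f,\ \mathrm{id}_{E'}\,] : E \uplus E' \to E'$, since $\phi(i_1(x_1)) = \phi(i_1(x_2))$ iff $f(x_1) = f(x_2)$ and $\phi(i_1(x)) = \phi(i_2(y))$ iff $f(x) = y$. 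Being the kernel of a map, $R$ is automatically an equivalence relation. Clause \ref{init1d} is then immediate from the defining property $obs = obs' \circ f$: if $f(x_1) = f(x_2)$ then $obs(x_1) = obs'(f(x_1)) = obs'(f(x_2)) = obs(x_2)$, and if $f(x) = y$ then $obs(x) = obs'(f(x)) = obs'(y)$.

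The heart of the argument is \ref{ind2d}. Let $B$ be a measurable $R$-closed set of trajectories in $\Omega$, and write $B_1 = B \cap \Omega_1$ and $B_2 = B \cap \Omega_2$ for its two component restrictions; these are measurable because the inclusions $\omega \mapsto i_j \circ \omega$ of trajectory spaces are measurable. The map $\omega \mapsto f \circ \omega$ sends a trajectory on $E_\partial$ to one on $E'_\partial$ (this well-definedness, and the measurability of the map, is exactly what is presupposed by the second bullet of the definition of FD-homomorphism). The key lemma I would prove is that $B_1 = \{\omega \in \Omega_1 : f \circ \omega \in B_2\}$. For any $\omega \in \Omega_1$ and every $t \ge 0$, the points $\omega(t) \in E_\partial$ and $f(\omega(t)) \in E'_\partial$ satisfy $\omega(t) \mathrel{R} f(\omega(t))$: they are related by the cross clause of $R$ (with $\partial \mathrel{R} \partial$). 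Applying the mutual $R$-closedness of the pair $(B_1, B_2)$ from the disjoint-union formulation of \ref{ind2d} in the direction $(1,2)$ gives $\omega \in B_1 \Rightarrow f \circ \omega \in B_2$, and, using the symmetry of $R$, the direction $(2,1)$ gives the converse, establishing the equality.

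With the lemma in hand, the conclusion follows from the measure-theoretic content of being an FD-homomorphism. For every $x \in E$ we have $\mathbb{P}^x_1(B_1) = \mathbb{P}^x_1(\{\omega : f \circ \omega \in B_2\}) = \mathbb{P}^{f(x)}_2(B_2)$, the last equality being the defining property of $f$. Now suppose $a \mathrel{R} b$ and distinguish the cases of the definition of $R$. If $a = i_1(x_1)$ and $b = i_1(x_2)$ with $f(x_1) = f(x_2)$, then $\mathbb{P}^{x_1}_1(B_1) = \mathbb{P}^{f(x_1)}_2(B_2) = \mathbb{P}^{f(x_2)}_2(B_2) = \mathbb{P}^{x_2}_1(B_1)$. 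If $a = i_1(x)$ and $b = i_2(y)$ with $f(x) = y$, then $\mathbb{P}^x_1(B_1) = \mathbb{P}^{f(x)}_2(B_2) = \mathbb{P}^y_2(B_2)$. The remaining diagonal case on $E'$ is trivial. This is exactly clause \ref{ind2d}, so $R$ is a bisimulation on $E \uplus E'$.

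The step I expect to be the main obstacle is the key lemma $B_1 = \{\omega : f \circ \omega \in B_2\}$: it must be obtained by applying $R$-closedness in \emph{both} directions, which is precisely why the symmetry built into $R$ (and into the two-sided closedness condition of \ref{ind2d}) is essential, and one must check that the pointwise relation $\omega(t) \mathrel{R} f(\omega(t))$ genuinely holds for all $t$, including at the absorbing point $\partial$. The surrounding verifications—measurability of the restrictions $B_j$ and of the pushforward $\omega \mapsto f \circ \omega$, and the fact that $f \circ \omega$ is again a legitimate (cadlag, $\partial$-absorbing) trajectory—are routine, but should be recorded, as they are what make the reduction to the FD-homomorphism property legitimate.
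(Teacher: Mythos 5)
Your proof is correct and follows essentially the same route as the paper's: your key lemma $B_1 = \{\omega \in \Omega_1 : f \circ \omega \in B_2\}$ is exactly the paper's identity $B = \hat{B} \cap \Omega$, proved the same way by applying $R$-closedness in both directions to the pointwise relation $\omega(t) \mathrel{R} f(\omega(t))$, and then concluding via the measure-preservation clause of the FD-homomorphism together with the case analysis on $f(x)=f(y)$ versus $y=f(x)$. Your side remarks (adjoining the diagonal of $E'$ so that $R$ is genuinely reflexive and transitive there, the convention $\partial \mathrel{R} \partial$, and the measurability of $\omega \mapsto f \circ \omega$) tighten points the paper leaves implicit but do not change the argument.
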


\begin{proof}
Consider $x$ and $y$ such that $x~R~y$.  We are going to assume that $f(x) =
f(y)$ and we will be treating the case $x R f(x)$ at the same time. 

First note that $obs (x) = obs' \circ f(x)$ since $obs = obs' \circ f$. Since $f(x) = f(y)$, we have that $obs(x) = obs' \circ f(x) = obs' \circ f(y) = obs(y)$. This gives us \ref{init1d} for both cases.

Second, let us check the induction condition \ref{ind2d}. Consider an $R$-closed set $\hat{B}$. Define $B' = \hat{B}\cap \Omega '$. Define $B = \{ \omega \in \Omega ~|~ f \circ \omega \in \hat{B} \cap \Omega'\}$. As $f$ is an FD-homomorphism, we have that $\mathbb{P}^{f(x)} (B') = \mathbb{P}^x (B)$.

Let us show that $B = \hat{B} \cap \Omega$.
\begin{itemize}
\item Consider $\omega \in B$, i.e. $f \circ \omega \in \hat{B} \cap \Omega '$.  By definition $\omega \in \Omega$. Furthermore, $f \circ \omega \in \hat{B}$. By definition of $R$, we have that for all $t \geq 0$, $\omega(t) ~R~ f \circ \omega (t)$. Since the set $\hat{B}$ is $R$-closed and $f \circ \omega \in \hat{B}$, we have that $\omega \in \hat{B}$ which proves the first inclusion.
\item Consider $\omega \in \hat{B} \cap \Omega$. The trajectory $f \circ \omega$ is well-defined and is in $\Omega'$ since $f$ is continuous. Similarly to what was done for the first inclusion, we get that $f \circ \omega \in \hat{B} $ since $\omega \in \hat{B}$ and $\hat{B}$ is $R$-closed. This proves that $\omega \in B$
\end{itemize}
We get that $\mathbb{P}^{f(x)} (\hat{B} \cap \Omega ') = \mathbb{P}^x (\hat{B} \cap \Omega)$.

Since $f(x) = f(y)$, we also get that $\mathbb{P}^x (\hat{B} \cap \Omega) = \mathbb{P}^y (\hat{B} \cap \Omega)$.
\end{proof}

\begin{cor}
The equivalence relation $R$ defined on $E$ as
\[ R = \{ (x,y) \in E \times E ~|~ f(x) = f(y)\} \]
is a bisimulation on $E$.
\end{cor}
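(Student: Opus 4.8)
The plan is to deduce this directly from the preceding proposition rather than re-verifying the two bisimulation conditions from scratch. Write $\hat{R}$ for the equivalence relation on $E \uplus E'$ constructed in that proposition, namely
\[ \hat{R} = \{ (x,y) \in E \times E ~|~ f(x) = f(y)\} \cup \{ (x,y), (y,x) ~|~ f(x) = y\}, \]
which has just been shown to be a bisimulation on the disjoint union. The relation $R$ of the corollary is exactly the restriction $\hat{R} \cap (E \times E)$: a pair $(x,y)$ with both $x,y \in E$ lies in $\hat{R}$ precisely when $f(x) = f(y)$, since the second component of $\hat{R}$ only relates a point of $E$ to its image $f(x) \in E'$, which does not lie in $E$.

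First I would invoke the remark recorded just after the definition of bisimulation between two FD processes, that $\hat{R} \cap (E_j \times E_j)$ is a bisimulation on the component process. Applying this with the component equal to $E$ immediately yields that $R = \hat{R} \cap (E \times E)$ is a bisimulation on $E$, which is all that is claimed.

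If instead one wanted a self-contained argument, the initiation condition is immediate: $f(x) = f(y)$ gives $obs(x) = obs' \circ f(x) = obs' \circ f(y) = obs(y)$. For the induction condition one would observe that for an $R$-closed set $B \in \mathcal{G}$, membership of a trajectory $\omega$ in $B$ depends only on $f \circ \omega$ — indeed, if $f \circ \omega = f \circ \omega'$ then $\omega(t) ~R~ \omega'(t)$ for every $t \geq 0$, so $R$-closedness forces $\omega \in B$ iff $\omega' \in B$ — and therefore $B = \{ \omega \in \Omega ~|~ f \circ \omega \in B'\}$ for the image set $B' = \{ f \circ \omega ~|~ \omega \in B\}$. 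The FD-homomorphism property of $f$ would then yield $\mathbb{P}^x(B) = \mathbb{P}^{f(x)}(B') = \mathbb{P}^{f(y)}(B') = \mathbb{P}^y(B)$, using $f(x) = f(y)$ in the middle step.

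The main obstacle on this second route is purely measure-theoretic: the FD-homomorphism condition applies only to \emph{measurable} $B' \subset \Omega'$, yet forward images under measurable maps need not be measurable in general, so one must justify measurability of the pushforward $B'$. This is precisely the technical nuisance that the disjoint-union formulation of the preceding proposition is designed to sidestep — there one never forms a pushforward but instead starts from an $R$-closed set $\hat{B}$ in the combined trajectory space and establishes $B = \hat{B} \cap \Omega$ by a direct set-theoretic computation. For this reason I would present the corollary through the restriction remark and retain the direct route only as motivation.
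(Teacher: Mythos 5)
Your proposal coincides with the paper's own (implicit) proof: the corollary is stated there without argument, as an immediate consequence of the preceding proposition combined with the earlier remark that $R \cap (E_j \times E_j)$ is a bisimulation on the component process --- exactly the restriction route you take, noting that the cross-component pairs $(x, f(x))$ drop out when intersecting with $E \times E$. Your closing observation about the measurability of forward images is also apt; it is precisely why the disjoint-union formulation, rather than a direct pushforward argument, is the right way to read the paper here.
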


Here is an example with one atomic proposition.  Let $\mathcal{M}_1$ be the
standard Brownian motion on the real line with $obs_1 (x) = 1$ if and only if
$x \in \mathbb{Z}$.  Let $\mathcal{M}_2$ be the reflected Brownian motion on
$[0,1]$ with $obs_2 (x) = 1$ if and only if $x = 0$ or $1$.  Let
$\mathcal{M}_3$ be the reflected Brownian motion on $\left[0,\frac{1}{2}\right]$
with $obs_3 (x) = 1$ if and only if $x = 0$.  Let $\mathcal{M}_4$ be the
standard Brownian motion on the circle of radius $\frac{1}{2\pi}$ (we will
identify points on the circle with the angle wrt the vertical) with
$obs_4 (x) = 1$ if and only if $x = 0$.

We can define some natural mappings between these processes:
\[ \xymatrix{
& \mathcal{M}_1 \ar[dl]_{\phi_4} \ar[dr]^{\phi_3} &\\
\mathcal{M}_2 \ar[dr]_{\phi_2} && \mathcal{M}_4 \ar[dl]^{\phi_1}\\
& \mathcal{M}_3& \\
} \]
where
\begin{align*}
\phi_1 : [-\pi, \pi] & \to \left[ 0, 1/2 \right]\\
\theta & \mapsto  | \theta |/2 \pi \\
\phi_2 : [0, 1] & \to \left[ 0, 1/2 \right]\\
x & \mapsto x &\text{if } x \leq 1/2 \\
x & \mapsto 1-x & \text{otherwise}\\
\phi_3 : \mathbb{R} & \to \left[ -\pi , \pi \right]\\
x & \mapsto 2\pi |x - y| &\text{where } y \in \mathbb{Z} \text{ such that } |x -y| \leq 1/2\\
\phi_4 : \mathbb{R} & \to \left[ 0, 1 \right]\\
x & \mapsto x - 2n & \text{if } \exists n \in \mathbb{Z} ~~ 2n \leq x < 2n+1\\
x & \mapsto 2n+2 - x & \text{if } \exists n \in \mathbb{Z} ~~ 2n+1 \leq x < 2n+2
\end{align*}
Note that the condition in the definition of $\phi_3$ means that $y$ is the closest integer to $x$.

\begin{prop}
These morphisms are FD-homomorphisms.
\end{prop}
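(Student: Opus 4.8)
The plan is to reduce the two defining clauses of an FD-homomorphism to checks that can be made one arrow at a time, and then to settle the analytic clause uniformly by exhibiting each $\phi_i$ as a local isometry for which the target process is literally the image of the source process. The observation clause $obs = obs' \circ \phi_i$ is pointwise, so I verify it directly. For the trajectory clause I would first record a reduction: the requirement $\mathbb{P}^{\phi_i(x)}(B') = \mathbb{P}^x(\{\omega \mid \phi_i \circ \omega \in B'\})$ says exactly that the pushforward $(\phi_i)_*\mathbb{P}^x$ agrees with $\mathbb{P}^{\phi_i(x)}$ on $\mathcal{G}'$. Since each $\mathbb{P}^x$ is produced by the Kolmogorov extension theorem from the finite-dimensional distributions built from the kernels $P_t$, and since the Markov property writes every such distribution as an iterated integral against the $P_t$, it suffices to establish the one-step intertwining
\[ (\phi_i)_* P_t(x,\cdot) = P'_t(\phi_i(x),\cdot) \qquad \text{for all } t \ge 0,\ x \in E. \]
A change of variables through $\phi_i$ in each coordinate then turns the source finite-dimensional distributions into the target ones, and uniqueness in the extension theorem forces the two trajectory measures to coincide. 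All four processes here are conservative (reflected Brownian motions on compact sets and Brownian motion on the circle never reach $\partial$), so no cemetery-state bookkeeping is needed and all trajectories are continuous.

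For the intertwining itself I would invoke the standard principle that a local isometry intertwines Laplacians: if $\phi_i$ is a local isometry (possibly a quotient by a discrete group of isometries), then $\Delta'(f)\circ\phi_i = \Delta(f\circ\phi_i)$ on a core, hence the Feller--Dynkin semigroups satisfy $\hat{P}'_t f \circ \phi_i = \hat{P}_t(f\circ\phi_i)$ for all $f \in C_0(E')$. Integrating against the kernel representation of the preceding proposition gives $\int f\, dP'_t(\phi_i(x),\cdot) = \int f\circ\phi_i\, dP_t(x,\cdot)$ for every such $f$, which is precisely the displayed identity. Concretely this is the method of images: each map is the quotient of its domain by a discrete group of isometries, so the transition density of the image diffusion is the sum of the domain's Gaussian (or reflected-Gaussian) density over the fiber $\phi_i^{-1}(y)$, and that fiber sum is exactly the classical definition of the reflected-Brownian and circle-Brownian kernels. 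Thus $\phi_4$ folds $\mathbb{R}$ by the infinite dihedral group of reflections at the integers with fundamental domain $[0,1]$, $\phi_2$ folds $[0,1]$ across its midpoint, $\phi_1$ folds the circle across the diameter through the angles $0$ and $\pi$, and $\phi_3$ projects $\mathbb{R}$ onto the circle.

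The observation conditions are then immediate from the explicit formulas: under $\phi_4$ and $\phi_3$ the integers, where $obs_1 = 1$, are exactly the points carried to $\{0,1\}$ (where $obs_2 = 1$) and to the angle $0$ (where $obs_4 = 1$), with every non-integer landing off those sets; symmetrically $\phi_2$ carries $\{0,1\}$ to $0$ and $\phi_1$ carries the angle $0$ to $0$, matching $obs_3$. Continuity of each $\phi_i$, needed so that $\phi_i\circ\omega$ is again a cadlag trajectory and the pullback sets are measurable, is clear from the piecewise-isometric definitions.

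The main obstacle is the fiber-and-scaling bookkeeping for $\phi_3$ onto the circle of radius $\tfrac{1}{2\pi}$. Projecting $\mathbb{R}$ onto the circle replaces the heat kernel by its periodization (a theta-function sum over the orbit $\phi_3^{-1}(y)$), and I must match the arc-length normalization—circumference one—so that the pushforward of standard line Brownian motion is exactly the intended Brownian motion on the circle and not a time-changed variant; once the correct orbit and the circumference-one scaling are pinned down, the intertwining follows as above. The folding maps $\phi_1,\phi_2,\phi_4$ are comparatively routine: there the only point requiring care is that the reflecting boundaries of the target coincide with the fixed-point (fold) locus of the group action, which holds by construction of the state spaces. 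Assembling the observation checks with the intertwining identity via the reduction of the first paragraph then shows each $\phi_i$ is an FD-homomorphism.
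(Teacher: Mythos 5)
Your proposal is correct, and it does considerably more work than the paper does: the paper's proof consists of the same case-by-case verification of the observation conditions that you give, followed by a single sentence declaring the probabilistic condition ``obvious by definition of Brownian motion on these sets.'' Your two-stage reduction is exactly what that sentence is hiding. First, reducing the trajectory condition $(\phi_i)_*\mathbb{P}^x = \mathbb{P}^{\phi_i(x)}$ to the one-step kernel intertwining $(\phi_i)_*P_t(x,\cdot) = P'_t(\phi_i(x),\cdot)$ is legitimate: the intertwining makes $P_t(x_{n-1}, \phi_i^{-1}(A_n))$ a function of $\phi_i(x_{n-1})$ alone, so the iterated integrals defining the finite-dimensional distributions collapse coordinate by coordinate, and uniqueness in the Kolmogorov extension then identifies the two path measures (this is Dynkin's criterion for a function of a Markov process to be Markov). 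Second, establishing the intertwining by folding under a discrete group of isometries (method of images) is the standard and correct way to see that the pushforwards under $\phi_4$, $\phi_2$, $\phi_1$ are the corresponding reflected Brownian motions. So where the two arguments differ, yours buys rigour at the cost of length, while the paper buys brevity at the cost of leaving the only genuinely probabilistic step unproved.

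One caveat, which your method actually exposes: the paper's $\phi_3$ is defined with an absolute value, $\phi_3(x) = 2\pi|x-y|$, whose image is the half-circle $[0,\pi]$ rather than the whole circle. Under that literal definition the pushforward of line Brownian motion is carried by paths confined to $[0,\pi]$, whereas the circle Brownian motion started at $\phi_3(x)$ charges paths entering $(-\pi,0)$; the intertwining, and hence the FD-homomorphism property, fails for the map as written. Your reading of $\phi_3$ as the covering projection $\mathbb{R}\to\mathbb{R}/\mathbb{Z}$ (the signed map $x \mapsto 2\pi(x-y)$, a quotient by integer translations rather than by the infinite dihedral group) is the version for which your periodization argument, and the proposition itself, is true. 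With that correction, and with your observation that the circumference-one normalization makes the covering a local isometry so that no time change appears, your proof is complete.
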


\begin{proof}
Note that all these functions are continuous.

First note that $obs_3 \circ \phi_1 (\theta) = 1$ if and only if
$\phi_1 (\theta) = 0$ (by definition of $obs_3$). By definition of $\phi_1$,
this is $| \theta | / 2 \pi = 0$, i.e. $\theta = 0$.  But this corresponds to the
only case where $obs_4 (\theta) = 1$ (note that what we have proven is an
equivalence). We have therefore proven that $obs_3 \circ \phi_1 = obs_4$.

Second, $obs_3 \circ \phi_2 (x) = 1$ if and only if $\phi_2 (x) = 0$ (by
definition of $obs_3$) if and only if $x = 0$ or $1$ (by definition of
$\phi_2$) if and only if $obs_2 (x) = 1$ (by definition of $obs_2$).

Third, $obs_2 \circ \phi_4 (x) = 1$ if and only if $\phi_4(x) = 0$ or $1$ (by
definition of $obs_2$). Note that $\phi_4 (x) = 0$ if and only if $x = 2n$ for
some $n \in \mathbb{Z}$. Similarly, $\phi_4 (x) = 1$ if and only if $x = 2n+1$
for some $n \in \mathbb{Z}$. This means that $obs_2 \circ \phi_4 (x) = 1$ if,
and only if, $x \in \mathbb{Z}$ if and only if $obs_1(x) = 1$.

Fourth and finally, $obs_4 \circ \phi_3 (x) = 1$ if and only if
$\phi_3 (x) = 0$ (by definition of $obs_4$) if, and only if $|x - y| = 0$ where
$y \in \mathbb{Z}$ such that $|x - y| \leq 1/2$ if and only if
$x \in \mathbb{Z}$ if and only if $obs_1(x) = 1$.

The second condition is obvious by definition of Brownian motion on these sets.
\end{proof}

\subsection{Definition}

\begin{definition}
A \emph{FD-cospan} is a cospan of FD-homomorphisms.
\end{definition}

\begin{thm}
\label{thm:pushout_FDP}
The category with Feller-Dynkin processes as objects and FD-homomorphisms as morphisms has pushouts.
\end{thm}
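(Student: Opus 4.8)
The plan is to construct the pushout of two FD-homomorphisms $f_1 : E_0 \to E_1$ and $f_2 : E_0 \to E_2$ concretely, as a quotient of the disjoint union already built above. Let $\sim$ be the smallest equivalence relation on $E_1 \uplus E_2$ identifying $f_1(a)$ with $f_2(a)$ for every $a \in E_0$ (and identifying the two copies of $\partial$), set $E_3 = (E_1 \uplus E_2)/{\sim}$ with the quotient topology, and let $q_j : E_j \to E_3$ be the evident maps, so that $q_1 \circ f_1 = q_2 \circ f_2$ by construction. First I would equip $E_3$ with the structure of an FD process. The observation map is forced: since $obs_1 \circ f_1 = obs_0 = obs_2 \circ f_2$, setting $obs_3([x]) = obs_j(x)$ for any representative $x \in E_j$ is well defined, and it respects $\partial$.

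The heart of the construction is the trajectory measure. A continuous $q_j$ sends a cadlag path $\omega$ in $E_j$ to the cadlag path $q_j \circ \omega$ in $E_3$ (continuity preserves right-continuity and left limits, and $\partial$ is absorbing on both sides), so it induces a measurable map $\bar q_j : \Omega_j \to \Omega_3$. I would then define $\mathbb{P}^{[x]}_3 = (\bar q_j)_* \mathbb{P}^x_j$ for a representative $x \in E_j$. Well-definedness is the first place the hypotheses are used: if $x = f_1(a)$ and $x' = f_2(a)$ are identified, then for any measurable $B \subseteq \Omega_3$ the FD-homomorphism property of $f_1$ gives $\mathbb{P}^{f_1(a)}_1(\bar q_1^{-1} B) = \mathbb{P}^a_0(\{\omega : q_1 \circ f_1 \circ \omega \in B\})$, and the same for $f_2$ produces $\mathbb{P}^a_0(\{\omega : q_2 \circ f_2 \circ \omega \in B\})$; these agree because $q_1 \circ f_1 = q_2 \circ f_2$, and the general case of a $\sim$-chain follows by iterating this identity through the alternating links. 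From these measures one recovers the kernels $P^3_t$ and, via the proposition relating an FDS to its family of sub-Markov kernels, a semigroup $\hat P^3_t$ on $C_0(E_3)$; by construction each $q_j$ then satisfies both clauses of the definition of an FD-homomorphism.

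With the object and its two structure maps in hand, I would verify the universal property. Given any FD process $E'$ and FD-homomorphisms $h_1 : E_1 \to E'$, $h_2 : E_2 \to E'$ with $h_1 \circ f_1 = h_2 \circ f_2$, the topological quotient yields a unique continuous $u : E_3 \to E'$ with $u \circ q_j = h_j$. It remains to check that $u$ is an FD-homomorphism: the observation condition $obs' \circ u = obs_3$ follows from $obs' \circ h_j = obs_j$ by testing on representatives, and the measure condition follows because on each representative $x \in E_j$ one has $\mathbb{P}^{u([x])}_{E'} = \mathbb{P}^{h_j(x)}_{E'} = (\bar h_j)_* \mathbb{P}^x_j = (\bar u)_* (\bar q_j)_* \mathbb{P}^x_j = (\bar u)_* \mathbb{P}^{[x]}_3$, using that $h_j = u \circ q_j$ is an FD-homomorphism. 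Uniqueness of $u$ is inherited from the quotient.

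I expect the main obstacle to be analytic rather than diagrammatic: showing that $E_3$ remains a locally compact Hausdorff space with countable base and, relatedly, that the pushed-forward family genuinely assembles into a Feller-Dynkin semigroup — strong continuity and preservation of $C_0(E_3)$ can fail if the quotient map $q_j$ is not proper, since $f \mapsto f \circ q_j$ need not carry $C_0(E_3)$ into $C_0(E_j)$ otherwise. I would therefore isolate, as a preliminary lemma on the quotient topology, that $\sim$ is a closed relation with the $q_j$ proper (or restrict to a subclass of FD-homomorphisms guaranteeing this) before assembling the probabilistic data.
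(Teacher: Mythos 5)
Your construction is essentially the paper's own proof: form the quotient of the disjoint union by the smallest equivalence identifying the two images of each point of the apex, give it the quotient topology (so that it is the pushout in $\mathbf{Set}$ and $\mathbf{Top}$), define $obs$ and the trajectory measures by choosing representatives and pushing forward along the quotient maps, verify well-definedness from the FD-homomorphism identities, and obtain the mediating morphism from the underlying topological pushout, checking afterwards that it is an FD-homomorphism. Two places where you are more careful than the paper deserve mention. First, you handle well-definedness along arbitrary chains of the generated equivalence, whereas the paper asserts that $\phi_1(z_1)=\phi_3(z_3)$ always produces a \emph{single} witness $z_2$ with $h(z_2)=z_1$ and $g(z_2)=z_3$; for the equivalence generated by these identifications this can require an alternating chain of witnesses, exactly as you say. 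Second, the analytic obstacle you isolate at the end is genuine and is not resolved by the paper either: after transporting the measures, the paper simply declares that the result is a Feller-Dynkin process ``since both $\hat{P}^1$ and $\hat{P}^3$ are,'' with no verification that the quotient space is locally compact Hausdorff with countable base, nor that the induced semigroup is strongly continuous and maps $C_0$ into $C_0$ --- precisely the properties that can fail when the quotient maps are not proper. So your proposal follows the paper's route, and the preliminary lemma you call for (closedness of the relation and properness of the quotient maps, or a restriction on the admissible class of FD-homomorphisms) would strengthen the argument at its weakest point.
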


\begin{proof}
There are two inclusions $i_1 : E_1 \to E_1 \uplus E_3$ and $i_3 : E_3 \to E_1
\uplus E_3$. Define the equivalence relation $\sim$ on $E_1 \uplus E_3$ as the
smallest equivalence such that for all $z \in E_2$, $i_1 \circ h (z) = i_3 \circ
g(z)$. Define $E_4 = E_1 \uplus E_3 / \sim$ with its corresponding quotient
$\pi_\sim : E_1 \uplus E_3 \to E_4$ and the two maps $\phi_3 = \pi_\sim \circ
i_3$ and $\phi_1 = \pi_\sim \circ i_1$. Note that this corresponds to the
pushout in $Set$. 
\[ \xymatrix{
E_2 \ar[r]^g \ar[d]_h & E_3 \ar[d]^{i_3} \ar@/^1em/[ddr]^{\phi_3}\\
E_1 \ar[r]_{i_1} \ar@/_1em/[drr]_{\phi_1} & E_1 \uplus E_3 \ar[dr]^{\pi_\sim}\\
&& E_4
} \]

We equip this set with the smallest topology that makes $\pi_\sim$ continuous
(where the topology on $E_1 \uplus E_3$ is the topology inherited from the
inclusions). Note that this corresponds to the pushout in $Top$. 

We define $obs_4$ as such:
\[ obs_4 (x_4) = \begin{cases}
obs_1 (x_1) \qquad \text{if } x_4 = \phi_1 (x_1)\\
obs_3 (x_3) \qquad \text{if } x_4 = \phi_3 (x_3)
\end{cases} \]
In order to prove that this is well-defined, we have to prove that if $\phi_1
(x_1) = \phi_3 (x_3)$, then $obs_1 (x_1) = obs_3 (x_3)$. Assume $\phi_1 (x_1) =
\phi_3 (x_3)$, this means that there exists $x_2 \in E_2$ such that $h(x_2) =
x_1$ and $g(x_2) = x_3$, and as $g$ and $h$ are FD-homomorphisms, we have that 
\[ obs_1 (x_1) = obs_1 h (x_2) = obs_2(x_2) = obs_3 g(x_2) = obs_3 (x_3) \]

Finally, let us define the Feller-Dynkin Process. Let $\Omega_4$ be the set of
R-paths on $E_4$. It is equipped with a $\sigma$-algebra $\mathcal{G}_4$ defined
in the standard way for Feller-Dynkin processes. Consider $B_4 \in
\mathcal{G}_4$. We define 
\[ \mathbb{P}^x_4 (B_4) = \begin{cases}
\mathbb{P}^z_1 (B_1) \qquad \text{if } x = \phi_1 (z)\\
\mathbb{P}^z_3 (B_3) \qquad \text{if } x = \phi_3 (z)
\end{cases} \]

where $B_1 = \{ \omega \in \Omega_1 ~|~ \phi_1 \circ \omega \in B_4\}$ and $B_3
= \{ \omega \in \Omega_3 ~|~ \phi_3 \circ \omega \in B_4\}$. In order to prove
that this is well-defined, we have to prove first that $B_1 \in \mathcal{G}_1$
and $B_3 \in \mathcal{G}_3$ and second that if $\phi_1(z_1) = \phi_3(z_3) = x$,
then $\mathbb{P}^{z_1}_1 (B_1) = \mathbb{P}^{z_3}_3 (B_3)$. 

Let us start with the measurability of $B_1$ ($B_3$ works exactly in the same
way). Let us write $X^1$ and $X^4$ for the random variables associated to the
FDP on $E_1$ and $E_4$. Recall that  
\[\mathcal{G}_i = \sigma (X^i_s ~|~ 0 \leq s < \infty) = \sigma(\{ (X^i_s) ^{-1}(C) ~|~ 0 \leq s < \infty, ~ C \in \mathcal{E}_i \})\]
We are going to prove by induction on the structure of $B_4$ that $B_1 \in \mathcal{G}_1$.
\begin{itemize}
\item First, if $B_4 = (X_s^4)^{-1} (C) = \{ \omega \in \Omega_4 ~|~ \omega(s)
  \in C\}$ for $C \in \mathcal{E}_4$, then 
\begin{align*}
B_1
& = \{ \omega \in \Omega_1 ~|~ \phi_1 \circ \omega \in B_4\}\\
& = \{ \omega \in \Omega_1 ~|~ \phi_1 \circ \omega (s) \in C\}\\
& = \{ \omega \in \Omega_1 ~|~ \omega (s) \in \phi_1^{-1}(C)\}\\
& = (X_s^1)^{-1} (\phi_1^{-1}(C))
\end{align*}
And since $\phi_1$ is continuous, we know that
$\phi_1^{-1}(C) \in \mathcal{E}_1$ and hence $B_1 \in \mathcal{G}_1$.
\item If $B_4 = A_4^C$ with $A_4 \in \mathcal{G}_4$ and $A_1 = \{ \omega \in \Omega_1 ~|~ \phi_1 \circ \omega \in A_4\} \in \mathcal{G}_1$, then
\begin{align*}
B_1
& = \{ \omega \in \Omega_1 ~|~ \phi_1 \circ \omega \in B_4\}\\
& = \{ \omega \in \Omega_1 ~|~ \phi_1 \circ \omega \notin A_4\}\\
& = \Omega_1 \setminus \{ \omega \in \Omega_1 ~|~ \phi_1 \circ \omega \in A_4\}\\
& = \Omega_1 \setminus A_1
\end{align*}
And since $A_1 \in \mathcal{G}_1$, we get that $B_1 \in \mathcal{G}_1$.
\item If $B_4 = A_1 \cup A_2 \cup ...$ where for every $i \in \mathbb{N}$, $A_i \in \mathcal{G}_4$ and $A'_i = \{ \omega \in \Omega_1 ~|~ \phi_1 \circ \omega \in A_i \} \in \mathcal{G_1}$, then
\begin{align*}
B_1
& = \{ \omega \in \Omega_1 ~|~ \phi_1 \circ \omega \in B_4\}\\
& = \{ \omega \in \Omega_1 ~|~ \phi_1 \circ \omega \in A_1 \cup A_2 \cup ...\}\\
& = \bigcup_{n \in \mathbb{N}} \{ \omega \in \Omega_1 ~|~ \phi_1 \circ \omega \in A_i\}\\
& = \bigcup_{n \in \mathbb{N}} A'_i
\end{align*}
And since $A'_i \in \mathcal{G}_1$ for every $i \in \mathbb{N}$, we get that $B_1 \in \mathcal{G}_1$.
\end{itemize}
This proves that $B_1 \in \mathcal{G}_1$.

First note that if $\phi_1(z_1) = \phi_3(z_3)$, then there exists $z_2 \in E_2$ such that $z_1 = h(z_2)$ and $z_3 = g(z_2)$. Since $h$ and $g$ are FD-homomorphisms, we get that
\begin{align*}
\mathbb{P}^{z_1}_1 (B_1)
& = \mathbb{P}^{z_2}_2 \left( \{ \omega \in \Omega_2 ~|~ h \circ \omega \in B_1 \}\right) \text{ as $h$ is a FD-homomorphism}\\
& = \mathbb{P}^{z_2}_2 \left( \{ \omega \in \Omega_2 ~|~ \phi_1 \circ h \circ \omega \in B_4 \}\right) \text{ by definition of $B_1$}\\
& = \mathbb{P}^{z_2}_2 \left( \{ \omega \in \Omega_2 ~|~ \phi_3 \circ g \circ \omega \in B_4 \}\right) \text{ as }\phi_1 \circ h = \phi_3 \circ g\\
& = \mathbb{P}^{z_2}_2 \left( \{ \omega \in \Omega_2 ~|~ g \circ \omega \in B_3 \}\right) \text{ by definition of $B_3$}\\
& = \mathbb{P}^{z_3}_3 (B_3) \text{ as $g$ is a FD-homomorphism}
\end{align*}
This indeed defines a probability distribution on $(\Omega_4, \mathcal{G}_4)$ (this is a direct consequence of the fact that $\mathbb{P}^{z_1}$ and $\mathbb{P}^{z_3}$ are probability distributions).

Let us now check that this is indeed a Feller-Dynkin Process. We define first the corresponding kernel for $t \geq 0$, $x \in E_4$ and $C \in \mathcal{E}_4$,
\[ P_t^4 (x, C) = \mathbb{P}^x_4 (\{ \omega ~|~ \omega (t) \in C\}) \]
Note that if $x = \phi_1(z)$, $P_t^4 (x, C) =P_t^1 (z, \phi_1^{-1}(C))$ (and with the corresponding equality if $x = \phi_3(z)$).
We can now define the operator
\[ \hat{P}_t^4 f (x) = \int_{y \in E_4} f(y) P_t^4 (x, dy) \]
Using the corresponding equality for the kernel and a change of measure, we get that if $x = \phi_1(z)$, then $\hat{P}_t^4 f (x) = \hat{P}_t^1 (f \circ \phi_1)(z)$ (and with the corresponding equality if $x = \phi_3(z)$). Now it is obvious that the process defined is indeed a Feller-Dynkin Process since both $\hat{P}^1$ and $\hat{P}^3$ are.

By construction, $\phi_1$ and $\phi_3$ are indeed FD-homomorphisms.

Let us now check the universal property. Assume there is a Feller-Dynkin process $(E_5, \mathcal{E}_5, (P_t^5)_t)$ and two FD-homomorphisms $\psi_1 : E_1 \to E_5$ and $\psi_3 : E_3 \to E_5$ such that $\psi_1 \circ h = \psi_3 \circ g$. Since $E_4$ is defined as the corresponding pushout in $Set$, there exists a morphism $\gamma : E_4 \to E_5$ in $Set$ such that $\psi_j = \gamma \circ \phi_j$ (for $j = 1,2$). Let us show that it is a FD-homomorphism.

Since $(E_4, \phi_1, \phi_3)$ is the pushout in $Top$, we know that $\gamma$ is continuous.

First, we want to show that $obs_4 = obs_5 \circ \gamma$. Let $x_4 \in E_4$, there are two cases: $x_4 = \phi_1 (x_1)$ or $x_4 = \phi_3 (x_3)$. Consider the first case $x_4 = \phi_1 (x_1)$ (the other case is similar).
\begin{align*}
obs_5 \circ \gamma (x_4)
& = obs_5 \circ \gamma \circ \phi_1 (x_1)\\
& = obs_5 \circ \psi_1 (x_1)\\
& = obs_1 (x_1) \qquad \text{ since $\psi_1$ is a FD-homomorphism}\\
& = obs_4 (\phi_1 (x_1)) \qquad \text{ since $\phi_1$ is a FD-homomorphism}\\
& = obs_4 (x_4)
\end{align*}

Let us now show that for $x_4 \in E_4$ and $B_5 \in \mathcal{G}_5$, $\mathbb{P}_5^{\gamma (x_4)} (B_5) = \mathbb{P}_4^{x_4} (B_4)$ with $B_4 = \{ \omega \in \Omega_4 ~|~ \gamma \circ \omega \in B_5 \}$. Consider the first case $x_4 = \phi_1 (x_1)$ (the other case $x_4 = \phi_3 (x_3)$ is similar). First, as $\psi_1$ is a FD-homomorphism,
\begin{align*}
\mathbb{P}_5^{\gamma (x_4)} (B_5)
& = \mathbb{P}_5^{\gamma (\phi_1 (x_1))} (B_5) = \mathbb{P}_5^{\psi_1 (x_1)} (B_5)\\
& = \mathbb{P}_1^{x_1} (B_1) \quad \text{ where } B_1 = \{ \omega \in \Omega_1 ~|~ \psi_1 \circ \omega \in B_5\}
\end{align*}
Second, as $\phi_1$ is a FD-homomorphism,
\[ \mathbb{P}_4^{x_4} (B_4) = \mathbb{P}_4^{\phi_1(x_1)} (B_4) = P_1^{x_1}(B'_1) \]
where $B'_1 = \{ \omega \in \Omega_1 ~|~ \phi_1 \circ \omega \in B_4\}$.
Finally, we have that
\begin{align*}
B'_1 & = \{ \omega \in \Omega_1 ~|~ \phi_1 \circ \omega \in B_4\}\\
& = \{ \omega \in \Omega_1 ~|~ \gamma \circ \phi_1 \circ \omega \in B_5\} = B_1
\end{align*}
which concludes the proof.
\end{proof}

This proves that FD-cospan corresponds to an equivalence relation. We have
already showed how FD-homomorphisms (and hence FD-cospans) yield
bisimulations. Let us now show that the converse is also true. 

\begin{thm}
For all bisimulations $R$, there exists $(f,g)$ a FD-cospan such that
\begin{itemize}
\item for all $x \in E_1, y \in E_2$, $x~R~y$ if and only if $f(x) = g(y)$,
\item for all $x,x' \in E_1$, $x ~R~ x'$ if and only if $f(x) = f(x')$, and
\item for all $y,y' \in E_2$, $y ~R~ y'$ if and only if $g(y) = g(y')$.
\end{itemize}
\end{thm}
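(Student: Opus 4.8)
The plan is to realise the cospan as a \emph{quotient}, mirroring the pushout construction of Theorem~\ref{thm:pushout_FDP}. Working on the disjoint union $E_1 \uplus E_2$ on which $R$ lives, I would set $E_3 = (E_1 \uplus E_2)/R$ with quotient map $\pi : E_1 \uplus E_2 \to E_3$, and define $f = \pi \circ i_1$ and $g = \pi \circ i_2$. With this choice the three biconditionals are immediate and purely set-theoretic: for $x \in E_1$ and $y \in E_2$ we have $f(x) = g(y)$ iff $[x]_R = [y]_R$ iff $x \mathrel{R} y$, and likewise $f(x) = f(x')$ iff $x \mathrel{R} x'$ and $g(y) = g(y')$ iff $y \mathrel{R} y'$. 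So the real work lies not in these equivalences but in equipping $E_3$ with the structure of a Feller--Dynkin process that makes $f$ and $g$ into FD-homomorphisms.

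I would build that structure exactly as in the pushout proof. Topologise $E_3$ with the quotient topology (the finest making $\pi$ continuous); define $obs_3([x]) = obs(x)$, which is well defined because the initiation condition~\ref{init1d} gives $obs(x) = obs(y)$ whenever $x \mathrel{R} y$; and for $B_3$ in the canonical $\sigma$-algebra $\mathcal{G}_3$ of trajectories on $(E_3)_\partial$ set
\[ \mathbb{P}_3^{[x]}(B_3) = \mathbb{P}^x\bigl(\{\,\omega \mid \pi \circ \omega \in B_3\,\}\bigr). \]
Two facts make this legitimate, and both reuse arguments already in the paper. First, measurability of the preimage $\{\omega \mid \pi \circ \omega \in B_3\}$ follows by the same induction on the generators of $\mathcal{G}_3$ used in Theorem~\ref{thm:pushout_FDP}, with $\pi$ in place of $\phi_1$. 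Second, and crucially, the preimage is $R$-\emph{closed}: if $\pi \circ \omega \in B_3$ and $\omega(t) \mathrel{R} \omega'(t)$ for all $t \geq 0$, then $\pi \circ \omega = \pi \circ \omega'$, so $\omega'$ lies in the preimage as well. Hence the induction condition~\ref{ind2d} forces $\mathbb{P}^x$ and $\mathbb{P}^y$ to agree on this set whenever $x \mathrel{R} y$, which is precisely the well-definedness of $\mathbb{P}_3^{[x]}$ independently of the representative.

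From these measures I would recover the kernels and the semigroup as in the pushout argument, setting $P_t^3([x], C) = \mathbb{P}_3^{[x]}(\{\omega \mid \omega(t) \in C\})$ and $\hat{P}_t^3 h([x]) = \int h \, dP_t^3([x], \cdot)$, and then check that $f$ and $g$ are FD-homomorphisms. The observation clause holds by definition of $obs_3$, and the pushforward clause holds because, for $x \in E_1$, the disjoint-union definition of $\mathbb{P}^{i_1 x}$ together with $f = \pi \circ i_1$ gives directly
\[ \mathbb{P}_3^{f(x)}(B_3) = \mathbb{P}^{i_1 x}(\{\omega \mid \pi \circ \omega \in B_3\}) = \mathbb{P}_1^x(\{\omega \in \Omega_1 \mid f \circ \omega \in B_3\}), \]
and symmetrically for $g$.

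The step I expect to be the genuine obstacle is analytic rather than combinatorial: verifying that $E_3$, carrying the quotient topology, is again a locally compact Hausdorff space with countable base, and that $\hat{P}_t^3$ really is a Feller--Dynkin semigroup on $C_0(E_3)$ --- that it maps $C_0(E_3)$ into itself, is strongly continuous, and preserves $0 \leq h \leq 1$. Quotients do not preserve these topological properties automatically, so this is where some hypothesis on $R$ (for instance that it be a closed equivalence relation) or a separate argument is needed; the semigroup properties themselves can then be pulled back along the relations $\hat{P}_t^3(h) \circ f = \hat{P}_t^1(h \circ f)$ inherited from $E_1$ and $E_2$, just as in Theorem~\ref{thm:pushout_FDP}.
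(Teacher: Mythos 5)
Your proposal follows essentially the same route as the paper's own proof: quotient $(E_1 \uplus E_2)/R$, maps $f = \pi \circ i_1$ and $g = \pi \circ i_2$, $obs$ well-defined via the initiation condition, and $\mathbb{P}^{[x]}$ defined by pulling back along $\pi$ and shown well-defined because trajectory preimages are $R$-closed, so the induction condition applies. The analytic obstacle you flag at the end (whether the quotient is locally compact Hausdorff with countable base and whether the induced semigroup is genuinely Feller--Dynkin) is real, but it is equally unaddressed in the paper's proof, which simply asserts the quotient carries a Feller--Dynkin structure.
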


\begin{proof}
Define the set $E = (E_1 \uplus E_2) / R$. There are two inclusions $i_1 : E_1 \to E_1 \uplus E_2$, $i_2 : E_2 \to E_1 \uplus E_2$ and a quotient $\pi_R : E_1 \uplus E_2 \to E$. We define $f = \pi_R \circ i_1 : E_1 \to E$ and $g = \pi_R \circ i_2 : E_2 \to E$.
We can equip the set $E$ with the smallest topology that makes all those maps continuous.

Let us now clarify what the FDP is on that state space. First, we define for $x \in E$
\[ obs(x) = \begin{cases}
obs_1 (x_1) \qquad \text{if } x = f(x_1)\\
obs_2(x_2) \qquad \text{if } x = g(x_2)
\end{cases} \]
This is indeed well-defined as whenever $y R z$ for $y \in E_i$ and $z \in E_j$, then $obs_i(y) = obs_j(z)$.

Let us denote $\Omega$ the set of trajectories on $E_\partial$. Its $\sigma$-algebra $\mathcal{G}$ is defined as usual for FD processes. 
We define the following family of
probabilities on the set $\Omega$: for $x \in E$, $B \in \mathcal{G}$, 
\[ \mathbb{P}^x(B) = \begin{cases}
\mathbb{P}_1^{x_1}(B_1) \qquad \text{if } x = f(x_1) \text{ with } B_1 = \{
\omega \in \Omega_1 ~|~ f \circ \omega \in B\}\\ 
\mathbb{P}_2^{x_2}(B_2) \qquad \text{if } x = g(x_2) \text{ with } B_2 = \{
\omega \in \Omega_2 ~|~ g \circ \omega \in B\} 
\end{cases} \]
This is well-defined. Indeed, let us show that the sets $B_1$ and $B_2$ satisfy the condition of \ref{ind2d}.
\begin{itemize}
\item Take $\omega_1 \in B_1$, i.e. $f \circ \omega_1 \in B$ and $\omega_2 \in \Omega_2$. If for all time $t \geq 0$, $\omega_1 (t) ~R~ \omega_2(t)$, this means that $\forall t \geq 0 ~ f \circ \omega_1(t) = g \circ \omega_2(t)$, i.e. $f \circ \omega_1 = g \circ \omega_2$. Since $f \circ \omega_1 \in B$, we get that $g \circ \omega_2 \in B$ and therefore $\omega_2 \in B_2$.
\item Take $\omega_1 \in B_1$ and $\omega'_1 \in \Omega_1$. If $\forall t \geq 0 ~ (\omega_1(t) ~R~ \omega'_1(t))$, then $\forall t \geq 0 ~ f \circ \omega_1(t) = f \circ \omega'_1(t)$, i.e. $f \circ \omega_1 = f \circ \omega'_1$ and therefore $\omega'_1 \in B_1$.
\item Similarly we get the other conditions.
\end{itemize}

\begin{itemize}
\item Assume $f(x_1) = f(y_1)$. This means that $x_1 ~R~ y_1$ and applying condition \ref{ind2d}, $\mathbb{P}_1^{x_1} (B_1) = \mathbb{P}_1^{y_1} (B_1)$.
\item Assume $f(x_1) = g(x_2)$. This means that $x_1 ~R~ x_2$ and applying condition \ref{ind2d}, $\mathbb{P}_1^{x_1} (B_1) = \mathbb{P}_2^{x_2} (B_2)$.
\item Similarly we get the other conditions.
\end{itemize}

The Markov kernel of the FD process is defined as such: for $x \in E$, $C \in \mathcal{E}$ and $t \geq 0$
\[ P_t(x, C) = \begin{cases}
P^1_t(x_1, f^{-1}(C)) \qquad \text{if } x = f(x_1)\\
P^2_t(x_2, g^{-1}(C)) \qquad \text{if } x = g(x_2)
\end{cases} \]

By construction,
\begin{itemize}
\item for all $x \in E_1, y \in E_2$, $x~R~y$ if and only if $f(x) = g(y)$,
\item for all $x,x' \in E_1$, $x ~R~ x'$ if and only if $f(x) = f(x')$, and
\item for all $y,y' \in E_2$, $y ~R~ y'$ if and only if $g(y) = g(y')$.
\end{itemize}
and it is easy to see that we have proven that $f$ and $g$ are FD-homomorphisms when checking that $\mathbb{P}^x$ is well-defined.
\end{proof}

We thus have the following correspondance between bisimulation and FD-cospans:

\begin{theorem}
Two states $x$ and $y$ are bisimilar if and only if there exists a FD-cospan $(f,g)$ such that $f(x) = g(y)$.
\end{theorem}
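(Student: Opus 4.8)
The plan is to read this statement as the conjunction of the two preceding theorems, one supplying each implication, so that almost no new work is required beyond assembling what has already been established.

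For the forward direction, suppose $x$ and $y$ are bisimilar. By definition there is a bisimulation $R$ (on the disjoint union of the two state spaces) with $x \mathrel{R} y$. I would then apply the immediately preceding theorem, which associates to any bisimulation $R$ an FD-cospan $(f,g)$ satisfying $u \mathrel{R} v \iff f(u) = g(v)$ for $u \in E_1$, $v \in E_2$. Instantiating this equivalence at the pair $(x,y)$ and using $x \mathrel{R} y$ yields $f(x) = g(y)$, which is exactly what is wanted.

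For the converse, suppose an FD-cospan $(f,g)$ with $f \colon E_1 \to E$, $g \colon E_2 \to E$ and $f(x) = g(y)$ is given; I must exhibit a bisimulation relating $x$ and $y$. The natural candidate is the kernel of the copairing $h = [f,g] \colon E_1 \uplus E_2 \to E$, that is, the relation $u \sim v \iff h(u) = h(v)$. I would first check that $h$ is itself an FD-homomorphism out of the disjoint-union process: the observation condition $obs = obs_E \circ h$ is inherited component-wise from $f$ and $g$, and the trajectory condition reduces, on each component, to the corresponding condition for $f$ and for $g$. Once $h$ is known to be an FD-homomorphism, the earlier corollary (the kernel of an FD-homomorphism is a bisimulation) shows that $\sim$ is a bisimulation; since $h(x) = f(x) = g(y) = h(y)$ we obtain $x \sim y$, so $x$ and $y$ are bisimilar.

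The only genuinely delicate point is the verification that $h$ satisfies the trajectory condition, because an abstract trajectory in $E_1 \uplus E_2$ may switch between the two components. This is resolved by the defining property of the disjoint-union process recorded earlier, namely that $\mathbb{P}^u$ assigns probability zero to the set of trajectories visiting both $i_1(E_1)$ and $i_2(E_2)$; hence $\mathbb{P}^u$ is carried by trajectories lying entirely in one component, on which $h \circ \omega$ coincides with $f \circ \omega$ or with $g \circ \omega$, and the condition for $h$ then follows from that for $f$ and $g$. Should one prefer to avoid reasoning about the copairing, an alternative is to apply the earlier proposition to $f$ and to $g$ separately, obtaining two bisimulations whose union has a transitive closure that is again a bisimulation by Proposition~\ref{prop:transitive-union-bisimulation}; the chain $x \mathrel{R_f} f(x) = g(y) \mathrel{R_g} y$ then relates $x$ and $y$.
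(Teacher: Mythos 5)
Your proposal is correct and matches the paper's (implicit) argument: the paper states this theorem without proof, as a summary of the two results immediately preceding it, namely the theorem constructing an FD-cospan from any bisimulation (your forward direction) and the proposition/corollary that FD-homomorphisms induce bisimulations (your converse). The only content you add beyond what the paper makes explicit is the glue for the converse --- either the copairing $h = [f,g]$ on the disjoint-union process or the transitive-closure chain via Proposition~\ref{prop:transitive-union-bisimulation} --- and both versions check out against the paper's definition of the disjoint-union process, whose measures $\mathbb{P}^u$ are carried by single-component trajectories exactly as you argue.
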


\section{Comparison to discrete time bisimulation}
\label{sec:DT}

The goal of this work is to extend the notion of bisimulation that exists in discrete time to a continuous-time setting. Therefore an important question is the following: do we get back the definition of bisimulation that existed in discrete time when we restrict Feller-Dynkin processes to (some kind of) discrete-time processes?

Given an LMP $(X, \Sigma, \tau, (\chi_A)_{A \in AP})$, we can always view it as a FD process on $(E, \mathcal{E})$ with
$ E = X \times [0, 1)$ and $\mathcal{E} = \Sigma \times \mathcal{B}([0, 1))$ by
adding to the space the following kernel: for all $x \in X$ and $C \in \Sigma$,
$t \geq 0$ and $s \in [0,1)$,
$P_t ((x,s), C) = \tau_{\lfloor t + s\rfloor} (x, C')$ where
$C' = \{ z ~|~ (z,t+s - \lfloor t+s \rfloor) \in C\}$ and for $k \geq 1$,
\begin{align*}
\tau_0 (x, C) & = \delta _C(x)\\
\tau_1(x,C) & = \tau (x,C)\\
\tau_{k + 1}(x,C) & = \int _{y \in X} \tau(x,dy) \tau_k(y ,C)
\end{align*}
We also define $(obs(x,s))_i = \chi_{A_i}(x)$ (where $AP = \{ A_1, A_2, ...\}$).


Let us recall the definition of bisimulation in the discrete time setting.

\begin{definition}
Given an LMP $(X, \Sigma, \tau, (\chi_A)_{A \in AP})$, a \emph{DT-bisimulation}
$R$ is an equivalence relation on $X$ such that if $x Ry$, then 
\begin{itemize}
\item for all $A \in AP$, $\chi_A(x) = \chi_A(y)$
\item for all $R$-closed set $B \in \Sigma$, $\tau(x, B) = \tau(y, B)$.
\end{itemize}
\end{definition}

\begin{lemma}
\label{lemma:DTbisim_nsteps}
Consider a DT-bisimulation $R$. If $x R y$, then for all $n \geq 1$, for all
$R$-closed set $A_1, ..., A_n$,
\[ \int_{x_1 \in A_1} ... \int_{x_n \in A_n} \tau (x, dx_1) \tau (x_1, dx_2)... \tau(x_{n-1}, dx_n) =
\int_{x_1 \in A_1} ... \int_{x_n \in A_n} \tau (y, dx_1) \tau (x_1, dx_2)... \tau(x_{n-1}, dx_n) \]
\end{lemma}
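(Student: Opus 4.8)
The plan is to argue by induction on $n$, peeling off the outermost integration at each step so that the whole identity reduces to comparing a single integral of an $R$-invariant function against the kernels $\tau(x,\cdot)$ and $\tau(y,\cdot)$. Write $I_n(x;A_1,\ldots,A_n)$ for the iterated integral in the statement. The base case $n=1$ is immediate, since the identity then reads $\tau(x,A_1)=\tau(y,A_1)$, which is exactly the second clause of the definition of DT-bisimulation applied to the $R$-closed set $A_1$.

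For the inductive step, isolate the first integration:
\[ I_n(x;A_1,\ldots,A_n) = \int_{x_1\in X} \chi_{A_1}(x_1)\, I_{n-1}(x_1;A_2,\ldots,A_n)\, \tau(x,dx_1). \]
By the induction hypothesis the map $x_1\mapsto I_{n-1}(x_1;A_2,\ldots,A_n)$ takes equal values on $R$-related points, and $\chi_{A_1}$ is $R$-invariant because $A_1$ is $R$-closed; hence the integrand $h(x_1)=\chi_{A_1}(x_1)\,I_{n-1}(x_1;A_2,\ldots,A_n)$ is bounded, nonnegative, measurable, and constant on $R$-equivalence classes. (Measurability of $I_{n-1}$ in its starting state is itself a routine secondary induction, using that $x\mapsto\int f\,d\tau(x,\cdot)$ is measurable whenever $f$ is bounded and measurable.)

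The technical core, which I would isolate as a sublemma, is that for any bounded nonnegative measurable $R$-invariant $h$ and any $x\,R\,y$ one has $\int h\,d\tau(x,\cdot)=\int h\,d\tau(y,\cdot)$. This follows by measure-theoretic approximation: for $h=\chi_B$ with $B$ an $R$-closed measurable set it is precisely the bisimulation hypothesis; it extends to $R$-invariant simple functions by linearity; and it passes to general $h$ by monotone convergence, using that the level sets $\{h\ge c\}$ are simultaneously measurable (by measurability of $h$) and $R$-closed (by $R$-invariance), so that $h$ is an increasing limit of simple functions built from $R$-closed measurable sets. Applying the sublemma to the integrand $h$ above gives $I_n(x;A_1,\ldots,A_n)=I_n(y;A_1,\ldots,A_n)$ and closes the induction.

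I expect the only real obstacle to be the bookkeeping in the sublemma rather than anything conceptually deep: one must check that $R$-invariance genuinely transfers to the approximating simple functions, i.e.\ that each level set $h^{-1}([c,\infty))$ is a union of $R$-classes, and that the required measurability statements are in hand. Once these points are secured, the remainder is purely formal.
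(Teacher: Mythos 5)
Your proof is correct, but it takes a genuinely different route from the paper's. You induct on $n$, peeling off the outermost integral, and reduce everything to a sublemma: $\int h\,d\tau(x,\cdot)=\int h\,d\tau(y,\cdot)$ for any bounded nonnegative measurable $R$-invariant $h$, proved by the standard approximation chain (indicators of $R$-closed measurable sets, then $R$-invariant simple functions, then monotone convergence); your observation that the level sets $\{h\ge c\}$ are simultaneously measurable and $R$-closed is exactly what makes this work, and the measurability of $x_1\mapsto I_{n-1}(x_1;A_2,\ldots,A_n)$ is indeed the routine kernel-composition fact you cite. The paper instead argues globally through the quotient: it forms $\pi_R:X\to X/R$, notes that the kernel descends to a well-defined $\overline{\tau}(\pi_R(x),A)=\tau(x,\pi_R^{-1}(A))$ precisely because $R$ is a DT-bisimulation, and then rewrites the whole iterated integral by successive changes of variables (from the innermost integral outward) as an integral over $A_1/R,\ldots,A_n/R$ against the initial measure $\tau(x,\pi_R^{-1}(\cdot))$ and copies of $\overline{\tau}$; the conclusion then follows because $\tau(x,\pi_R^{-1}(\cdot))$ and $\tau(y,\pi_R^{-1}(\cdot))$ coincide. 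The two arguments carry the same mathematical content --- an $R$-invariant function is one that factors through $\pi_R$, and the paper's iterated change of variables is your induction in disguise --- but yours stays entirely on the original space and makes the measure-theoretic approximation explicit, avoiding any discussion of the quotient's measurable structure, whereas the paper's quotient-kernel formulation is more compact and exhibits directly that the process \emph{lumps} to a well-defined process on $X/R$, at the price of leaving the measurability bookkeeping on the quotient implicit.
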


\begin{proof}

Let us denote $\pi_R : X \to X/R$ the quotient. We can also define some function
\[ \overline{\tau} (\pi_R(x), A) = \tau (x, \pi_R^{-1}(A)) \]
Note that the choice of $x$ does not change the right term since $R$ is a DT-bisimulation and $\pi_R^{-1}(A))$ is an $R$-closed set. A sequence of change of variables yields:
\begin{align*}
& \int_{x_1 \in A_1} ... \int_{x_n \in A_n} \tau (x, dx_1) \tau (x_1, dx_2)... \tau(x_{n-1}, dx_n)\\
& = \int_{x_1 \in A_1} ... \int_{x_{n-1} \in A_{n-1}} \int_{y_n \in A_n/R} \tau (x, dx_1) \tau (x_1, dx_2)... \tau(x_{n-2}, dx_{n-1}) \tau(x_{n-1}, \pi_R^{-1}(dy_n))\\
& = \int_{x_1 \in A_1} ... \int_{x_{n-1} \in A_{n-1}} \int_{y_n \in A_n/R} \tau (x, dx_1) \tau (x_1, dx_2)... \tau(x_{n-2}, dx_{n-1}) \overline{\tau}(\pi_R(x_{n-1}), dy_n)\\
& = \int_{x_1 \in A_1} ... \int_{x_{n-2} \in A_{n-2}} \int_{y_{n-1} \in A_{n-1}/R} \int_{y_n \in A_n/R} \tau (x, dx_1) \tau (x_1, dx_2)... \tau(x_{n-2}, \pi_R^{-1}(dy_{n-1})) \overline{\tau}(y_{n-1}, dy_n)\\
& = \int_{x_1 \in A_1} ... \int_{x_{n-2} \in A_{n-2}} \int_{y_{n-1} \in A_{n-1}/R} \int_{y_n \in A_n/R} \tau (x, dx_1) \tau (x_1, dx_2)... \overline{\tau}(\pi_R(x_{n-2}), dy_{n-1}) \overline{\tau}(y_{n-1}, dy_n)\\
& = \int_{y_1 \in A_1/R} ... \int_{y_n \in A_n/R} \tau (x, \pi_R^{-1}(dy_1)) \overline{\tau} (y_1, dy_2)... \overline{\tau} (y_{n-1}, dy_n)
\end{align*}
And since the two measures $\tau (x, \pi_R^{-1}(\bullet))$ and $\tau (y, \pi_R^{-1}(\bullet))$ are equal as $R$ is a DT-bisimulation, this concludes the proof
\end{proof}

\begin{proposition}
If the equivalence $R$ is a DT-bisimulation, then the relation $R'$ defined as
\[ R' = \{ \left( (x,s), (y,s) \right) ~|~ s \in [0,1), x~R~y \} \]
is a bisimulation.
\end{proposition}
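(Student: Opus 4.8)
The plan is to check the two defining clauses of a continuous-time bisimulation for $R'$. That $R'$ is an equivalence relation is immediate from the fact that $R$ is one and that the second coordinate is matched exactly. For the initiation clause \ref{init1}, if $(x,s)\,R'\,(y,s)$ then $x\,R\,y$, so $\chi_{A_i}(x)=\chi_{A_i}(y)$ for every $i$ by the first clause of the DT-bisimulation, whence $obs(x,s)=obs(y,s)$ by the definition of the embedded $obs$. All the real work is in the induction clause \ref{ind2}: for every $R'$-closed $B\in\mathcal{G}$ I must show $\mathbb{P}^{(x,s)}(B)=\mathbb{P}^{(y,s)}(B)$, and the strategy is to reduce this to finite-dimensional distributions and then invoke Lemma~\ref{lemma:DTbisim_nsteps}.

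First I would record the shape of the embedded dynamics: started at $(x,s)$, the second coordinate of a trajectory is the deterministic sawtooth $t\mapsto \{s+t\}$ (fractional part), while the first coordinate performs a discrete-time $\tau$-chain that jumps exactly at the times with $s+t\in\mathbb{N}$. Concretely, for observation times $0\le t_1<\dots<t_k$ and $m_j:=\lfloor s+t_j\rfloor$, a short computation with the formula $P_t((a,u),C)=\tau_{\lfloor t+u\rfloor}(a,C')$ and the identity $\lfloor \{s+t_{j-1}\}+(t_j-t_{j-1})\rfloor=m_j-m_{j-1}$ yields
\[ \mathbb{P}^{(x,s)}\bigl(\omega(t_1)\in C_1,\dots,\omega(t_k)\in C_k\bigr)=\int_{A^{(1)}}\!\!\cdots\!\int_{A^{(k)}}\tau_{m_1}(x,da_1)\,\tau_{m_2-m_1}(a_1,da_2)\cdots\tau_{m_k-m_{k-1}}(a_{k-1},da_k), \]
where $A^{(j)}=\{a\in X\mid (a,\{s+t_j\})\in C_j\}$ is the $\{s+t_j\}$-slice of $C_j$.

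Now suppose the cylinder is $R'$-closed, i.e.\ each $C_j$ is $R'$-closed. Because $(a,u)\,R'\,(b,u)$ holds precisely when $a\,R\,b$, each slice $A^{(j)}$ is an $R$-closed subset of $X$. Expanding each multi-step kernel $\tau_{m_j-m_{j-1}}$ into a product of single $\tau$-steps (inserting the $R$-closed set $X$ at the non-observation steps) rewrites the right-hand side as an $m_k$-fold integral of exactly the form appearing in Lemma~\ref{lemma:DTbisim_nsteps}, with all integration sets $R$-closed and the chain started at $x$. Since $x\,R\,y$, that lemma gives the same value when $x$ is replaced by $y$; the degenerate case $m_k=0$ (no jump before $t_k$) is handled separately, since then both sides merely test membership of $x$, resp.\ $y$, in $R$-closed slices, and $x\,R\,y$ forces these to agree. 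Hence $\mathbb{P}^{(x,s)}$ and $\mathbb{P}^{(y,s)}$ coincide on every $R'$-closed cylinder.

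To pass from cylinders to an arbitrary $R'$-closed $B\in\mathcal{G}$, I would run a Dynkin $\pi$--$\lambda$ argument: the $R'$-closed cylinders form a $\pi$-system (a finite intersection of $R'$-closed cylinders is again one, since the intersection of $R'$-closed sets is $R'$-closed), and the family of measurable sets on which the two measures agree is a $\lambda$-system, so the measures coincide on the $\sigma$-algebra generated by the $R'$-closed cylinders. The hard part will be precisely this last step: one must argue that \emph{every} $R'$-closed set in $\mathcal{G}$ lies in the $\sigma$-algebra generated by the $R'$-closed cylinders (equivalently, that the $R'$-saturated measurable sets are generated by the $R$-closed cylinders). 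This is where the standing second-countability hypotheses on the state spaces enter, through the facts that $\mathcal{G}=\sigma(X_t:t\ge0)$ is generated by the coordinate evaluations and that the sub-$\sigma$-algebra of $R$-closed sets is countably generated; without such a hypothesis the saturated $\sigma$-algebra can be strictly larger than the one generated by $R'$-closed cylinders, and this generation issue is the only delicate point in the proof.
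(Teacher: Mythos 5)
Your overall route is the same as the paper's: check the equivalence and initiation clauses directly, then reduce condition \ref{ind2} to finite-dimensional computations handled by Lemma~\ref{lemma:DTbisim_nsteps} (your sawtooth description of the embedded dynamics, the slice argument, and the insertion of the $R$-closed set $X$ at non-observation steps all match the paper's use of that lemma). The genuine gap is that you never account for the cemetery state $\partial$. In an LMP the kernel $\tau$ is only a sub-probability kernel, so the embedded FD process is built from the extended kernels $P_1^{+\partial}$ and puts positive mass on trajectories that die. Measurable $R'$-closed sets whose time-slices contain $\partial$ genuinely arise --- for instance $B=\{\omega \mid \omega(1)=\partial\}$, the set of trajectories dead by time $1$, is measurable and $R'$-closed --- and for such sets your displayed finite-dimensional formula is wrong: the slice $A^{(1)}=\{a\in X \mid (a,s)\in \{\partial\}\}$ is empty, so your formula returns $0$, whereas $\mathbb{P}^{(x,s)}(B)=1-\tau(x,X)$, which is positive in general. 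The agreement of these dead-mass terms for $R$-related starting states is not automatic from your cylinder computation; it holds because $X$ itself is $R$-closed, and establishing it is exactly the case analysis (splitting $B_n$ according to whether and when the trajectory hits $\partial$) that occupies most of the paper's proof. Without that supplementary analysis your $\pi$--$\lambda$ argument only yields agreement on cylinders whose constraint sets avoid $\partial$, which does not exhaust the $R'$-closed sets, so the conclusion does not follow as written.

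On your final step --- passing from $R'$-closed cylinders to arbitrary $R'$-closed sets in $\mathcal{G}$ --- you are right that this is the delicate point, and you leave it unresolved. In fairness, the paper does not resolve it either: it simply asserts that every measurable $R'$-closed set has the form $\{\omega \mid \forall i\in\mathbb{N},\ \omega(i)\in A_i\}$ with each $A_i$ $R'$-closed, and then takes limits of the truncations $B_n$; as literally stated that assertion is too strong (a finite union of two such sets is measurable and $R'$-closed but not of this form), so on this point your write-up and the paper are comparably incomplete, and your honest flagging of the issue is if anything more careful. The part you must actually add is the treatment of $\partial$.
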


\begin{proof}
First note that the relation $R'$ is indeed an equivalence since $R$ is one.

Assume $(x,s) R' (y,s)$.

Let us show that $obs(x,s) = obs(y,s)$. This is a direct consequence of the fact that $obs_i (x,s) = \chi_{A_i}(x)$ (and similarly for $y$) and since $x~R~y$, $\chi_{A_i}(x) = \chi_{A_i}(y)$.


Finally, we want to prove that $\mathbb{P}^{(x,s)} (B) = \mathbb{P}^{(y,s)} (B)$ for any measurable $R'$-closed set $B$.

The measurable $R'$-closed set $B$ is of the form $\{ \omega ~|~ \forall i \in \mathbb{N}~ \omega (i) \in A_i \}$ where $A_i \in \mathcal{E}_\partial$ is an $R'$-closed set.

Let us denote for all $n \in \mathbb{N}$, $B_n = \{ \omega ~|~ \forall i \leq n ~ \omega (i) \in A_i \}$ and let us show that $\mathbb{P}^{(x,s)} (B_n) = \mathbb{P}^{(y,s)} (B_n)$. To prove that, we can also assume that $\forall (x,t) \in A_i$, $t = s$. We denote $A'_i = \{ z ~|~ (z,s) \in A_i \}$.

Assume $\partial \in A_k$ with $k \leq n$. We can write explicitly:
\begin{align*}
\mathbb{P}^{(x,s)} (B_n)
& = \int_{x_0 \in A_0}... \int_{x_n \in A_n} \delta_{(x,s)}(dx_0) P_1^{+ \partial}(x_0, dx_1) ... P_1^{+ \partial} (x_{n-1}, dx_n)\\
& = \int_{x_0 \in A_0}... \left( \int_{x_k \in A_k \setminus \{ \partial\}}... \int_{x_n \in A_n} \delta_{(x,s)}(dx_0) P_1^{+ \partial}(x_0, dx_1) ... P_1^{+ \partial} (x_{n-1}, dx_n) \right.\\
& \qquad \left. + \int_{x_k \in \{ \partial\}}... \int_{x_n \in A_n} \delta_{(x,s)}(dx_0) P_1^{+ \partial}(x_0, dx_1) ... P_1^{+ \partial} (x_{n-1}, dx_n) \right)\\
& = \mathbb{P}^{(x,s)} (\{ \omega ~|~ \forall j \leq n ~ \omega (j) \in A_j \text{ and } \omega (k) \neq \partial \})\\
& \qquad + \int_{x_0 \in A_0}... \int_{x_k \in \{ \partial\}}... \int_{x_n \in A_n} \delta_{(x,s)}(dx_0) P_1^{+ \partial}(x_0, dx_1) ... P_1^{+ \partial} (x_{n-1}, dx_n) \\
& = \mathbb{P}^{(x,s)} (\{ \omega ~|~ \forall j \leq n ~ \omega (j) \in A_j \text{ and } \omega (k) \neq \partial \})\\
& \qquad + \int_{x_0 \in A_0}... \int_{x_{k-1} \in A_{k-1}} \delta_{(x,s)}(dx_0) P_1^{+ \partial}(x_0, dx_1) ... P_1^{+\partial} (x_{k-1}, \{ \partial\}) \delta_{A_{k+1}} (\partial)... \delta_{A_n} (\partial) \\
& = \mathbb{P}^{(x,s)} (\{ \omega ~|~ \forall j \leq n ~ \omega (j) \in A_j \text{ and } \omega (k) \neq \partial \})\\
& \qquad + \mathbb{P}^{(x,s)} (\{ \omega ~|~ \forall j <k ~ \omega (j) \in A_j \text{ and } \omega (k) = \partial \}) \delta_{A_{k+1}} (\partial)... \delta_{A_n} (\partial)
\end{align*}

For that reason, we can deal with the two following cases and conclude in full generality that for all $n \in \mathbb{N}$, $\mathbb{P}^{(x,s)}(B_n) = \mathbb{P}^{(y,s)}(B_n)$:
\begin{itemize}
\item First, if for all $i \leq n$, $\partial \notin A_i$, then we have that $P_1^{+ \partial} ((z,s), A_i) = \tau (z, A'_i)$ and in this case we have that
\begin{align*}
\mathbb{P}^{(x,s)} (B_n)
& = \int_{x_0 \in A_0}... \int_{x_n \in A_n} \delta_{(x,s)}(dx_0) P_1^{+ \partial}(x_0, dx_1) ... P_1^{+ \partial} (x_{n-1}, dx_n)\\
& = \int_{x'_1 \in A'_1}... \int_{x'_n \in A'_n} \delta_{A'_0}(x) \tau(x, dx'_1) ... \tau (x'_{n-1}, dx'_n)\\
& = \int_{x'_1 \in A'_1}... \int_{x'_n \in A'_n} \delta_{A'_0}(y) \tau(y, dx'_1) ... \tau (x'_{n-1}, dx'_n)\\
& = \mathbb{P}^{(y,s)} (B_n)
\end{align*}
using lemma \ref{lemma:DTbisim_nsteps} and since $A'_0$ is $R$-closed.
\item Second, if $A_n = \{ \partial \}$ and for all $i < n$, $\partial \notin A_i$, then
\begin{align*}
\mathbb{P}^{(x,s)} (B_n)
& = \int_{x_0 \in A_0}... \int_{x_{n-1} \in A_{n-1}} \delta_{(x,s)}(dx_0) P_1^{+ \partial}(x_0, dx_1) ... P_1^{+ \partial} (x_{n-2}, dx_{n-1}) P_1^{+ \partial} (x_{n-1}, \{\partial\})\\
& = \int_{x'_1 \in A'_1}... \int_{x'_{n-1} \in A'_{n-1}} \delta_{A'_0}(x) \tau(x, dx'_1) ... \tau (x'_{n-2}, dx'_{n-1}) (1 - \tau(x'_{n-1}, X))\\
& = \int_{x'_1 \in A'_1}... \int_{x'_{n-1} \in A'_{n-1}} \delta_{A'_0}(x) \tau(x, dx'_1) ... \tau (x'_{n-2}, dx'_{n-1})\\
& \qquad - \int_{x'_1 \in A'_1}... \int_{x'_{n-1} \in A'_{n-1}} \int_{x'_n \in X} \delta_{A'_0}(x) \tau(x, dx'_1) ... \tau (x'_{n-2}, dx'_{n-1}) \tau(x'_{n-1}, dx'_n))\\
& = \int_{x'_1 \in A'_1}... \int_{x'_{n-1} \in A'_{n-1}} \delta_{A'_0}(y) \tau(y, dx'_1) ... \tau (x'_{n-2}, dx'_{n-1})\\
& \qquad - \int_{x'_1 \in A'_1}... \int_{x'_{n-1} \in A'_{n-1}} \int_{x'_n \in X} \delta_{A'_0}(y) \tau(y, dx'_1) ... \tau (x'_{n-2}, dx'_{n-1}) \tau(x'_{n-1}, dx'_n))\\
& = \mathbb{P}^{(y,s)} (B_n)
\end{align*}
using lemma \ref{lemma:DTbisim_nsteps} and since $A'_0$ is $R$-closed.
\end{itemize}

Moreover, since $\mathbb{P}^{(x,s)}(B) = \lim_{n \to \infty} \mathbb{P}^{(x,s)}(B_n)$ (and similarly for $y$), we get the desired result.
\end{proof}

\begin{definition}
An equivalence $R$ on the state space of an LMP viewed as a FD process is \emph{time-coherent} if for all $x,y$ in the state space of the LMP and for all $0 \leq t < 1$,
\[ (x,t) R (y,t) \Rightarrow \forall s \in [ 0,1 ) ~ (x,s) R (y,s) \]
Given any equivalence $R$ on the state space of an LMP viewed as a FD process, we define its \emph{time-coherent closure} (denoted $time(R)$) as the smallest time-coherent equivalence containing $R$.
\end{definition}

\begin{prop}
If $R$ is a bisimulation on an LMP viewed as a FD process, then so is $time(R)$.
\end{prop}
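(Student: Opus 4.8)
The plan is to verify the two defining conditions for $time(R)$ directly, reducing everything to the fact that $R$ is already a bisimulation together with a single time-translation symmetry of the embedded process. Throughout I write a state as $(x,s)$ with $x\in X$ and clock $s\in[0,1)$: under the embedding the $X$-coordinate is piecewise constant, jumping according to $\tau$ exactly at the times $k-s$ ($k\geq 1$), the clock coordinate evolves deterministically as $u\mapsto \mathrm{frac}(s+u)$, and $obs$ ignores the clock. Recall that $time(R)$ is an equivalence by construction, and that since $R\subseteq time(R)$ every $time(R)$-closed set is automatically $R$-closed. Any pair in $time(R)$ is obtained from $R$ by a finite alternation of $R$-steps, equivalence closure, and time-coherence propagation.

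First I would dispose of the initiation condition \ref{init1}. Along each $R$-step the two states share the same $obs$ (as $R$ is a bisimulation), and since $obs$ depends only on the $X$-coordinate, both propagation to another clock value and transitivity preserve equality of $obs$; hence $time(R)$-related states carry the same observation.

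For the induction condition \ref{ind2}, fix a $time(R)$-closed set $B$ and a pair $(x,s)\,time(R)\,(y,s')$. By the decomposition into generators and the chaining argument of Proposition \ref{prop:transitive-union-bisimulation}, it suffices to treat two kinds of steps against this fixed $B$. For a step coming directly from $R$, say $(x,s)\,R\,(y,s')$, the set $B$ is $R$-closed and $R$ is a bisimulation, so $\mathbb{P}^{(x,s)}(B)=\mathbb{P}^{(y,s')}(B)$ with nothing further to prove. The substantive case is a time-coherence step: we are given a same-clock relation $(x,t)\,R\,(y,t)$ and must propagate it to $(x,s)$ and $(y,s)$ for an arbitrary clock $s$. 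Here I would introduce the clock-reset map $\Phi:\Omega\to\Omega$ that relabels the initial clock from $t$ to $s$ while keeping the discrete skeleton (the visited sequence of $X$-states); concretely it time-shifts the $X$-coordinate by $s-t$ and sets the clock coordinate to $\mathrm{frac}(s+\cdot)$. Two facts are needed: (Lemma A) $\Phi$ pushes $\mathbb{P}^{(z,t)}$ forward to $\mathbb{P}^{(z,s)}$ for every starting point $z$, since both are the law of the same $\tau$-chain embedded with the respective clock and $\Phi$ does not depend on $z$; and (Lemma B, the key lemma) if $B$ is $time(R)$-closed then so is $\Phi^{-1}(B)$. Granting these, Lemma A gives $\mathbb{P}^{(x,s)}(B)=\mathbb{P}^{(x,t)}\big(\Phi^{-1}(B)\big)$ and $\mathbb{P}^{(y,s)}(B)=\mathbb{P}^{(y,t)}\big(\Phi^{-1}(B)\big)$, while Lemma B together with $time(R)$-closed $\Rightarrow R$-closed and $(x,t)\,R\,(y,t)$ makes the two right-hand sides equal; this closes the propagation step and hence the whole proof.

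The main obstacle is Lemma B. Unfolding $time(R)$-closedness, I would take $\omega$ with $\Phi(\omega)\in B$ and any $\omega'$ with $\omega(u)\,time(R)\,\omega'(u)$ for all $u$, and aim to show $\Phi(\omega')\in B$; since $B$ is $time(R)$-closed it is enough that $\Phi(\omega)(u)\,time(R)\,\Phi(\omega')(u)$ for every $u$. Both images carry the identical deterministic clock $\mathrm{frac}(s+u)$ and have their $X$-coordinates shifted by the same amount, so this reduces to transferring the relation between the two $X$-coordinates from the clock they originally carried to the clock $\mathrm{frac}(s+u)$ — which is exactly the content of time-coherence of $time(R)$. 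The delicate point, and where care is genuinely required, is that a priori the pointwise relation $\omega(u)\,time(R)\,\omega'(u)$ might relate states bearing different clock values, whereas time-coherence is phrased only for equal-clock pairs; the crux is therefore to show that along the trajectories in play the relevant relations are effectively equal-clock, so that time-coherence applies. This is where the deterministic evolution of the clock coordinate and the fact that $\partial$ is related only to $\partial$ are used. The remaining bookkeeping — measurability of $\Phi$ and of the sets involved, and the treatment of the absorbing state for sub-probability kernels — is routine.
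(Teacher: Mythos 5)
Your proposal is correct, and while its overall skeleton matches the paper's (reduce a $time(R)$-pair to a finite chain of same-clock $R$-pairs, each propagated to the common clock $s$; handle initiation via the clock-independence of $obs$), your treatment of the crucial propagation step is genuinely different. The paper argues structurally: it asserts that every $time(R)$-closed measurable set $B$ of trajectories has the form $\{\omega \mid \forall i\in\mathbb{N},\ \omega(i)\in A_i\}$ with $time(R)$-closed sections $A_i$, and that for such $B$ the quantity $\mathbb{P}^{(z,\cdot)}(B)$ is independent of the starting clock; it then only needs $R$-closedness of $B$ at the clock $t_i$ where the $R$-pair lives. You instead transport the problem by an explicit clock-reset map $\Phi$ on trajectory space, your Lemma A (the pushforward identity $\Phi_*\mathbb{P}^{(z,t)}=\mathbb{P}^{(z,s)}$) replacing the paper's clock-invariance claim and your Lemma B ($\Phi^{-1}$ preserves $time(R)$-closedness) replacing its structural claim. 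Your route buys robustness: taken literally, the paper's argument does not cover clock-sensitive sets such as $B=\{\omega \mid \omega(0)\in X\times\{r\}\}$ for a fixed clock value $r$, which is measurable and $time(R)$-closed (precisely because $time(R)$ relates only equal-clock states) yet has $\mathbb{P}^{(z,s)}(B)$ depending on $s$, so that either the claimed product form or the claimed clock-invariance fails for it; your argument handles such $B$ automatically, since $\Phi^{-1}(B)$ degenerates to the whole space or the empty set. Two points deserve emphasis in a full write-up. First, the ``delicate point'' you flag in Lemma B is resolved exactly by the fact that $time(R)$ relates only equal-clock states (hence the pointwise relation between $\omega$ and $\omega'$ is automatically same-clock and time-coherence applies); this fact is supplied by the paper's explicit description of $time(R)$ as the transitive closure of the propagated same-clock pairs, which you are implicitly using and should cite or reprove. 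Second, your decision to have Lemma B preserve $time(R)$-closedness rather than mere $R$-closedness is the right one: it makes the per-link argument stable under propagation applied to derived pairs (not only raw $R$-pairs), which a fully rigorous generation-by-stages induction requires. The remaining bookkeeping (measurability of $\Phi$, truncating the time shift at $0$ when $s<t$, and the handling of $\partial$) is indeed routine.
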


\begin{proof}
Let us first start by clarifying what the equivalence $time(R)$ is.

Define the relation $Q = \{ ((x,s),(y,s)) ~|~ \exists t ~ (x,t) R (y,t) \}$, it is reflexive and symmetric. Let us consider its transitive closure $tc(Q)$. The relation $tc(Q)$ is an equivalence. Moreover, it contains the equivalence $R$ and it is is time-coherent.

Let $R'$ be a time-coherent equivalence containing $R$. We now want to show that $tc(Q) \subset R'$. Let us consider $((x,s),(y,s)) \in tc(Q)$ This means that there exists $n \in \mathbb{N}$, $(x_i)_{i = 0,.., n}$ and $(t_i)_{i = 0, ..., n-1}$ such that $x_0 =x$, $x_n = y$ and $(x_i, t_i) R (x_{i+1}, t_i)$ for all $0 \leq i \leq n-1$. Since $R \subset R'$, for all $i$, $(x_i, t_i) R' (x_{i+1}, t_i)$. Since $R'$ is time-coherent, for all $i$, $(x_i, s) R' (x_{i+1}, s)$. By transitivity of $R'$, $(x_0,s)R' (x_n,s)$, i.e. $(x,s) R'(y,s)$. This proves that $tc(Q)$ is a subset of all time-coherent equivalences containing $R$. Moreover, since $tc(Q)$ is itself a time-coherent equivalence containing $R$, we get that $tc(Q)$ is the smallest such equivalence, i.e. $tc(Q) = time(R)$.

There now remains to prove that $time(R)$ is a bisimulation.
Consider $((x,s),(y,s)) \in time(R) $, this means that there exists $n \in \mathbb{N}$, $(x_i)_{i = 0,.., n}$ and $(t_i)_{i = 0, ..., n-1}$ such that $x_0 =x$, $x_n = y$ and $(x_i, t_i) R (x_{i+1}, t_i)$ for all $0 \leq i \leq n-1$.

First, note that this means that for all $0 \leq i \leq n-1$, $obs(x_i, t_i) = obs(x_{i+1}, t_i)$ since $R$ is a bisimulation. And since $(obs(x_i, t))_j = \chi_{A_j} (x_i)$ for all time $t$ (and similarly for $x_{i+1}$), we get that $obs(x_i, s) = obs(x_{i+1}, s)$ for all $0 \leq i \leq n-1$ and in particular,  $obs(x, s) = obs(y, s)$.


Let $B$ be a $time(R)$-closed set.
We want to show that for all $i$, $\mathbb{P}^{(x_i, s)}(B) = \mathbb{P}^{(x_{i+1}, s)}(B)$. As we are working on an LMP viewed as a FD process, a $time(R)$-closed set is of the form
\[ B = \{ \omega ~|~ \forall i \in \mathbb{N} ~ \omega(i) \in A_i \} \]
where $A_i$ is a $time(R)$-closed subset of the state space $A_i$.
In this case, $\mathbb{P}^{(z, s)}(B) = \mathbb{P}^{(z, t_i)}(B)$ for any $z \in X$. Moreover, the sets $A_j$ are also $R$-closed (since $R \subset time(R)$) and hence the set $B$ is $R$-closed. This means that $ \mathbb{P}^{(x_i, t_i)}(B) =  \mathbb{P}^{(x_{i+1}, t_i)}(B)$ which proves our point and therefore that $\mathbb{P}^{(x, s)}(B) = \mathbb{P}^{(y, s)}(B)$
\end{proof}

\begin{thm}
If the equivalence $R$ is a time-coherent bisimulation, then the relation $R'$ defined as
\[ R' = \{ (x,y) ~|~ \exists t \in [0,1) \text{ such that } \left( (x,t),
    (y,t) \right) \in R\} \] 
is a DT-bisimulation.
\end{thm}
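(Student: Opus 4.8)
The plan is to verify directly that $R'$ is an equivalence relation satisfying the two clauses of a DT-bisimulation, transporting each requirement across the encoding $E = X \times [0,1)$ by reading the continuous-time law $\mathbb{P}$ at phase $0$ and at integer times. First I would check that $R'$ is an equivalence on $X$. Reflexivity and symmetry are inherited from $R$; for transitivity, if $x R' y$ and $y R' z$ then $(x,t_1) R (y,t_1)$ and $(y,t_2) R (z,t_2)$ for some $t_1,t_2$, and here time-coherence is essential: it promotes both pairs to a common phase, $(x,0) R (y,0)$ and $(y,0) R (z,0)$, so transitivity of $R$ gives $(x,0) R (z,0)$, i.e.\ $x R' z$. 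The same remark shows throughout that $x R' y$ is equivalent to $(x,0) R (y,0)$, so I may always work at phase $0$.

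The observation clause is then immediate. If $x R' y$ then $(x,0) R (y,0)$, and since $R$ is a bisimulation its initiation condition \ref{init1} gives $obs(x,0) = obs(y,0)$. Because $obs(x,s)_i = \chi_{A_i}(x)$ does not depend on the phase, this unwinds to $\chi_A(x) = \chi_A(y)$ for every $A \in AP$, which is exactly the first clause of a DT-bisimulation.

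The kernel clause is the heart of the argument. Fix $x R' y$ and an $R'$-closed set $B \in \Sigma$; I want $\tau(x,B) = \tau(y,B)$. The idea is to lift $B$ to the trajectory event $\hat B = \{\omega : \omega(1) \in B \times [0,1)\} = X_1^{-1}(B \times [0,1)) \in \mathcal{G}$ and to recover $\tau$ from the one-step kernel at phase $0$: since $P_t((x,0),C) = \tau_{\lfloor t\rfloor}(x,\{z : (z, t-\lfloor t\rfloor)\in C\})$, taking $t=1$ and $C = B\times[0,1)$ yields $\mathbb{P}^{(x,0)}(\hat B) = P_1((x,0), B\times[0,1)) = \tau_1(x,B) = \tau(x,B)$, and likewise $\mathbb{P}^{(y,0)}(\hat B) = \tau(y,B)$. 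As $(x,0) R (y,0)$, applying the induction condition \ref{ind2} to the $R$-closed set $\hat B$ gives $\mathbb{P}^{(x,0)}(\hat B) = \mathbb{P}^{(y,0)}(\hat B)$, i.e.\ $\tau(x,B) = \tau(y,B)$, as required. Note that only a single step is needed here, so Lemma~\ref{lemma:DTbisim_nsteps} plays no role in this direction.

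The one genuine obstacle is to justify that $\hat B$ is $R$-closed in $\mathcal{G}$, which reduces to showing that the cylinder base $B \times [0,1)$ is $R$-closed in $E$; equivalently, that $(z,s) R (z',s')$ with $z \in B$ forces $z' \in B$. When $s = s'$ this is immediate, since $(z,s) R (z',s)$ gives $z R' z'$ and $B$ is $R'$-closed. The delicate point is excluding cross-phase relations $s \neq s'$ linking a state of $B$ with a state outside $B$, and I expect this to be the main technical work. I would rule it out by a Markov/shift argument: for $u < \min(1-s,1-s')$ the dynamics is deterministic on the phase coordinate, so the time-shift $\theta_u^{-1}$ carries $R$-closed events to $R$-closed events and $(z,s) R (z',s')$ forces $\mathbb{P}^{(z,s+u)}$ and $\mathbb{P}^{(z',s'+u)}$ to agree on all $R$-closed sets; pushing $u$ up to the first jump shows that the post-jump branchings of $z$ and of $z'$ into $R'$-classes coincide, so $z$ and $z'$ are DT-indistinguishable and in particular cannot be separated by the $R'$-closed set $B$. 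With this cross-phase coherence in hand, $\hat B$ is $R$-closed and the argument above closes; everything else is bookkeeping through the encoding.
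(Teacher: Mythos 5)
Your route is structurally the same as the paper's: $R'$ is an equivalence by time-coherence, the observation clause comes from the initiation condition, and the kernel clause is obtained by lifting the $R'$-closed set $B$ to a one-step cylinder of trajectories and invoking the induction condition (the paper lifts to $B\times\{t\}$ and uses the kernel form \ref{ind1}; you lift to $B\times[0,1)$ and use \ref{ind2} directly, and your computation $\mathbb{P}^{(x,0)}(\hat B)=P_1((x,0),B\times[0,1))=\tau(x,B)$ is correct). You have also put your finger on the genuine crux, which the paper's own proof passes over in silence: the lifted set is $R$-closed only if no cross-phase pair $(z,s)\,R\,(z',s')$ with $s\neq s'$ joins $B\times[0,1)$ to its complement, whereas the paper simply asserts $R$-closedness of its set after checking same-phase pairs only.

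However, your proposed way of closing that gap does not work. You want to conclude that cross-phase related states ``are DT-indistinguishable and in particular cannot be separated by the $R'$-closed set $B$''; the second half is a non sequitur, because $R'$-closed sets are unions of $R'$-classes and $R'$ is not (and need not be) the largest DT-bisimulation, so behaviourally indistinguishable states can lie in distinct $R'$-classes and be separated by an $R'$-closed set. What your argument actually needs is $(z,s)\,R\,(z',s')\Rightarrow z\,R'\,z'$, and this is false. Take $X=\{a,b\}$, $\tau(a,\cdot)=\delta_b$, $\tau(b,\cdot)=\delta_a$, no atomic propositions; the encoded FD process is unit-speed rotation on a circle of circumference $2$ (identify $(a,s)$ with $s$ and $(b,s)$ with $1+s$). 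For irrational $\delta$, let $R$ relate two states iff their circle points differ by an element of $\delta\mathbb{Z}$ modulo $2$. This is an equivalence; it is a bisimulation (the process is deterministic and the difference of related points is invariant under the flow, so the two Dirac trajectories are pointwise $R$-related and no $R$-closed set separates them); and it is time-coherent vacuously, since $k\delta$ is never a nonzero integer mod $2$, i.e.\ its same-phase part is the diagonal. Hence $R'$ is the identity and $B=\{a\}$ is $R'$-closed, yet $(a,0.9)\,R\,(b,0.9+\delta-1)$ is a cross-phase pair joining $B\times[0,1)$ to its complement, so your $\hat B$ is not $R$-closed (nor is the paper's $B\times\{t\}$). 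The theorem itself survives in this example ($R'=\mathrm{id}$ is trivially a DT-bisimulation); what fails is the test set, and the repair must exploit the law rather than the relation alone: $\mathbb{P}^{(x,0)}$ and $\mathbb{P}^{(y,0)}$ are concentrated on \emph{synchronized} trajectories whose phase at time $u$ is $u\bmod 1$, two synchronized trajectories can be pointwise $R$-related only through same-phase pairs, and therefore the $R$-saturation of $\hat B\cap \mathrm{Sync}$ is $R$-closed and agrees with $\hat B$ on the supports of both measures, yielding $\tau(x,B)=\tau(y,B)$ -- modulo a measurability argument for that saturation, which is the technical residue that neither your sketch nor the paper addresses.
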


\begin{proof}
First note that $R'$ is indeed an equivalence as the time-coherence of $R$
guarantees that $R'$ is transitive. 

Let us consider $x R' y$, i.e. we have that $(x,t) R (y,t)$ for some $t \in
[0,1)$ (note that it is true in fact for all $t \in [0,1)$ using
time-coherence). 

Let us prove that for any atomic proposition $A$, $\chi_A (x) = \chi_A
(y)$. Define $B_A = \{ \omega ~|~ \omega (0) \in A\}$. The set $B_A$ is
$obs$-closed, which means that $\mathbb{P}^{(x,t)} (B_A) =
\mathbb{P}^{(y,t)} (B_A)$. Furthermore, $\mathbb{P}^{(x,t)} (B_A) =
\chi_A(x)$ (and similarly for $y$) which proves our first point. 

Consider $B'$ an $R'$-closed set. Define $B = \{ (z, t+1) ~|~ z \in
B'\}$. This set $B$ is $R$-closed: consider $(z_1, t+1) R (z_2, t+1)$ and
$(z_1, t+1) \in B$. By definition of $R'$, $z_1 R' z_2$ and by definition
of $B$, $z_1 \in B'$. The set $B'$ is $R'$-closed, and therefore $z_2 \in
B'$ and hence $(z_2, t+1) \in B$. 

Since the set $B$ is $R$-closed, we have that $P_1((x,t), B) = P_1 ((y,t),
B)$. By definition of $P_t$, we get that $\tau(x,B) = \tau(y,B)$ which
concludes the proof that $R'$ is a DT-bisimulation. 
\end{proof}

These results can be summed up in the following theorem relating bisimulation and DT-bisimulation.

\begin{theorem}
Two states $x$ and $y$ (in the LMP) are DT-bisimilar if and only if for all $t \in [0,1)$, the states $(x,t)$ and $(y,t)$ (in the Feller-Dynkin process) are bisimilar.
\end{theorem}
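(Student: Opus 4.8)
The plan is to prove the two implications separately, assembling the four results that immediately precede this theorem. For the forward direction, suppose $x$ and $y$ are DT-bisimilar, so there is a DT-bisimulation $R$ with $x \mathrel{R} y$. I would invoke the earlier proposition asserting that $R' = \{((x,s),(y,s)) \mid s \in [0,1),\ x \mathrel{R} y\}$ is a bisimulation on the LMP viewed as an FD process. Since $x \mathrel{R} y$, we have $(x,t) \mathrel{R'} (y,t)$ for every $t \in [0,1)$, and as $R'$ is a bisimulation this directly witnesses that $(x,t)$ and $(y,t)$ are bisimilar for all $t$. This direction is essentially immediate.

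For the converse, suppose $(x,t)$ and $(y,t)$ are bisimilar for all $t \in [0,1)$. Let $R_{\max}$ denote the greatest bisimulation on the FD process, that is, the ``is bisimilar to'' relation, whose existence and maximality were established earlier. The key observation I would establish is that $R_{\max}$ is \emph{time-coherent}. Indeed, by the proposition on the time-coherent closure, $time(R_{\max})$ is again a bisimulation; since $time(R_{\max})$ contains $R_{\max}$ and $R_{\max}$ is the greatest bisimulation, we must have $time(R_{\max}) \subseteq R_{\max}$, whence $time(R_{\max}) = R_{\max}$ and $R_{\max}$ is time-coherent.

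With $R_{\max}$ now known to be a time-coherent bisimulation, I would apply the preceding theorem to conclude that $R'_{\max} = \{(x,y) \mid \exists t \in [0,1):\ ((x,t),(y,t)) \in R_{\max}\}$ is a DT-bisimulation. The standing hypothesis gives $(x,t) \mathrel{R_{\max}} (y,t)$ for every $t$ (in particular for some $t$), so $(x,y) \in R'_{\max}$, and hence $x$ and $y$ are related by a DT-bisimulation, i.e.\ they are DT-bisimilar, completing the proof.

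I expect the only subtle point, and thus the main obstacle, to be the time-coherence of $R_{\max}$. Rather than attempt to massage an arbitrary witnessing bisimulation into a time-coherent one, which would force one to track how the time-coherent closure interacts with that particular relation, the argument exploits maximality: since closing under time-coherence keeps a bisimulation a bisimulation and cannot enlarge the greatest one, the greatest bisimulation is automatically a fixed point of $time(\cdot)$. Once this is in place, both directions reduce to plugging $R$ and $R_{\max}$ into the two transfer results already proved, so no new computation is required.
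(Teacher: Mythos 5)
Your proposal is correct and follows essentially the same route as the paper, which leaves this summary theorem without an explicit proof precisely because it is meant to be assembled from the two transfer results together with the $time(\cdot)$ closure proposition. Your fixed-point argument --- that $time(R_{\max}) \supseteq R_{\max}$ is again a bisimulation, hence equals $R_{\max}$ by maximality, so the greatest bisimulation is automatically time-coherent --- is exactly the glue the paper's sequence of results is designed to provide, and you have supplied it correctly.
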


\section{Conclusion}

We have given two definitions of bisimulation, one as an equivalence
relation and the other as a cospan of morphisms respecting the dynamics of
the process.  We have also studied many examples; the full version of this
conference submission contains many more examples.  It would be interesting
to know if (and under what conditions) an equivalence satisfying conditions
\ref{init2} and \ref{ind1} is a bisimulation.

However, there are many aspects to explore, as suggested by previous work
on step-based systems.  First of all is a quantitative description of
bisimulation through the definition of some metrics on the state
space. This was done in two ways.
\begin{itemize}
\item Either through the definition of a set of $[0,1]$-valued functions
  that can be viewed as experiments performed on the system
  (see~\cite{Desharnais04,vanBreugel05}). We are hoping that this set could
  be obtained by looking at hitting times and occupation times.
\item Or as a fixed-point of some operators on metrics
  (see~\cite{vanBreugel05}). Such a fixed-point metric was defined on jump
  processes in~\cite{Gupta04,Gupta06}. It seems that some of the details in
  that work are possibly incorrect, so we hope to fix those details and to
  adapt similar ideas to our framework.
\end{itemize}

Another important and interesting question is that of approximations
(see~\cite{Desharnais03,Danos03c,Chaput14}) of our Markov processes.  Here
we will undoubtedly face new subtleties as we will have to cope with both
spatial and temporal limits.

Finally, a fundamental result in this area is the logical characterization
of bisimulation~\cite{Benthem76,Hennessy80} which was also extended to the
probabilistic case~\cite{Desharnais02}.  We hope to be able to provide such
a logic for continuous-time processes based on the set of $[0,1]$-valued
functions used to obtain a bisimulation metric.  A game interpretation of
bisimulation could also be provided~\cite{Fijalkow17}.  Perhaps some
interesting insights could also come from nonstandard
analysis~\cite{Fajardo17} where there is also a notion of equivalence but
one which is quite different from bisimulation.  In that work the notion of
adapted spaces is fundamental.

\section*{Acknowledgements} This research has been supported by a grant
from NSERC.

\bibliographystyle{alpha}
\bibliography{main}
\end{document}